\patchcmd{\ps@pprintTitle}{\footnotesize\itshape
       Preprint submitted to \ifx\@journal\@empty Elsevier
      \else\@journal\fi\hfill\today}{\relax}{}{}
\def\ps@pprintTitle{%
 \let\@oddhead\@empty
 \let\@evenhead\@empty
 \def\@oddfoot{\centerline{\thepage}}%
 \let\@evenfoot\@oddfoot}
\newcommand\Tau{\mathrm{T}}
 \newtheorem{thm}{Theorem}[section]
 \newtheorem{lem}{Lemma}[section]
 \newtheorem{cor}{Corollary}[section]
 \newtheorem{propp}{Property}[section]
 \newtheorem{prop}{Proposition}[section]
 \newdefinition{rmk}{Remark}[section]
  \newdefinition{defe}{Definition}[section]
 \newproof{pf}{Proof}
 \newproof{pot}{Proof of Theorem \ref{thm2}}
\journal{Elsevier}
\begin{document}

\begin{frontmatter}

%% Title, authors and addresses

%% use the tnoteref command within \title for footnotes;
%% use the tnotetext command for theassociated footnote;
%% use the fnref command within \author or \address for footnotes;
%% use the fntext command for theassociated footnote;
%% use the corref command within \author for corresponding author footnotes;
%% use the cortext command for theassociated footnote;
%% use the ead command for the email address,
%% and the form \ead[url] for the home page:
%% \title{Title\tnoteref{label1}}
%% \tnotetext[label1]{}
%% \author{Name\corref{cor1}\fnref{label2}}
%% \ead{email address}
%% \ead[url]{home page}
%% \fntext[label2]{}
%% \cortext[cor1]{}
%% \address{Address\fnref{label3}}
%% \fntext[label3]{}

\title{Analysis vs Synthesis with Structure -- An Investigation of Union of Subspace Models on Graphs}

%% use optional labels to link authors explicitly to addresses:
%% \author[label1,label2]{}
%% \address[label1]{}
%% \address[label2]{}

\author[]{M. S. Kotzagiannidis\corref{cor1}}
\ead{madeleine.kotzagiannidis@ed.ac.uk }
\cortext[cor1]{Corresponding author}

\author[]{M. E. Davies}
\ead{mike.davies@ed.ac.uk }

\address{Institute for Digital Communications, The University of Edinburgh, King's Buildings, Thomas Bayes Road, Edinburgh EH9 3FG, UK}

\fntext[]{This work was supported by the ERC project C-SENSE (ERC-ADG-2015-694888). MED is also supported by a Royal Society Wolfson Research Merit Award.}

\begin{abstract}
We consider the problem of characterizing  the `duality gap' between sparse synthesis- and cosparse analysis-driven signal models
through the lens of spectral graph theory, in an effort to comprehend their precise equivalencies and discrepancies. By detecting and exploiting the inherent connectivity structure, and hence, distinct set of properties, of rank-deficient graph difference matrices such as the graph Laplacian, we are able to substantiate discrepancies between the cosparse analysis and sparse synthesis models, according to which the former constitutes a constrained and translated instance of the latter. In view of a general union of subspaces model, we conduct a study of the associated subspaces and their composition, which further facilitates the refinement of specialized uniqueness and recovery guarantees, and discover an underlying structured sparsity model based on the graph incidence matrix. Furthermore, for circulant graphs, we provide an exact characterization of underlying subspaces by deriving closed-form expressions as well as demonstrating transitional properties between equivalence and non-equivalence for a parametric generalization of the graph Laplacian.
\end{abstract}

\begin{keyword}
graph signal processing \sep green's functions on graphs \sep union of subspaces model \sep cosparsity \sep structured sparsity \sep graph theory
%% keywords here, in the form: keyword \sep keyword

%% PACS codes here, in the form: \PACS code \sep code

%% MSC codes here, in the form: \MSC code \sep code
%% or \MSC[2008] code \sep code (2000 is the default)

\end{keyword}

\end{frontmatter}

%% \linenumbers

%% main text

\section{Introduction}
\label{intro}
Data models, representing a set of imposed mathematical constraints, are predominantly employed for the regularization of ill-posed inverse problems and, accordingly, their distinct choice and a thorough understanding of their properties is fundamental for their successful embedding in signal and image processing tasks. The study of inverse problems has recently inspired the investigation of a `duality gap' between the (\textit{sparse}) \textit{synthesis} and (\textit{cosparse}) \textit{analysis} signal models, \cite{elad}, \cite{cos}, signifying two prominent instances of the more comprehensive Union of Subspaces (UoS) signal model \cite{uos}, and, as such, it has become of increasing interest to comprehend when the two models cease to be equivalent, and, specifically, how they differ. While the models are known to be equivalent when the generating operator is nonsingular, their relation lacks a precise characterization when the operator at hand is rank-deficient square (or rectangular). In particular, it has become evident that it is necessary to go beyond the description of single quantities such as the spark of a synthesis operator ${\bf D}$, and, more recently, with the advent of \textit{cosparsity} \cite{cos}, the maximum subspace dimension $\kappa_{{\bf \Omega}}(l)$ of an analysis operator ${\bf \Omega}$ inducing cosparsity $l$, in order to quantify and describe non-trivial linear dependencies and understand model discrepancies precisely. \\
\\
We propose to remedy this by concretely looking at the underlying structure and conducting an analytic characterization of the defining subspaces.
In this work, we investigate UoS signal models in the structured domain of graphs, primarily based on the graph Laplacian matrix ${\bf L}$, as a fundamental graph (difference) operator, which is square rank-deficient, with extensions pertaining to the rectangular rank-deficient oriented incidence matrix ${\bf S}$ and higher-order generalizations. Here, we interpret ${\bf \Omega}={\bf L}$ as an analysis operator and consider its synthesis counterpart through the Moore-Penrose Pseudoinverse (MPP) ${\bf D}={\bf L}^{\dagger}$. \\
By focusing on a subset of highly structured graph difference operators, with known and novel annihilation properties, we aim to provide refined insights into the discrepancy between the cosparse analysis and sparse synthesis models. In particular, we wish to elucidate the transition between the two when the generating operator is square singular in the discrete structured domain of undirected connected (circulant) graphs, in an effort to comprehend their concise differences and eventually motivate an Ansatz for more general scenarios.  At its core, this study uncovers that the underlying linear dependencies of rank-deficient graph analysis operators pose constraints on their defining subspaces, effectively reducing the order of the underlying functions, as a result of the \textit{Fredholm Alternative} (F.A.) \cite{fred}, in contrast to their unconstrained synthesis counterparts.\\
\\
In the course of this analysis, we conduct a complete characterization of the subspaces of the two models by defining their underlying structure through discrete Green's functions (in closed-form expressions, where applicable), as well as by quantifying their number and dimension, which, as will become evident, cease to be uniform beyond the non-singular case. Due to their rich structure, circulant graph matrices particularly lend themselves for a more concrete analysis and the development of unique closed-form expressions. As a result of the connection between Green's functions and (pseudo)inverses, we observe the occurrence of boundary value phenomena which impact the shape and order of functions, thereby distinctly demonstrating a difference between synthesis and analysis-based solution subspaces via the F.A. constraint.\\
Specifically, we discover how the singularity of ${\bf L}$ creates a discrepancy in the type and localization of the underlying signal subspaces of each model as well as dictates their associated discontinuities, i.e.\ (\textit{structured}) \textit{sparsity} pattern. For circulant graphs, we concretely demonstrate that while the sparse synthesis model induces up to piecewise quadratic polynomial signals, the cosparse analysis model is degree-reduced and only encompasses up to piecewise linear polynomials, both of which are subject to graph-dependent perturbations. By considering a parametric extension of ${\bf L}$ on circulant graphs, in the form of the previously developed operator ${\bf L}_{\alpha}$, which annihilates complex exponential signals with exponent $\alpha\in\mathbb{C}$, \cite{splinesw}, we further show that when ${\bf L}_{\alpha}$ is singular, the associated sparse synthesis model generates up to (perturbed) linear complex exponential polynomials, while the analysis model only generates (perturbed) complex exponentials. In contrast, when ${\bf L}_{\alpha}$ is nonsingular, for certain choices of $\alpha$, both models become equivalent and generate (perturbed) complex exponential signals. Accordingly, the separate study of a range of graph difference operators with closed-form expressions is intended to uncover transitional properties between equivalence and non-equivalence of the two models, thereby exemplifying their fundamental difference.
%Further, we conduct a case study for circulant graphs whose characteristic structure facilitates a more refined analysis, and focus on the parametric graph Laplacian ${\bf L}_{\alpha}$, which, for suitable choices of parameter $\alpha$, can be designed to be either singular or non-singular. 
Ultimately, we leverage developed insights to initiate a model-based UoS framework on graphs. Here, we directly quantify subspace measures for the refinement of existing uniqueness guarantees as well as for UoS-based sampling theorems. We further discover that at the heart of the introduced graph Laplacian-based UoS model lies a \textit{structured sparsity} model which inspires the creation of tailored UoS models with desirable properties, such as a reduced number of total subspaces of low dimension. 
\\
\\
{\bf Why Graphs?} \\
Graph matrices provide insightful tools for the characterization of subspaces, and UoS signal models in particular, due to a number of convenient properties which can help capture the inherent signal geometry: Graph Laplacians are positive semi-definite (PSD) matrices of Gramian structure ${\bf L} ={\bf S}^T{\bf S}$, with well-defined, sparse structured incidence matrix ${\bf S}$. The graph connectivity manifests itself in the irreducibility\footnote{A matrix is irreducible if it cannot be transformed into a block-upper triangular matrix via permutations \cite{horn}.} of ${\bf L}$, which ceases to apply for graphs with more than one connected component. The nullspace and range of ${\bf L}$ (and by association, of ${\bf S}$ and ${\bf S}^T$ respectively) are known, and the linear dependencies of the matrix are essentially the result of the zero-sum constraint (capturing wavelet-like `vanishing moments' on graphs) inherent in the columns of ${\bf S}^T$. 
These linear constraints form an essential part of the analysis model of ${\bf L}$, as imposed through the F. A., and ultimately quantify a convenient rank deficiency of the matrix, which in turn give rises to a structured sparsity model. 
In addition, owing to structural properties of ${\bf L}$ and ${\bf S}$, and their MPPs by association, including structured sparsity, and, for more specialized graph structures such as circulants, extending to polynomial functions, they offer a broad representation range for signals.
The growing field of \textit{Graph Signal Processing (GSP)}, \cite{shu}, seeks to leverage the inherent ability of graphs to capture the geometric complexity of irregularly structured, complex data for arising signal representation and processing tasks. Since this requires the extension of classical signal processing theory to the graph domain, the development of a rigorous theoretical foundation is paramount.\\
Circulant graphs have been noted for providing a link between the classical (Euclidean) and graph domain of signal processing, which previously motivated the development of sparse graph wavelet analysis and sampling theory \cite{splinesw}, \cite{acha2}. As we will discover in Sect.\ $4$, most of the analysis pertaining to circulant graphs can be derived on the basis of the simple cycle, which is central to classical signal processing considering that a signal defined on its vertices is equivalent to a time-periodic signal, while its graph Laplacian, which encapsulates two vanishing moments, forms the high-pass portion of a fundamental discrete wavelet matrix. Further, circulant matrices are uniformly diagonalizable by the DFT-matrix and, as such, its underlying subspaces are essentially (complex exponential) polynomials, which will be substantiated in the derivation of MPPs, and provides a tangible link between graph theory and harmonic analysis.\\
\\
The presented study leverages links between several fields, facilitated through the structural properties of graphs in general, and circulant graphs in particular, in order to tackle the analysis vs. synthesis problem.
In the first instance, graph and matrix theory benefit from a rich interplay in that graphs can be represented as structured matrices with unique linear algebraic properties, while their analysis uncovers interesting phenomena which inspire the study of more general matrices.\footnote{An example product of such an interaction is the Perron-Frobenius Theorem \cite{horn}.} Further, the Fredholm Alternative, marking the decisive constraint which separates the analysis from the synthesis framework, and which is further found in the formulation of the MPP, originates from the theory of PDEs, as does the concept of Green's functions. The derivation of closed-form expressions for MPPs on circulant graphs further leverages polynomial equations and recurrence relations, ordinarily employed for the solution of differential equations.
\\
It is precisely the occurrence of structure which facilitates these links and renders the close investigation of the problem feasible. This work seeks to elucidate these connections and thereby provide a rich and comprehensive  description of the defining subspaces in data models, with relevant implications for graph theory and extending up to signal processing.
\\
\\
{\bf Contributions.} We summarize the main contributions as follows:
\begin{enumerate}
\item Analytic description of the discrepancy between analysis and synthesis models for the graph Laplacian ${\bf L}$ through subspace analysis (Thms.\ \ref{thm331}, \ref{thmmain}, \ref{main2})
\item Development of closed-form expressions of the (pseudo)inverses (Green's functions) of ${\bf L}$, ${\bf S}$ and ${\bf L}_{\alpha}$ on circulant graphs, providing broad representation range for signals (Lemmata \ref{circlins}, \ref{lemcircl}, \ref{lemlincirc}, \ref{inva}, \ref{mppa}, Cor.\ \ref{mppinva})
\item Analysis of the special case of the parametric circulant graph Laplacian ${\bf L}_{\alpha}$, showing a transition from inverse to MPP, and hence, the incremental formation of model discrepancies (Thm.\ \ref{main2}, Rem.\ \ref{contr1})
\end{enumerate}
{\bf Related Work.}
The study of identifying equivalencies and discrepancies between analysis-and synthesis-driven signal models was initiated in \cite{elad},\cite{cos}, whereby \cite{cos} introduced the novel concept of \textit{cosparsity}, as a distinct and potentially more powerful avenue than that of sparsity. While \cite{cos} establishes for matrices in general position that the analysis model, as an instance of a UoS model, is a special case of the synthesis model, and conducts a case study for the incidence matrix of the grid graph, it does not take into consideration the specific composition of the subspaces and their dependencies, nor does its analysis apply to rank-deficient square operators. 
Unser et al.\ \cite{unser} derive representer theorems, which provide the general, regularization-dependent, solution structure of both synthesis- and analysis-driven approaches in comparison; while in infinite dimensions, this requires the derivation of a stable constrained right-inverse operator for the analysis case, the finite rank-deficient case is not treated and a precise comparison of subspaces not conducted. 
Further, in \cite{unser}, a boundary condition is employed for the construction of the right inverse operator in infinite dimensions; as pointed out by Flinth et al. \cite{flinth}, the former focus on a specific class of operators with finite kernel, termed Fredholm operators. As will become evident, the Fredholm Alternative is crucial in the characterization of the analysis-synthesis discrepancy of finite rank-deficient operators.\\
The field of GSP has featured analysis-driven approaches, in the form of generalized graph operator design, (multiresolution) graph wavelet analysis and filterbank construction (e.g. \cite{Coifman}, \cite{spectral}, \cite{ortega3}), as well as synthesis-driven approaches, including the learning and/or design of graph-based dictionaries \cite{dict1}, with instances of one inducing the other, \cite{kov}, \cite{tools}. Nevertheless, a comparative theoretical study of the two models has not been realized in this context.\\
A previous body of work \cite{splinesw}, \cite{acha2}, which developed a framework for sparse graph wavelet analysis and sampling on circulant graphs and beyond, initiated the study of signal sparsity in the light of the connectivity of graphs; however, it offered only implicit characterization of the underlying signal model when the graph at hand is circulant. In Sect.\ $4$, we specifically leverage prior results in order to motivate and expand the comparative study of the two models.\\
In \cite{smola}, the topic of analysis-driven graph trend filtering, which employs the $l_1$-minimization of a difference term based on the graph Laplacian and its higher-order generalizations, is explored; nevertheless, the derivation of the analysis solution subspaces is flawed, ignoring crucial constraints. In \cite{Pesenson}, variational splines on graphs are defined as the Green's functions of a regularized graph Laplacian operator ${\bf L}+\epsilon{\bf I}_N$, with small parameter $\epsilon>0$, which is effectively invertible. However, due to its invertibility, the operator loses the distinctive features, aka the associated linear dependencies, of ${\bf L}$, rendering the analysis-synthesis discrepancy non-existent, and its underlying signal model constitutes only an approximation on the graph. In contrast, in this work, we focus on the properties of the graph Laplacian MPP ${\bf L}^{\dagger}$, as a Green's function.\\
Part of this work appears in a conference paper \cite{global}.
\\
\\
This paper is organized as follows: we state the notation and relevant prerequisite theory from graph theory and graph signal processing in Sect. $2$. In Sect. $3$, we formulate the problem statement and introduce the (co)sparse signal models on general undirected graphs, where we firstly conduct a separate study of the two models, before establishing their distinct differences as instances of UoS models with regard to the composition, dimension and number of unique combinations of their associated subspaces. In Sect. $4$, we lay the focus on circulant graphs which facilitate a concretization of previously discovered model discrepancies. In addition, we study a  generalized, parametric version of the graph Laplacian on circulant graphs, and its (pseudo)inverse, in order to elucidate how the rank-deficiency of a structured difference operator creates a discrepancy between the subspaces of analysis and synthesis-driven models. At last, in Sect. $5$, we leverage derived results in order to refine uniqueness and recovery guarantees for signals belonging to constrained UoS graph models, and position them within the field of model-based compressed sensing. Further, we establish that the developed (co)sparse UoS graph models give rise to a structured sparsity model and discuss its properties, in the light of the question of what constitutes a desirable UoS model. In Sect. $6$, we make concluding remarks and give all proofs not included in the main text in the appendix.

\section{Preliminaries}
\subsection{Notation}
\noindent We denote vectors with boldfaced lower case letters ${\bf x}$ and matrices with boldfaced uppercase letters ${\bf A}$. Let ${\bf 1}_N$ and ${\bf 0}_N$ define the constant column vectors of length $N$ with entries of 1's and 0's respectively, while ${\bf t}=\lbrack 0\ 1\ ...\ N-1\rbrack^T$ denotes the vector of sequential numbers from $0$ to $N-1$, as we adopt zero-based numbering unless stated otherwise. Further, let ${\bf 1}_C$, for some index set $C$, denote the vector with $1$'s at positions in $C$ and zeros otherwise. The vector and matrix norms of relevance are the $l_0$-pseudo-norm, denoted with $||{\bf x}||_0=\#\{i:x_i\neq0\}$ and the $l_2$-norm, given by $||{\bf x}||_2=\left(\sum_{i=0}^{N-1} |x_i|^2\right)^{1/2}$. The canonical basis vectors ${\bf e}_i$ satisfy $e_i(i)=1$ and $e_i(j)=0,\enskip j\neq i$, and ${\bf J}_N$ is the all-ones matrix of size $N$.
Given a matrix ${\bf L}$ and (the sets of) indices $A$ and $B$, the notation ${\bf L}(A,B)$ or ${\bf L}_{A,B}$ indicates that the corresponding rows and columns in ${\bf L}$ are chosen. In addition the matrix ${\bf \Psi}_{\Lambda}$ is defined as the sampling matrix with 
\[  \Psi_{\Lambda}( i,j)=\left\{
  \begin{array}{@{}ll@{}}1, & j=\lambda_i\in \Lambda\\    
0,& \text{otherwise}
 \end{array}\right.
\] 
for the ordered index set $\Lambda\subset \lbrack 0\enskip... \enskip N-1\rbrack$, where $\lambda_i$ denotes the $i$-th element in $\Lambda$. Further, let ${\bf \Psi}_{\Lambda}{\bf x}={\bf x}_{ \Lambda}$. We maintain for simplicity the notational convention according to which ${\bf D}_{\Lambda}$ denotes the ${\Lambda}$-indexed columns of ${\bf D}$ in the synthesis model, while for the analysis model, ${\bf \Omega}_{\Lambda}$ represents the rows ${\Lambda}$ of ${\bf \Omega}$. If no model is specified, the former convention is applied.

\subsection{Graph Theory}
A graph constitutes a connectivity structure, characterized by a set $V=\{0,1,...,N-1\}$ of vertices and set $E$ of edges, which connect pairs of vertices, and is formally denoted by $G=(V,E)$ with cardinality $|V|=N$. The adjacency matrix ${\bf A}\in\mathbb{R}^{N\times N}$ captures the graph connectivity by assigning non-zero weights to existing edges between any pair of vertices $\{i,j\}$ at entries $A_{i,j}>0$ and  $A_{i,j}=0$ otherwise, while the diagonal degree matrix ${\bf D}$ contains the sum of the weights (denoted as degree) at each vertex with $D_{i,i}=\sum_j A_{i,j}$. The (non-normalized) graph Laplacian ${\bf L}={\bf D}-{\bf A}$ is a prominent graph matrix whose particular set of properties has been widely investigated within spectral graph theory and made use of beyond, and constitutes a fundamental graph difference matrix. In particular, for an undirected graph, ${\bf L}$ is a symmetric positive semi-definite (PSD) matrix with a complete set of orthogonal eigenvectors $\{{\bf u}_l\}_{l=0}^{N-1}$ and a non-negative spectrum $0=\lambda_0\leq \lambda_1\leq ..\leq \lambda_{N-1}$; when the graph at hand is additionally connected we have $\lambda_1>0$. Further, the operator ${\bf L}^k, \enskip k\in\mathbb{N}$, is strictly $k$-hop localized in the vertex domain, with $({\bf L}^k)_{i,j}=0$ when the shortest-path distance (i.e. the number of hops) between $i,j$ is greater than $k$.
For the remainder of this work, we a priori assume that the graph at hand is undirected. 

Another graph matrix of interest is the oriented edge-vertex incidence matrix ${\bf S}\in\mathbb{R}^{|E|\times |V|}$ of $G$, which assigns an arbitrary but fixed direction to each edge, conventionally with $S_{k,i}=\sqrt{A_{i,j}}$ and $S_{k,j}=-\sqrt{A_{i,j}}$ if the $k$-th edge $\{i,j\}$ is directed from $i$ to $j$, resulting in the operation $({\bf S}{\bf x})_{\{i,j\}}=\sqrt{A_{i,j}}(x(i)-x(j))$ at edge $\{i,j\}$. In an analogy to discrete differential geometry operators and as outlined in prior work \cite{splinesw}, the graph Laplacian ${\bf L}$ constitutes a graph-realization of a second-order differential operator, while the incidence matrix ${\bf S}$ can be interpreted as a first-order differential operator. Moreover, the graph incidence and graph Laplacian matrices are linked through the Gram operation ${\bf L}={\bf S}^T{\bf S}$, which, as we will discover, marks a convenient property when dealing with pseudoinverse operations, while both ${\bf S}$ and ${\bf L}$ have rank $N-t$ in general, where $t$ is the number of connected components in $G$. For a connected graph with $t=1$, we have $N({\bf L})=N({\bf S})=z{\bf 1}_N, \ z\in\mathbb{R}$, where $N(\cdot)$ denotes the nullspace, while for a disconnected graph with $t$ connected components and corresponding vertex sets $\{C_k\}_{k=1}^t$, this becomes $N({\bf L})=N({\bf S})=span\{{\bf 1}_{C_1},...,{\bf 1}_{C_t}\}$.
\\
\\
We define a signal on the vertices of a graph $G$ as a complex-valued scalar function of dimension $N$, which assigns a sample value $x(i)$ to node $i$ and can be represented as a vector ${\bf x}\in\mathbb{C}^N$. This notion is central to Graph Signal Processing (GSP), which i.a. studies the design of linear graph-based operators and their application on the signal associated with the graph at hand \cite{shu}. In GSP, signal smoothness has been more commonly associated with sparsity in the graph frequency domain, a concept also known as bandlimitedness \cite{shu}, while a complete characterization in the vertex domain remains opaque. In this work, we interpret the class of (\textit{piecewise}) \textit{smooth} graph signals to signify sparsity with respect to the underlying graph connectivity, as induced via the operation with a designated graph operator, as follows:
\begin{defe}
The graph signal ${\bf x}\in\mathbb{R}^N$ is piecewise-smooth on $G$ with respect to a graph operator ${\bf L}$ if its representation ${\bf L}{\bf x}$ is sparse, i.e. $||{\bf L}{\bf x}||_0\ll N$. 
\end{defe}
As will be established, this notion is guided by the underlying Green's functions (and by association, the MPP) of the operator. Furthermore, a relevant class of graph signals on circulant graphs has been revealed to be that of classical piecewise polynomials \cite{splinesw}, which, for clarity, we define as follows:
\begin{defe} (\cite{splinesw})
A graph signal ${\bf p}\in \mathbb{R}^N$ defined on the vertices of a circulant graph $G$ is (piecewise) polynomial if its labelled sequence of sample values, with value $p(i)$ at node $i$, is the discrete, vectorized version of a standard (piecewise) polynomial. In particular, we have ${\bf p}=\sum_{j=1}^K {\bf p}_j \circ {\bf 1}_{\lbrack t_j,t_{j+1})}$, for Hadamard product $\circ$, where $t_1=0$ and $t_{K+1}=N$, with pieces
$p_j(t)=\sum_{d=0}^D a_{d,j} t^d,\enskip j=1,...,K$, for $t\in\mathbb{Z}^{\geq 0}$, coefficients $a_{d,j} \in\mathbb{R}$, and maximum degree $D=deg(p_j(t))$.
\end{defe}
\subsection{Circulant Graphs and Matrices}
Circulant graphs reveal a distinct set of properties, which have previously facilitated the development of graph wavelet analysis and sampling \cite{splinesw}, \cite{acha2}. 
A circulant graph $G_S$ is defined via a generating set $S = \{s_1, ..., s_M \}$, with $0<s_k \leq N/2$, whose elements indicate the existence of an edge between node pairs $(i, (i \pm s_k)_N )$, $\forall s_k \in S$, where $()_N$ is the mod $N$ operation. In general, a graph is circulant if its associated graph Laplacian is a circulant matrix under a particular node labelling (see Fig.\ \ref{fig:aa3} for examples). In order to fully leverage the properties of circulant matrices, we henceforth assume that the circulant graph at hand is labelled such that its associated matrices are circulant. Further, a circulant graph is connected if the greatest common divisor of the elements in its generating set $S$ and the graph dimension $N$ is $1$ \cite{circcon}; for simplicity and, as will become evident in subsequent derivations, in the interest of mathematical convenience, we always assume $s=1\in S$ to ensure connectivity. Circulant matrices are characterized by a representer polynomial $l(z)=\sum_{k=0}^{N-1} l_k z^k$ whose entries are taken from its first row $\lbrack l_0 \ l_1 \ ...\ l_{N-1}\rbrack$. Symmetric circulant matrices with first row $\lbrack l_0\enskip l_1\enskip l_2\ ...\ l_2\enskip l_1\rbrack$ and bandwidth $M$ are of the form
\[{\bf L}=\begin{bmatrix}
 l_0 & l_{1} &  \cdots     & l_{2} &  l_{1} \\
 l_{1} &  l_0 & \ddots & \ddots &  l_{2} \\
\vdots &  \vdots & \ddots & \ddots &  \vdots  \\
 l_{2} &  l_{3} & \ddots & \ddots & l_{1} \\
l_{1} &  l_{2} &  \cdots     & \cdots     &  l_0
\end{bmatrix}
\]
where $l(z)$ can be converted to a Laurent polynomial with $l_i=l_{N-i}$ for $i>0$ such that $l(z) = l_0 +  \sum^M_{i=1} l_i(z^i + z^{-i})$. It is further noteworthy that circulant matrices are diagonalizable by the DFT-matrix. In particular, $l(z)$ gives rise to the eigenvalues of ${\bf L}$, as ordered per diagonalization by the DFT-matrix, at frequency locations $\frac{2\pi i k}{N}$ with $l(e^{\frac{2\pi i k}{N}}) = \lambda_k$, $k = 0,...,N-1$ \cite{circul}.
Moreover, for ${\bf L}={\bf D}-{\bf A}$ with degree matrix ${\bf D}=d{\bf I}_N$ and symmetric circulant adjacency matrix ${\bf A}$, the individual entries are given by $l_0=d=\sum_{i=1}^M 2 d_i$ and $l_i=-d_i$, where $d_i=A_{j,(i+j)_N}$ denote the symmetric edge weights.
 \begin{figure}
 \centering
  \begin{subfigure}[htbp]{0.32\textwidth}
  \centering
{\includegraphics[width=1.2in]{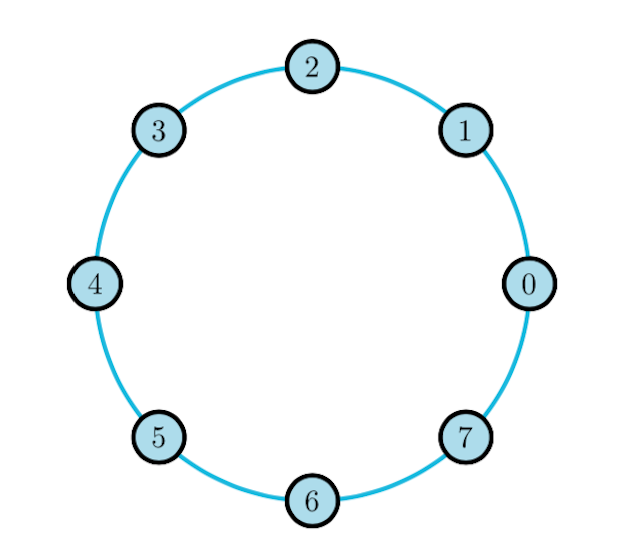}} 
\caption{\small{$S=\{1\}$ (simple cycle)}}
\end{subfigure}
%\vspace{1mm}
\begin{subfigure}[htbp]{0.32\textwidth}
\centering
 { \includegraphics[width=1.3in]{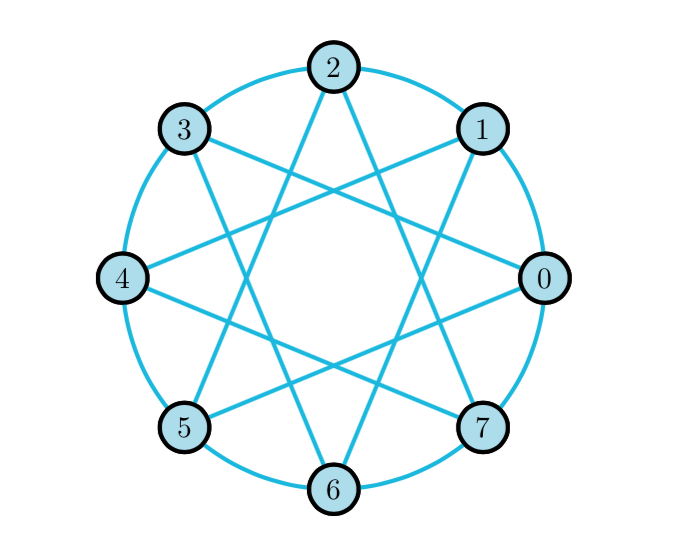}}%\hspace{5mm}
 \caption{\small{$S=\{1,3\}$}}
 \end{subfigure}
  \begin{subfigure}[htbp]{0.34\textwidth}
  \centering
 { \includegraphics[width=1.22in]{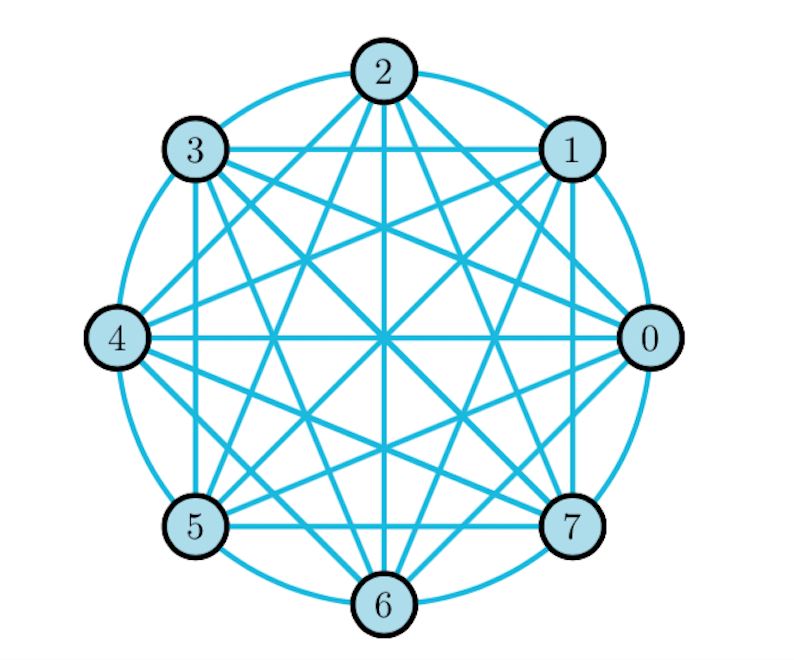}}%\hspace{5mm}
 \caption{\small{$S=\{1,2,3,4\}$ (complete) }}
 \end{subfigure}
	\caption{Circulant Graphs $G_S$ with Generating Set $S$}
\label{fig:aa3}\end{figure}
We focus on cyclically banded circulant matrices (graphs) of bandwidth $M$ with $2M\ll N$. It has been shown in prior work (Lemma $3.1$, \cite{splinesw}) that the representer polynomial $l(z)$ of the banded circulant graph Laplacian  ${\bf L}$ has $2$ vanishing moments, i.e. it annihilates up to linear polynomials, and by circular convolution (matrix multiplication), ${\bf L}^k$ annihilates polynomials of up to order $2k-1$, subject to a border effect dependent on the bandwidth $M$ of the graph. Here, the `border effect' refers to the amplified spread around the discontinuities of a signal, incurred following the application of a finite banded circulant graph operator, i.e. each signal discontinuity is amplified by the bandwidth of the graph matrix. 
\\
\\
Further, a parametric generalization of the graph Laplacian on circulant graphs was designed to be of the form ${\bf L}_{\alpha}=d_{\alpha}{\bf I}_N-{\bf A}$, with parameterized degree $d_{\alpha}=\sum_{j=1}^M 2d_j \cos(\alpha j)$, $\alpha\in\mathbb{C}$, so as to annihilate complex exponential polynomial signals ${\bf x}={\bf p} e^{\pm i \alpha {\bf t}}$, with discrete polynomial ${\bf p}$ of degree $0$ and ${\bf t}=\lbrack 0\ 1\ ...\ N-1\rbrack^T$. Unless $\alpha=\frac{2\pi k}{N}$ for $k\in \lbrack 0\enskip N-1\rbrack$, this is subject to a border effect dependent on the bandwidth $M$ of ${\bf L}_{\alpha}$ and ${\bf L}_{\alpha}$ is invertible (follows from Cor.\ $3.3$ \cite{splinesw}, see also Property \ref{prop4}, Sect. $4$). By extension, ${\bf L}_{\alpha}^k$ annihilates complex exponential polynomials with the same exponent of degree $k-1$. Here, the choice of $d_{\alpha}$ is related to the characteristic eigenvalue structure of the circulant ${\bf A}$, and for $\alpha=\frac{2\pi k}{N}$, we have $d_{\alpha}=\lambda_k({\bf A})$, $k\in \lbrack 0 \enskip N-1\rbrack$. The operator ${\bf L}_{\alpha}$ forms part of a more generalized class of graph Laplacians for arbitrary undirected graphs $G$, given by $\tilde{{\bf L}}={\bf L}+{\bf P}$ \cite{laplaceeigen}, where ${\bf P}$ is an arbitrary diagonal matrix, whose entries may be chosen to produce e.g. an annihilation effect for a given class of signals.

\section{The (Co)sparse Signal Model on Graphs}
\subsection{Problem Statement and Context}
For the task of modelling a given signal ${\bf x}\in\mathbb{R}^N$, we distinguish between a \textit{sparse (generative) synthesis} approach and a \textit{cosparse (descriptive) analysis} approach. In the former case, the signal is `synthesized' as the linear combination of few columns from a given dictionary ${\bf D}\in\mathbb{R}^{N\times M}$ with $M\geq N$, giving rise to the representation ${\bf x}={\bf D}{\bf c}$, where the coefficient vector ${\bf c}\in\mathbb{R}^{M}$ is sparse with $||{\bf c}||_0\ll M$. In the latter case, the signal is implicitly described through the application of an analysis operator ${\bf \Omega}\in\mathbb{R}^{M\times N}$ such that ${\bf \Omega}{\bf x}$ is sparse with $||{\bf \Omega}{\bf x}||_0\ll M$; in particular, this model has inspired the advent of the concept of \textit{cosparsity}, which places emphasis on the zeros as opposed to the non-zeros, as measured by the quantity $l:=M-||{\bf \Omega}{\bf x}||_0$.\\
\\
Let ${ \Lambda}$ and ${\Lambda}^{\complement}$ denote the locations of the sparse and cosparse entries of ${\bf c}$ and ${\bf \Omega}{\bf x}$ respectively such that ${\bf x}={\bf D}_{\Lambda^{\complement}}{\bf c}_{\Lambda^{\complement}}$ and ${\bf \Omega}_{\Lambda}{\bf x}={\bf 0}_{\Lambda}$. Accordingly, the solution subspaces induced by these signal models are respectively given by $V_{\Lambda^{\complement}}:=span({\bf D}_j, j\in\Lambda^{\complement})$ and $W_{\Lambda}:=N({\bf \Omega}_{\Lambda})$.\\
When ${\bf D}={\bf \Omega}^{-1}$, the analysis and synthesis model become trivially equivalent as their underlying subspaces are identical, however, when the operators are rank-deficient and/or rectangular, discrepancies arise whose study is still in its infancy and an exact characterization is desirable.\\
\\
Given the analysis operator ${\bf \Omega}$, one possibility to evaluate its synthesis counterpart operator is through the Moore-Penrose Pseudoinverse (MPP) ${\bf D}={\bf \Omega}^{\dagger}$. For the general inverse problem ${\bf \Omega}{\bf x}={\bf y}$, the MPP, which is uniquely defined, is known to provide the set of solutions ${\bf x}'$ which minimize  $||{\bf \Omega}{\bf x}'-{\bf y}||_2$ \cite{mpp}, given by 
\begin{equation}
 {\bf x}'={\bf \Omega}^{\dagger}{\bf y}+({\bf I}_N-{\bf \Omega}^{\dagger}{\bf \Omega}){\bf z},\enskip {\bf z}\in{\bf R}^N.
 \end{equation}
Here, ${\bf P}_{N({\bf \Omega})}=({\bf I}_N-{\bf \Omega}^{\dagger}{\bf \Omega})$ is the orthogonal projector onto the nullspace of ${\bf \Omega}$ and solutions to the above problem exist if and only if ${\bf \Omega}{\bf \Omega}^{\dagger}{\bf y}={\bf y}$. Specifically, according to the \textit{Fredholm Alternative} (F.A.) \cite{fred}, a solution ${\bf x}'$ exists if and only if $\langle {\bf y},{\bf n}_j \rangle =0$ $\forall {\bf n}_j\in N({\bf \Omega}^T)$, where ${\bf n}_j\in\mathbb{R}^N$ denote the basis atoms of the nullspace, and equivalently if and only if ${\bf y}\in N({\bf \Omega}^T)^{\perp}=R({\bf \Omega})$, for range $R(\cdot)$. This condition forms the basis for the creation of the MPP ${\bf \Omega}^{\dagger}$ \cite{mpp}, and, as such, is essential for the characterization of the solution set.
\\
When ${\bf \Omega}$, with $M=N$, is invertible, the solution to the above problem is unique and the nullspace translation-term vanishes; in other words, the analysis operation ${\bf \Omega}{\bf x}$ is equivalent to the synthesis operation ${\bf \Omega}^{-1}{\bf \Omega}{\bf x}={\bf \Omega}^{-1}{\bf y}$ with no loss of information. When ${\bf \Omega}$, with $M> N$, is of full column rank, the solution remains unique only if the Fredholm Alternative is satisfied for a particular choice of ${\bf y}$, however, as has been pointed out in \cite{elad}, the two operations are no longer equivalent owing to more subtle differences in their defining subspaces.\\
\\
In \cite{cos}, the measure  $\kappa_{{\bf \Omega}}(l):=\max_{|\Lambda|\geq l} dim (W_{\Lambda})$ was defined to characterize the interdependency between rows of the analysis operator, as a counterpart to the \textit{spark} in the synthesis model, which defines the minimum number of linearly dependent columns in ${\bf D}$. Derived solutions can be leveraged for the quantification of necessary and sufficient (co)sparsity levels for uniqueness of representation results. Here, we aim to take the comparison of the two models a step further by not only quantifying these dependency measures (as conducted in Sect. $5$) but also directly characterizing the underlying solution subspaces which facilitates more concrete claims on linear dependencies and how these models are fundamentally interrelated.\\ 
\\
We  tackle this problem by considering analysis and synthesis operators which represent structured graph (difference) matrices and their MPPs, specifically focusing on the rank-deficient square graph Laplacian ${\bf L}$, with extensions to its high-order generalizations. In the following, we separately adopt the synthesis and analysis perspective with focus on general undirected graphs in the first instance, and connected (banded) circulant graphs in particular, the latter of which are of interest due to the characteristic closed-form expressions and concise insights one can obtain as a result of their structure. Subsequently, we compare and discuss these derivations in light of a generalized union of subspaces model and aim to position them within an analysis vs. synthesis discrepancy spectrum as well as attempt to motivate generalizations beyond the structured matrices of graphs.
We begin by considering the graph Laplacian analysis operator ${\bf \Omega}={\bf L}$ and its synthesis counterpart ${\bf D}={\bf L}^{\dagger}$ in the vertex domain of an undirected graph $G=(V,E)$.
\subsection{The Cosparse Analysis Model}
In order to characterize the subspaces of piecewise-smooth graph signals ${\bf x}$ on $G$, which can be partially annihilated by ${\bf L}$, we need to derive $N({\bf \Psi}_{\Lambda}{\bf L})$, where $\Lambda\subset V$ denotes the \textit{cosupport} associated with the zero entries of ${\bf L}{\bf x}$. Hence, we state the following:
\begin{prop}\label{prop1} The nullspace of ${\bf \Psi}_{\Lambda}{\bf L}$, where ${\bf L}$ is the graph Laplacian of an undirected connected graph $G=(V,E)$, has rank $|\Lambda^{\complement}|$ for $|\Lambda|<N$, and is described by $N({\bf \Psi}_{\Lambda}{\bf L})=z{\bf 1}_N +{\bf L}^{\dagger}{\bf \Psi}_{\Lambda^{\complement}}^T{\bf W}{\bf c}$, for arbitrary $z\in\mathbb{R}$, ${\bf c}\in\mathbb{R}^{|\Lambda^{\complement}|-1}$, and ${\bf W}\in\mathbb{R}^{|\Lambda^{\complement}|\times(|\Lambda^{\complement}|-1)}$ given by
 \begin{equation}{\bf W}:=\scalebox{0.8}{$\begin{pmatrix} |\Lambda^{\complement}|-1&0&&\dots& &0\\
-1&|\Lambda^{\complement}|-2& 0&\dots& &0\\
&-1&|\Lambda^{\complement}|-3&&&\vdots\\
\vdots&&&&&\\
&&&&&0\\&&&&&1\\-1&-1&\dots&&&-1\end{pmatrix}$}.\end{equation}
\end{prop}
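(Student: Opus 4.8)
The plan is to recast the condition ${\bf \Psi}_{\Lambda}{\bf L}{\bf x}={\bf 0}$ as a solvability statement for an inhomogeneous Laplacian equation and then invoke the Fredholm Alternative together with the known nullspace/range structure of ${\bf L}$. First I would observe that ${\bf \Psi}_{\Lambda}{\bf L}{\bf x}={\bf 0}$ is equivalent to requiring ${\bf L}{\bf x}$ to be supported on $\Lambda^{\complement}$, i.e.\ ${\bf L}{\bf x}={\bf \Psi}_{\Lambda^{\complement}}^T{\bf y}$ for some ${\bf y}\in\mathbb{R}^{|\Lambda^{\complement}|}$. Since $G$ is connected, $N({\bf L})=\mathrm{span}({\bf 1}_N)$ and, ${\bf L}$ being symmetric, $R({\bf L})=N({\bf L})^{\perp}={\bf 1}_N^{\perp}$; hence such an ${\bf x}$ exists if and only if ${\bf \Psi}_{\Lambda^{\complement}}^T{\bf y}\in R({\bf L})$, which by the F.A.\ means exactly $\langle {\bf 1}_N,{\bf \Psi}_{\Lambda^{\complement}}^T{\bf y}\rangle = {\bf 1}_{|\Lambda^{\complement}|}^T{\bf y}=0$. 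For every admissible ${\bf y}$ the full preimage $\{{\bf x}:{\bf L}{\bf x}={\bf \Psi}_{\Lambda^{\complement}}^T{\bf y}\}$ equals ${\bf L}^{\dagger}{\bf \Psi}_{\Lambda^{\complement}}^T{\bf y}+N({\bf L})={\bf L}^{\dagger}{\bf \Psi}_{\Lambda^{\complement}}^T{\bf y}+z{\bf 1}_N$ by the Moore--Penrose solution formula recalled above (using that ${\bf L}{\bf L}^{\dagger}$ is the orthogonal projector onto $R({\bf L})$). Collecting these over all admissible ${\bf y}$ gives $N({\bf \Psi}_{\Lambda}{\bf L})=\{\,z{\bf 1}_N+{\bf L}^{\dagger}{\bf \Psi}_{\Lambda^{\complement}}^T{\bf y}:z\in\mathbb{R},\ {\bf 1}_{|\Lambda^{\complement}|}^T{\bf y}=0\,\}$.

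Next I would parametrize the hyperplane $\{{\bf y}\in\mathbb{R}^{|\Lambda^{\complement}|}:{\bf 1}^T{\bf y}=0\}$ by a matrix whose columns form a basis of it, and check that the stated ${\bf W}$ does the job. Writing $n:=|\Lambda^{\complement}|$, column $j$ of ${\bf W}$ carries the entry $n-j$ in row $j$, the entry $-1$ in each of the $n-j$ rows $j+1,\dots,n$, and zeros above; hence each column sums to $(n-j)-(n-j)=0$, i.e.\ ${\bf 1}_n^T{\bf W}={\bf 0}$. Moreover the top $(n-1)\times(n-1)$ block of ${\bf W}$ is lower triangular with nonzero diagonal $n-1,n-2,\dots,1$, so $\mathrm{rank}({\bf W})=n-1$. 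Since $R({\bf W})\subseteq{\bf 1}_n^{\perp}$ and $\dim R({\bf W})=n-1=\dim{\bf 1}_n^{\perp}$, the columns of ${\bf W}$ span the admissible set, so ${\bf y}$ ranges exactly over $\{{\bf W}{\bf c}:{\bf c}\in\mathbb{R}^{n-1}\}$. Substituting into the description from the first step yields $N({\bf \Psi}_{\Lambda}{\bf L})=z{\bf 1}_N+{\bf L}^{\dagger}{\bf \Psi}_{\Lambda^{\complement}}^T{\bf W}{\bf c}$, as claimed; the degenerate case $n=1$ (empty ${\bf W}$, ${\bf c}\in\mathbb{R}^0$) collapses correctly to $N({\bf \Psi}_{\Lambda}{\bf L})=\mathrm{span}({\bf 1}_N)$.

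Finally, for the rank count I would show the linear map $\phi(z,{\bf c})=z{\bf 1}_N+{\bf L}^{\dagger}{\bf \Psi}_{\Lambda^{\complement}}^T{\bf W}{\bf c}$ is injective, which forces $\dim N({\bf \Psi}_{\Lambda}{\bf L})=1+(n-1)=|\Lambda^{\complement}|$ (this can be cross-checked by rank--nullity, since the $|\Lambda|$ rows of ${\bf L}$ indexed by $\Lambda$ are independent: ${\bf L}^T{\bf w}$ supported on $\Lambda\subsetneq V$ and lying in $N({\bf L})=\mathrm{span}({\bf 1}_N)$ implies ${\bf w}={\bf 0}$). The key point for injectivity is that $z{\bf 1}_N\in N({\bf L})$ while ${\bf L}^{\dagger}{\bf \Psi}_{\Lambda^{\complement}}^T{\bf W}{\bf c}\in R({\bf L}^{\dagger})=R({\bf L})=N({\bf L})^{\perp}$, so $\phi(z,{\bf c})={\bf 0}$ forces both terms to vanish separately; then $z=0$, and ${\bf L}^{\dagger}{\bf \Psi}_{\Lambda^{\complement}}^T{\bf W}{\bf c}={\bf 0}$ gives ${\bf \Psi}_{\Lambda^{\complement}}^T{\bf W}{\bf c}\in N({\bf L}^{\dagger})=\mathrm{span}({\bf 1}_N)$, yet ${\bf 1}_N^T{\bf \Psi}_{\Lambda^{\complement}}^T{\bf W}{\bf c}={\bf 1}_n^T{\bf W}{\bf c}=0$, so a multiple of ${\bf 1}_N$ with zero coordinate sum must be ${\bf 0}$; injectivity of ${\bf \Psi}_{\Lambda^{\complement}}^T$ and full column rank of ${\bf W}$ then give ${\bf c}={\bf 0}$. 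I expect the only mildly delicate part to be the bookkeeping: keeping the support/range manipulations with ${\bf \Psi}_{\Lambda^{\complement}}^T$ clean, using the identity ${\bf 1}_N^T{\bf \Psi}_{\Lambda^{\complement}}^T={\bf 1}_{|\Lambda^{\complement}|}^T$ correctly so that the Fredholm condition becomes precisely the zero-sum condition on ${\bf y}$, and confirming that the boundary cases ($n=1$ and $\Lambda=\emptyset$) are subsumed by the same argument; everything else is routine linear algebra with ${\bf L}^{\dagger}$.
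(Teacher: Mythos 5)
Your proposal is correct and follows essentially the same route as the paper: recast the condition as solvability of ${\bf L}{\bf x}={\bf \Psi}_{\Lambda^{\complement}}^T{\bf y}$, invoke the Fredholm Alternative to obtain the zero-sum constraint on ${\bf y}$, parametrize that hyperplane by ${\bf W}$, and write the solution set via ${\bf L}^{\dagger}$ plus the nullspace translation. Your explicit verification that ${\bf W}$ has full column rank and your injectivity argument for the dimension count $|\Lambda^{\complement}|$ are slightly more detailed than the paper's (which defers the rank statement to a remark), but the underlying argument is the same.
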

\begin{proof}
We have $N({\bf L})=z{\bf 1}_N\subset N({\bf \Psi}_{\Lambda}{\bf L})$ for $z\in\mathbb{R}$. \\
In order to obtain the complete subspace $N({\bf \Psi}_{\Lambda}{\bf L})$, we consider the problem ${\bf \Psi}_{\Lambda}{\bf L}{\bf u}={\bf 0}_{\Lambda}$, which is equivalent to solving ${\bf b}={\bf L}{\bf u}=N({\bf \Psi}_{\Lambda})$. For a solution ${\bf u}$ to exist, we need to satisfy the Fredholm Alternative, i.e. $N({\bf \Psi}_{\Lambda})\perp N({\bf L})$. Since this is not true in general, we need to introduce linear constraints; in particular, without loss of generality, let $N({\bf \Psi}_{\Lambda})={\bf \Psi}_{\Lambda^{\complement}}^T$ and consider the constraint matrix ${\bf W}\in\mathbb{R}^{|\Lambda^{\complement}|\times k}$ so that ${\bf L}{\bf u}={\bf \Psi}_{\Lambda^{\complement}}^T{\bf W}{\bf c}$, for unknown subspace dimension $k$ and arbitrary coefficient vector ${\bf c}\in\mathbb{R}^k$. Then, for ${\bf w}:={\bf W}{{\bf c}}$, a solution ${\bf u}$ exists with ${\bf u}={\bf L}^{\dagger}{\bf \Psi}_{\Lambda^{\complement}}^T{\bf w}$ which we can synthesize to obtain:\\
\begin{equation}\label{eq:nullspace}{\bf \Psi}_{\Lambda}{\bf L}{\bf u}={\bf \Psi}_{\Lambda}{\bf L}{\bf L}^{\dagger}{\bf \Psi}_{\Lambda^{\complement}}^T{\bf w}={\bf \Psi}_{\Lambda}\left({\bf I}_N-\frac{1}{N}{\bf J}_N\right) {\bf \Psi}^T_{\Lambda^{\complement}}{\bf w}=-\frac{1}{N}{\bf J}_{{\Lambda},{\Lambda}^{\complement}}{\bf w}={\bf 0}_{\Lambda}.\end{equation}
Here, we have used the equality ${\bf L}{\bf L}^{\dagger}=({\bf I}_N-\frac{1}{N}{\bf J}_N)$ for general graph Laplacians of connected graphs. Overall, we deduce \begin{equation}\label{eq:const}{\bf W}:=N({\bf J}_{|{\Lambda}|,|{\Lambda}^{\complement}|})=\scalebox{0.8}{$\begin{pmatrix} |\Lambda^{\complement}|-1&0&&...& &0\\
-1&|\Lambda^{\complement}|-2& 0&..& &0\\..&-1&|\Lambda^{\complement}|-3&&&0\\&&&&&\\..&&&&&0\\&&&&&1\\-1&-1&...&&&-1\end{pmatrix}$}\end{equation}which reveals a zero-sum column structure. In other words, the basis ${\bf W}\in\mathbb{R}^{|\Lambda^{\complement}|\times(|\Lambda^{\complement}|-1)}$ is spanned by atoms whose entries sum to 0. Hence, we have
\[N({\bf \Psi}_{\Lambda}{\bf L})=z{\bf 1}_N+{\bf L}^{\dagger}{\bf \Psi}^T_{\Lambda^{\complement}}{\bf W}{{\bf c}},\] for arbitrary coefficient vector ${\bf c}\in\mathbb{R}^{|\Lambda^{\complement}|-1}$ and $z\in\mathbb{R}$. 
\end{proof}
\noindent It becomes evident that Prop.\ \ref{prop1} facilitates a synthesis (generative) representation of an analysis subspace which establishes a first relation between the two models. 
\begin{rmk} Provided $|\Lambda|<N$, we discover that the matrix ${\bf \Psi}_{\Lambda}{\bf L}$ has full row-rank $|\Lambda|$, and accordingly the basis for $N({\bf \Psi}_{\Lambda}{\bf L})$ has column rank $N-|\Lambda|=|\Lambda^{\complement}|$ and is spanned by $N({\bf L})$ of rank 1 and, for $|\Lambda|<N-1$, ${\bf L}^{\dagger}{\bf \Psi}^T_{\Lambda^{\complement}}{\bf W}$ of rank $|\Lambda^{\complement}|-1$.\end{rmk}
A closer examination of the F.A. constraint on the solution subspaces in Prop.\ \ref{prop1} reveals a fundamental connection to the transposed incidence matrix ${\bf S}^T$, whose Gram matrix is given by ${\bf L}$: 
\begin{rmk}\label{rems} The constraint imposed by the F.A. is encapsulated in the zero-sum column structure of ${\bf \Psi}^T_{\Lambda^{\complement}}{\bf W}$. The same constraint is mirrored by the columns of ${\bf S}^T\in\mathbb{R}^{|V|\times|E|}$, up to a weight factor $\sqrt{A_{i,j}}$ per column. In particular, the F.A. constraint can be alternatively (and more sparsely) expressed as a basis in $({\bf e}_i-{\bf e}_j)$ for any $i,j\in\Lambda^{\complement}\subset V$ which are connected via a path; since the graph is connected, this is stringently satisfied for any vertex pair in $V$. 
In fact, since there always exists a path between any pair of vertices in a connected graph, we can interchangeably express the constraint both with respect to ${\bf S}^T$ and basis ${\bf S}_P^T\in\mathbb{R}^{|V|\times |V|-1}$, which is the incidence matrix of the simple path graph. Hence, we have ${\bf \Psi}^T_{\Lambda^{\complement}}{\bf W}{\bf c}={\bf S}^T{\bf t}={\bf S}^T_P\tilde{{\bf t}}$ for suitable ${\bf c}\in\mathbb{R}^{|\Lambda^{\complement}|-1}$, ${\bf t}\in\mathbb{R}^{|E|}$, and $\tilde{{\bf t}}\in\mathbb{R}^{|V|-1}$, signifying a broad representation range.
\end{rmk}
\noindent For higher order operators ${\bf L}^k$, the generalization $N({\bf \Psi}_{\Lambda}{\bf L}^k)=z{\bf 1}_N +{\bf L}^{\dagger k}{\bf \Psi}_{\Lambda^{\complement}}^T{\bf W}{\bf c}$ trivially follows. Following the proof of Prop.\ \ref{prop1} and specifically employing the F.A. constraint, we can further provide a qualitative decomposition of the known graph theoretical result $N({\bf S}_{\Lambda})=span\{{\bf 1}_{C_1},...,{\bf 1}_{C_t}\}$ for the incidence matrix: 
\begin{cor}\label{incnull}
Consider the disconnected graph $G_{\Lambda}=(V, \Lambda)$ with $t$ connected components, which arises from deleting all edges in $\Lambda^{\complement}\subset E$ with $E=\Lambda \cup \Lambda^{\complement}$, and let $\{C_k\}_{k=1}^t$ denote the corresponding vertex sets. Then we can express $N({\bf \Psi}_{\Lambda}{\bf S})=z{\bf 1}_N+span(\sum _{k\in C_i}{\bf S}^{\dagger}{\bf S}_k,\ i\in \lbrack 1\enskip t\rbrack)$, where the second term is a subspace of dimension $t-1$, whose spanning set may be taken over any $t-1$ components $C_i$.
\end{cor}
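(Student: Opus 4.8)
The plan is to mimic the proof of Prop.~\ref{prop1}, replacing the square operator ${\bf L}$ by the rectangular incidence matrix ${\bf S}$ and carefully tracking how the nullspace structure of ${\bf S}$ (rather than ${\bf L}$) enters via the Fredholm Alternative. First I would set up the problem: characterizing $N({\bf \Psi}_{\Lambda}{\bf S})$ amounts to solving ${\bf \Psi}_{\Lambda}{\bf S}{\bf u}={\bf 0}_{\Lambda}$, i.e.\ finding all ${\bf u}$ with ${\bf S}{\bf u}\in N({\bf \Psi}_{\Lambda})=R({\bf \Psi}_{\Lambda^{\complement}}^T)$. Since $G$ is connected, ${\bf 1}_N\in N({\bf S})\subset N({\bf \Psi}_{\Lambda}{\bf S})$ contributes the $z{\bf 1}_N$ term. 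For the remaining part, the Fredholm Alternative requires that any admissible right-hand side ${\bf b}={\bf S}{\bf u}$ lie in $R({\bf S})=N({\bf S}^T)^{\perp}$; writing ${\bf b}={\bf \Psi}_{\Lambda^{\complement}}^T{\bf w}$ for ${\bf w}\in\mathbb{R}^{|\Lambda^{\complement}|}$ (recall $\Lambda^{\complement}\subset E$ here, so $|\Lambda^{\complement}|$ is a number of edges), I would impose exactly the constraints $\langle {\bf b},{\bf n}\rangle=0$ for ${\bf n}$ ranging over a basis of $N({\bf S}^T)$. The key fact to invoke is that $N({\bf S}^T)$, the left-nullspace of the incidence matrix (equivalently the cycle space of $G$), has dimension $|E|-N+1$ (for connected $G$), which governs how many scalar constraints ${\bf w}$ must satisfy.

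Once the admissible ${\bf w}$ are identified, the minimum-norm solution is ${\bf u}={\bf S}^{\dagger}{\bf \Psi}_{\Lambda^{\complement}}^T{\bf w}={\bf S}^{\dagger}\sum_{j\in\Lambda^{\complement}} w_j {\bf S}{\bf e}_j'$ — wait, more precisely ${\bf \Psi}_{\Lambda^{\complement}}^T{\bf w}$ selects edge-indexed coordinates, so I should expand ${\bf u}=\sum_{j\in\Lambda^{\complement}} w_j\,{\bf S}^{\dagger}{\bf f}_j$ where ${\bf f}_j$ is the $j$-th canonical vector in $\mathbb{R}^{|E|}$. Then I would substitute back into ${\bf \Psi}_{\Lambda}{\bf S}{\bf u}$ and use ${\bf S}{\bf S}^{\dagger}={\bf P}_{R({\bf S})}$, the orthogonal projector onto $R({\bf S})$; the residual on the $\Lambda$-rows must vanish, which re-derives precisely the constraint set on ${\bf w}$ and pins down the dimension count. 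The natural bookkeeping here is the graph $G_{\Lambda}=(V,\Lambda)$: deleting the edges $\Lambda^{\complement}$ leaves $t$ connected components $C_1,\dots,C_t$, and the admissible combinations of the deleted-edge coordinates are exactly those that, when pushed through ${\bf S}^{\dagger}$, produce a vector constant on each $C_i$ — i.e.\ a linear combination of the indicator-type vectors $\sum_{k\in C_i}{\bf S}^{\dagger}{\bf S}_k$. I would show this spanning set has dimension $t-1$ (one relation being that summing over all components gives a multiple of ${\bf 1}_N$, already accounted for), and that it is independent of which $t-1$ of the $t$ components are chosen, since the $t$ vectors sum to a constant.

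The main obstacle I anticipate is the translation between the edge-side description (which edges are cut, captured by ${\bf \Psi}_{\Lambda^{\complement}}^T{\bf W}$-type constraints analogous to Prop.~\ref{prop1}) and the vertex-side description (constancy on components $C_i$), together with getting the dimension count exactly right: ${\bf S}$ is rank-deficient \emph{and} rectangular, so both $N({\bf S})$ (rank $1$) and $N({\bf S}^T)$ (rank $|E|-N+1$) participate, and one must be careful not to double-count the ${\bf 1}_N$ direction or the cycle-space constraints. I expect the cleanest route is to verify directly that each candidate vector $\sum_{k\in C_i}{\bf S}^{\dagger}{\bf S}_k$ indeed lies in $N({\bf \Psi}_{\Lambda}{\bf S})$ — using that ${\bf S}^{\dagger}{\bf S}={\bf I}_N-\tfrac{1}{N}{\bf J}_N$ and that, after deleting $\Lambda^{\complement}$, the rows of ${\bf S}$ indexed by $\Lambda$ only ``see'' vertices within a single component, so a vector constant on each $C_i$ is annihilated — and then a rank argument ($\dim N({\bf \Psi}_{\Lambda}{\bf S}) = N - \mathrm{rank}({\bf \Psi}_{\Lambda}{\bf S}) = N - (N-t) = t$, minus the $1$ for ${\bf 1}_N$, giving $t-1$) closes the claim. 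This rank identity, $\mathrm{rank}({\bf \Psi}_{\Lambda}{\bf S})=N-t$, is just the standard rank of the incidence matrix of the $t$-component graph $G_\Lambda$, stated in Sect.~2, so it may be quoted directly.
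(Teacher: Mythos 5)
Your proposal is correct, and its decisive second half proceeds by a genuinely different route than the paper. The paper works forward from the Fredholm Alternative: it asks which right-hand sides ${\bf b}={\bf \Psi}_{\Lambda^{\complement}}^T{\bf w}$ supported on the deleted edges can lie in $R({\bf S})=N({\bf S}^T)^{\perp}$, observes that unweighted sums $\sum_{k\in\tilde V}{\bf S}_k$ vanish exactly on edges internal to $\tilde V$, and concludes that matching the support to $\Lambda^{\complement}$ forces $\tilde V$ to be a connected component $C_i$ of $G_{\Lambda}$; the solution ${\bf u}={\bf S}^{\dagger}\sum_{k\in C_i}{\bf S}_k$ then drops out. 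Your first paragraph sketches this same F.A. machinery but, as you note, the translation from the edge-side constraint to the vertex-side answer is the sticking point, and your phrase ``exactly those that, when pushed through ${\bf S}^{\dagger}$, produce a vector constant on each $C_i$'' is not yet an argument. What rescues the proposal is your ``cleanest route'': verify directly that each $\sum_{k\in C_i}{\bf S}^{\dagger}{\bf S}_k=({\bf I}_N-\tfrac{1}{N}{\bf J}_N){\bf 1}_{C_i}={\bf 1}_{C_i}-\tfrac{|C_i|}{N}{\bf 1}_N$ is constant on every component and hence annihilated by each $\Lambda$-row of ${\bf S}$ (whose two nonzero entries sit in a single $C_j$), then close with the dimension count $\dim N({\bf \Psi}_{\Lambda}{\bf S})=N-\mathrm{rank}({\bf \Psi}_{\Lambda}{\bf S})=t$, since ${\bf \Psi}_{\Lambda}{\bf S}$ is precisely the incidence matrix of $G_{\Lambda}$ with rank $N-t$ as quoted in Sect.~2. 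That verification-plus-rank argument is complete, sidesteps the support-matching analysis entirely, and in fact proves something the paper's constructive derivation only asserts implicitly, namely that no further solutions exist; the paper's route, in exchange, exhibits the admissible F.A.-compatible sparse patterns explicitly, which is what the surrounding narrative about structured sparsity needs. Your only loose end is cosmetic: the $t$ vectors sum to ${\bf 0}$ exactly (not merely to a multiple of ${\bf 1}_N$), which is why any $t-1$ of them, together with ${\bf 1}_N$, give the full $t$-dimensional nullspace.
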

\begin{proof}
We have $N({\bf S})=z{\bf 1}_N\subset N({\bf \Psi}_{\Lambda}{\bf S})$ for $z\in\mathbb{R}$, as before. Further, we need to solve ${\bf \Psi}_{\Lambda}{\bf S} {\bf u}={\bf 0}_{\Lambda}$, or ${\bf S}{\bf u}=N({\bf \Psi}_{\Lambda})$, which has a solution ${\bf u}$ iff the F.A. constraint ${\bf \Psi}_{\Lambda^{\complement}}^T{\bf w} \perp {\bf n}_j, \forall {\bf n}_j\in N({\bf S}^H)$ is satisfied for some ${\bf w}\in\mathbb{R}^{|\Lambda^{\complement}|}$. \\
Since for columns ${\bf S}_k$, we have ${\bf S}_k\perp {\bf n}_j$, we require ${\bf \Psi}_{\Lambda^{\complement}}^T{\bf w}=\sum _{k\in \tilde{V}}{\bf S}_k$ for a suitable subset of vertices $\tilde{V}\subset V$ such that the support of $\sum _{k\in \tilde{V}}{\bf S}_k$ is in $\Lambda^{\complement}$. \\
In particular, we have that any unweighted sum of columns ${\bf S}_k$, respectively associated with indexed vertex $k\in \tilde{V}$, has zeros at edge positions which connect two vertices in the set, and non-zeros for every edge that only has one vertex in $\tilde{V}$. In order for this support to match the locations of the removed edge set $\Lambda^{\complement}$, and assuming the connected components $C_k$ of the graph $G_{\Lambda}$ are known, we need to consider the sums $\sum _{k\in C_i}{\bf S}_k,\enskip C_i\subset V$.
We thus have ${\bf u}\in span(\sum _{k\in C_i}{\bf S}^{\dagger}{\bf S}_k,\ i\in \lbrack 1\enskip t\rbrack)$, which is orthogonal to $N({\bf S})$; in order for the vectors to form a basis we require only $t-1$ components since ${\bf S}{\bf 1}_N={\bf 0}_{|E|}$.\\
It follows that the solution set is given by  $N({\bf S}_{\Lambda})=z{\bf 1}_N +span(\sum _{k\in C_i}{\bf S}^{\dagger}{\bf S}_k,\ i\in \lbrack 1\enskip t\rbrack)$, the second term spanning any $t-1$ components, from which it becomes evident that $\sum _{k\in C_i}{\bf S}^{\dagger}{\bf S}_k=\sum _{k\in C_i}({\bf I}_N-\frac{1}{N}{\bf J}_N)_k$ is piecewise constant, i.e. constant over each connected component in $C_i$. Combined with $N({\bf S})$, this can be transformed into the known basis $N({\bf S}_{\Lambda})=span\{{\bf 1}_{C_1},...,{\bf 1}_{C_t}\}$.
\end{proof}
\noindent It trivially follows from the above that $N({\bf \Psi}_{\Lambda}{\bf S}{\bf L}^k)={\bf 1}_N+{\bf L}^{\dagger k} span(\sum _{k\in C_i}{\bf S}^{\dagger}{\bf S}_k,\ i\in \lbrack 1\enskip t\rbrack)$ for the span over any $t-1$ connected components $C_i$, must hold for higher-order operators ${\bf S}{\bf L}^k$.\\
\\
{\bf Disconnected Graphs.}
Consider a graph $G=(V,E)$ which is not connected, instead consisting of $t$ connected components (subgraphs) $G_k=(C_k,E_k)$ with vertex sets $C_k$ such that $V=\bigcup_{k=1}^t C_k$. Hence, the analysis subspace $N({\bf \Psi}_{\Lambda}{\bf L})$ is modified by the underlying rank-deficiency and linear dependency structure of its graph Laplacian ${\bf L}$ as follows:
\begin{cor}\label{disc1}
The nullspace $N({\bf \Psi}_{\Lambda}{\bf L})$ for ${\bf L}$ of a disconnected graph $G=(V,E)$ with $t$ connected components and corresponding vertex sets $\{C_k\}_{k=1}^t$ is spanned by $N({\bf L})=\{{\bf 1}_{C_1},...,{\bf 1}_{C_t}\}$ of rank $t$ and ${\bf L}^{\dagger}{\bf \Psi}_{\Lambda^{\complement}}^T{\bf W}$, with ${\bf W}$ given by 
\begin{equation}\label{eq:newnull}{\bf W}=\begin{bmatrix} {\bf W}_1 & 0 &\dots&\\0 & {\bf W}_2 &0 &\dots\\ \dots&&&\\0 & \dots & &{\bf W}_t\\
\end{bmatrix},\end{equation}
of rank at least $|\Lambda^{\complement}|-t$, with ${\bf W}_k\in\mathbb{R}^{|\Lambda_k^{\complement}|\times|\Lambda_k^{\complement}|-1}$ as in Eq.\ (\ref{eq:const}) and $C_k=\Lambda_k^{\complement}\cup\Lambda_k$. If for each subset $\Lambda_k^{\complement}\subset \Lambda^{\complement}$ we assume $|\Lambda^{\complement}_k|\geq 1$, the rank becomes exact. 
\end{cor}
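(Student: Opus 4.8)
The plan is to replay the proof of Proposition~\ref{prop1} blockwise. The key observation is that for a disconnected graph the Laplacian is, after the componentwise relabelling, block diagonal, ${\bf L}=\mathrm{diag}({\bf L}_1,\dots,{\bf L}_t)$ with ${\bf L}_k$ the Laplacian of $G_k$; hence ${\bf L}^{\dagger}=\mathrm{diag}({\bf L}_1^{\dagger},\dots,{\bf L}_t^{\dagger})$, the nullspace $N({\bf L})=N({\bf L}^T)=\mathrm{span}\{{\bf 1}_{C_1},\dots,{\bf 1}_{C_t}\}$ has rank $t$, and the orthogonal projector onto $R({\bf L})=N({\bf L})^{\perp}$ is ${\bf L}{\bf L}^{\dagger}={\bf I}_N-\sum_{k=1}^t\frac{1}{|C_k|}{\bf J}_{C_k}$, where ${\bf J}_{C_k}$ denotes the $N\times N$ all-ones matrix supported on the $C_k\times C_k$ block. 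Thus the single Fredholm-Alternative condition of Proposition~\ref{prop1} is replaced by $t$ orthogonality conditions, one per connected component, which decouple across the blocks.

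First I would note that, since $N({\bf \Psi}_{\Lambda})=R({\bf \Psi}^T_{\Lambda^{\complement}})$ while ${\bf L}{\bf u}\in R({\bf L})$ automatically, one has ${\bf u}\in N({\bf \Psi}_{\Lambda}{\bf L})$ iff ${\bf L}{\bf u}\in R({\bf \Psi}^T_{\Lambda^{\complement}})\cap R({\bf L})$. As in Proposition~\ref{prop1}, $N({\bf L})\subseteq N({\bf \Psi}_{\Lambda}{\bf L})$; for the remainder I posit ${\bf u}={\bf L}^{\dagger}{\bf \Psi}^T_{\Lambda^{\complement}}{\bf w}$ and evaluate ${\bf \Psi}_{\Lambda}{\bf L}{\bf u}$ via the projector identity and ${\bf \Psi}_{\Lambda}{\bf \Psi}^T_{\Lambda^{\complement}}={\bf 0}$. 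Writing $\Lambda_k=\Lambda\cap C_k$, $\Lambda_k^{\complement}=\Lambda^{\complement}\cap C_k$ and letting ${\bf w}_{(k)}$ collect the entries of ${\bf w}$ indexed by $\Lambda_k^{\complement}$, this yields ${\bf \Psi}_{\Lambda}{\bf L}{\bf u}=-\sum_{k=1}^t\frac{1}{|C_k|}{\bf J}_{\Lambda_k,\Lambda_k^{\complement}}{\bf w}_{(k)}$, which is block-structured over the row index set $\Lambda=\bigcup_k\Lambda_k$. Setting it to ${\bf 0}_{\Lambda}$ forces the entries of each ${\bf w}_{(k)}$ to sum to zero, i.e.\ ${\bf w}_{(k)}\in N({\bf J}_{\Lambda_k,\Lambda_k^{\complement}})$; equivalently, by the Fredholm Alternative, ${\bf \Psi}^T_{\Lambda^{\complement}}{\bf w}\perp{\bf 1}_{C_k}$ for every $k$. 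The admissible ${\bf w}$ therefore form the column span of the block-diagonal ${\bf W}$ of Eq.~(\ref{eq:newnull}), each ${\bf W}_k$ being the zero-sum basis of $\mathbb{R}^{|\Lambda_k^{\complement}|}$ of Eq.~(\ref{eq:const}) (a block being simply absent when $|\Lambda_k^{\complement}|=0$, and the zero-sum form of ${\bf W}_k$ entailing no loss of generality when $\Lambda_k=\emptyset$, since ${\bf L}_k^{\dagger}$ kills the ${\bf 1}_{C_k}$-direction anyway). Conversely, for any ${\bf w}={\bf W}{\bf c}$ the vector ${\bf \Psi}^T_{\Lambda^{\complement}}{\bf W}{\bf c}$ lies in $R({\bf L})$ (it is supported on $\Lambda^{\complement}$ and zero-sum on every $C_k$), so the Fredholm Alternative is met and ${\bf u}={\bf L}^{\dagger}{\bf \Psi}^T_{\Lambda^{\complement}}{\bf W}{\bf c}+{\bf n}$, ${\bf n}\in N({\bf L})$, exhausts the solutions. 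Hence $N({\bf \Psi}_{\Lambda}{\bf L})$ is spanned by $\{{\bf 1}_{C_1},\dots,{\bf 1}_{C_t}\}$ together with the columns of ${\bf L}^{\dagger}{\bf \Psi}^T_{\Lambda^{\complement}}{\bf W}$, the sum being direct because the second part lies in $R({\bf L}^{\dagger})=R({\bf L})=N({\bf L})^{\perp}$.

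For the rank count: the $N({\bf L})$-part contributes exactly $t$. For the second part, ${\bf \Psi}^T_{\Lambda^{\complement}}$ has full column rank and ${\bf L}^{\dagger}$ is injective on $R({\bf L})\supseteq R({\bf \Psi}^T_{\Lambda^{\complement}}{\bf W})$, so its rank equals $\mathrm{rank}({\bf W})=\sum_{k:\,|\Lambda_k^{\complement}|\ge1}(|\Lambda_k^{\complement}|-1)$. Writing $s$ for the number of components entirely contained in $\Lambda$ (those with $|\Lambda_k^{\complement}|=0$), this equals $|\Lambda^{\complement}|-t+s\ge|\Lambda^{\complement}|-t$, with equality iff $s=0$, i.e.\ iff $|\Lambda_k^{\complement}|\ge1$ for every $k$, which is precisely the asserted exactness condition.

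The main obstacle I anticipate is purely bookkeeping: relative to Proposition~\ref{prop1} the only genuinely new phenomenon is a connected component fully swallowed by the cosupport $\Lambda$ (one with $|\Lambda_k^{\complement}|=0$), which suppresses a whole block of ${\bf W}$ and thereby raises $\mathrm{rank}({\bf W})$ above $|\Lambda^{\complement}|-t$ by one per such component; this is exactly the source of the ``at least'' in the statement, and it disappears under the hypothesis $|\Lambda_k^{\complement}|\ge1$ for all $k$. Everything else — the block-diagonal decoupling of the $t$ Fredholm constraints, the zero-sum column structure of each ${\bf W}_k$, the directness of the sum, and the rank-preservation under ${\bf L}^{\dagger}$ and ${\bf \Psi}^T_{\Lambda^{\complement}}$ — is the verbatim componentwise analogue of the connected case and requires no new ideas.
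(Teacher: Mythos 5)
Your proof is correct and follows essentially the same route as the paper's: block-diagonalize ${\bf L}$, ${\bf L}^{\dagger}$ and the projector ${\bf L}{\bf L}^{\dagger}$ over the $t$ components, decouple the Fredholm Alternative into one zero-sum constraint per block to obtain the block-diagonal ${\bf W}$, and count ranks. Your explicit formula $\mathrm{rank}({\bf W})=|\Lambda^{\complement}|-t+s$ (with $s$ the number of components swallowed by $\Lambda$) is in fact a slightly sharper bookkeeping of the degenerate cases than the paper's dimension-counting argument, but the underlying argument is the same.
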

\noindent \textit{Proof.} See \ref{appa}
\begin{rmk}
If there exists some set $\Lambda_k^{\complement}=\emptyset$, ${\bf W}_k$ is empty and the dimension of ${\bf W}\in\mathbb{R}^{|\Lambda^{\complement}|\times |\Lambda^{\complement}|-t+1}$ increases accordingly by one, so that ${\bf L}^{\dagger}{\bf \Psi}_{\Lambda^{\complement}}^T{\bf W}$ has rank $|\Lambda^{\complement}|-t+1$. 
For ${\bf W}_k$ to be non-empty, we require $|\Lambda_k^{\complement}|\geq 2$.
\end{rmk}
\noindent Further to Rem.\ \ref{rems}, for disconnected graphs the constrained solution space ${\bf L}^{\dagger}{\bf \Psi}_{\Lambda^{\complement}}^T{\bf W}$ can be alternatively expressed as 
\[{\bf L}^{\dagger}{\bf \Psi}_{\Lambda^{\complement}}^T{\bf W}{\bf c}=\begin{bmatrix} {\bf L}_1^{\dagger}{\bf S}_1^T & 0 &\dots&\\0 & {\bf L}_2^{\dagger}{\bf S}_2^T &0 &\dots\\ \dots&&&\\0 & \dots & & {\bf L}_t^{\dagger}{\bf S}_t^T\\
\end{bmatrix}\begin{bmatrix}{\bf t}_1\\{\bf t}_2\\ \dots \\{\bf t}_t\\\end{bmatrix}\]
with corresponding ${\bf c}\in\mathbb{R}^{|\Lambda^{\complement}|-t}$, where ${\bf S}_i$ denote the incidence matrices of ${\bf L}_i={\bf S}_i^T{\bf S}_i$ per connected component, and the block-wise coefficient vectors ${\bf t}_i\in\mathbb{R}^{|E_i|}$ are chosen such that ${\bf S}^T_i{\bf t}_i=\tilde{{\bf \Psi}}_{\Lambda_i^{\complement}}^T{\bf W}_i{\bf c}_i$ for suitable ${\bf c}_i\in\mathbb{R}^{|\Lambda_i^{\complement}|-1}$ and sampling matrix $\tilde{{\bf \Psi}}_{\Lambda_i}\in\mathbb{R}^{|\Lambda_i|\times |C_i|}$. Further, we can more sparsely express each subgraph constraint $\tilde{{\bf \Psi}}_{\Lambda_i^{\complement}}^T{\bf W}_i$ as a basis in ${\bf e}_k-{\bf e}_j$ for any $k,j\in\Lambda_i^{\complement}\subset V$. Overall, the sparse vector of coefficients ${\bf \Psi}_{\Lambda^{\complement}}^T{\bf W}{\bf c}$ becomes block-wise sparse where each block (connected subgraph) with vertex set $C_i=\Lambda_i\cup \Lambda^{\complement}_i$ is associated with sparsity $|{\Lambda_i^{\complement}}|$ and its coefficients sum to zero, as a result of the inherent constraint of each set of basis functions ${\bf W}_i$. This type of graph-structured sparsity will be directly leveraged in the construction of desirable UoS models in Sects. $3.4$ and $5$.

\subsection{The Sparse Synthesis Model}
We proceed with the analysis of synthesis representation ${\bf x}={\bf L}^{\dagger}{\bf c}={\bf L}_{\Lambda^{\complement}}^{\dagger}{\bf c}_{\Lambda^{\complement}}$, where ${\bf c}\in\mathbb{R}^N$ is sparse with vertex support $\Lambda^{\complement}\in V$ on a connected graph $G=(V,E)$ and ${\bf D}={\bf L}^{\dagger}$ is a dictionary of graph signals, and explore its possible solution subspaces for different choices  of ${\bf c}$ as well as higher-order generalizations of ${\bf L}^{\dagger}$.\\
\\
In an effort to provide a more intuitive characterization of the synthesis subspace with span ${\bf L}_{\Lambda^{\complement}}^{\dagger}$, we consider the specific notion of \textit{discrete Green's functions}  and its relation to the MPP. According to \cite{green1}, the \textit{Green's function} (or \textit{matrix}) of an invertible operator constitutes its inverse; in case of a square singular operator ${\bf L}$, the Green's functions need to satisfy the conditions ${\bf L}{\bf L}^{\dagger}={\bf L}^{\dagger}{\bf L}={\bf I}_N-{\bf P}^H{\bf P}$, and ${\bf L}^{\dagger}{\bf P}={\bf 0}$, where ${\bf P}$ is the matrix of eigenvectors associated with the zero eigenvalue (or $N({\bf L})$), in order to be uniquely defined. This establishes a relation to the MPP, which can be interpreted as constrained Green's functions \cite{plonka2}.\\ 
In light of this interpretation, the synthesis representation ${\bf x}$ constitutes a linear combination of Green's functions\footnote{For simplicity, we will refer to both the matrix as well as its individual columns as Green's functions.}. Furthermore, when ${\bf x}$ is constrained to be sparse with respect to ${\bf L}$, the previous MPP constraints, in conjuction with the Gramian structure of ${\bf L}={\bf S}^T{\bf S}$, reveal an inherent constrained, or rather, \textit{structured sparsity} pattern for coefficient vector ${\bf c}$, whose non-zeros correspond to the locations of discontinuities of the underlying Green's functions. \\
In particular, we have ${\bf L}({\bf L}^{\dagger}{\bf S}_i^T)={\bf L}({\bf S}_i^{\dagger})={\bf S}_i^T$, which entails that sparse linear combinations of the elementary Green's functions (columns) ${\bf S}_i^{\dagger}$ are sparse with respect to analysis operator ${\bf L}$ with structured sparsity in the range of ${\bf S}^T$; more generally, any number of linear combinations of $\{{\bf S}_i^{\dagger}\}_{i\in E_S}$, whose index locations (sequence of edges $E_S$) form connected paths on $G$, can be combined to yield a sparse output. Since the columns of ${\bf S}^T$ are $2$-sparse this further implies that any piecewise smooth signal is at least $2$-sparse with respect to ${\bf L}$. It becomes evident that the analysis operation ${\bf L}{\bf x}={\bf c}$ directly characterizes the constrained synthesis representation ${\bf x}={\bf L}^{\dagger}\sum_{j\in E_S}{\bf S}^T_j=\sum_{j\in E_S}{\bf S}^{\dagger}_j$ with sparse pattern ${\bf c}=\sum_{j\in E_S}{\bf S}^T_j$. 
This insight, along with generalizations to higher order operators, is summarized in the following Lemma:
\begin{lem}\label{summ}
The signal ${\bf x}$ on connected graph $G=(V,E)$, which is piecewise-smooth with respect to a designated graph difference operator, can be characterized through weighted combinations of elementary Green's functions of the operator, for suitable weights $w_j\in\mathbb{R}$, and resulting  structured sparse vectors, as follows:\\
$(i)$ The signal ${\bf x}=\sum_{j\in E_S}w_j{\bf L}^{\dagger k}{\bf S}_j^T=\sum_{j\in E_S} w_j{\bf L}^{\dagger k-1}{\bf S}_j^{\dagger}$, for suitable edge subset $E_S\subset E$, is sparse with respect to ${\bf L}^k$ with sparse vector $\sum_{j\in E_S}w_j{\bf S}_j^T\in\mathbb{R}^{|V|}$.\\
$(ii)$ The signal ${\bf x}=\sum_{j\in V_S}w_j({\bf L}^{\dagger k})_j$, for suitable vertex subset $V_S\subset V$, is sparse with respect to ${\bf S}{\bf L}^k$ with sparse vector $\sum_{j\in V_S}w_j{\bf S}_j\in\mathbb{R}^{|E|}$.\\ 
Further, if $G$ is sufficiently sparse with $||{\bf c}||_0\ll N$ for ${\bf c}=\sum_{j\in V_S}w_j{\bf L}_j$:\\
$(iii)$ The signal ${\bf x}=\sum_{j\in V_S}w_j({\bf L}^{\dagger k-1})_j=\sum_{j\in V_S}w_j{\bf L}^{\dagger k}{\bf L}_j$, for suitable vertex subset $V_S\subset V$, is sparse with respect to ${\bf L}^{k}$ with sparse vector $\sum_{j\in V_S}w_j{\bf L}_j\in\mathbb{R}^{|V|}$. %
\end{lem}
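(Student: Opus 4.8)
The plan is to verify all three identities by direct computation, relying on three facts already available in the text: the projector relation ${\bf L}{\bf L}^{\dagger}={\bf L}^{\dagger}{\bf L}={\bf I}_N-\frac{1}{N}{\bf J}_N$ for connected graph Laplacians (as used in the proof of Prop.\ \ref{prop1}); the Gramian identity ${\bf L}={\bf S}^T{\bf S}$, equivalently ${\bf S}^{\dagger}={\bf L}^{\dagger}{\bf S}^T$, so that each elementary Green's function ${\bf S}_j^{\dagger}$ is the $j$-th column of ${\bf L}^{\dagger}{\bf S}^T$ and ${\bf L}^{\dagger k}{\bf S}_j^T={\bf L}^{\dagger(k-1)}{\bf S}_j^{\dagger}$; and the fact that ${\bf 1}_N$ spans $N({\bf S})=N({\bf L})=N({\bf L}^{\dagger})$, which gives ${\bf J}_N{\bf S}^T={\bf 0}$, ${\bf S}{\bf J}_N={\bf 0}$, ${\bf L}{\bf J}_N={\bf J}_N{\bf L}={\bf 0}$ and ${\bf L}^{\dagger m}{\bf 1}_N={\bf 0}$ for $m\geq 1$. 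Because ${\bf L}$ is symmetric it commutes with ${\bf L}^{\dagger}$, so ${\bf L}^{k}{\bf L}^{\dagger k}=({\bf L}{\bf L}^{\dagger})^{k}={\bf I}_N-\frac{1}{N}{\bf J}_N$, the last step because ${\bf I}_N-\frac{1}{N}{\bf J}_N$ is idempotent; this single identity drives all three cases.

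For $(i)$ I would apply ${\bf L}^{k}$ to ${\bf x}=\sum_{j\in E_S}w_j{\bf L}^{\dagger k}{\bf S}_j^T$, obtaining ${\bf L}^{k}{\bf x}=\sum_{j\in E_S}w_j({\bf I}_N-\frac{1}{N}{\bf J}_N){\bf S}_j^T=\sum_{j\in E_S}w_j{\bf S}_j^T$, since each column of ${\bf S}^T$ is zero-sum; as ${\bf S}_j^T$ is $2$-sparse (the two endpoints of edge $j$), the resulting vector has at most $2|E_S|$ nonzeros, fewer when $E_S$ forms connected paths (cancellation at shared vertices), so it is sparse precisely when $E_S$ is ``suitable'', i.e.\ small. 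For $(ii)$ I would write ${\bf x}=\sum_{j\in V_S}w_j({\bf L}^{\dagger k})_j=\sum_{j\in V_S}w_j{\bf L}^{\dagger k}{\bf e}_j$ and apply ${\bf S}{\bf L}^{k}$, which collapses to ${\bf S}{\bf L}^{k}{\bf x}=\sum_{j\in V_S}w_j{\bf S}({\bf I}_N-\frac{1}{N}{\bf J}_N){\bf e}_j=\sum_{j\in V_S}w_j{\bf S}{\bf e}_j=\sum_{j\in V_S}w_j{\bf S}_j$ via ${\bf S}{\bf 1}_N={\bf 0}$, where ${\bf S}_j$, the $j$-th column of ${\bf S}$, is $\deg(j)$-sparse. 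For $(iii)$ I would use ${\bf x}=\sum_{j\in V_S}w_j{\bf L}^{\dagger k}{\bf L}{\bf e}_j$, observe that ${\bf L}^{\dagger k}{\bf L}{\bf e}_j={\bf L}^{\dagger(k-1)}({\bf I}_N-\frac{1}{N}{\bf J}_N){\bf e}_j$ agrees with $({\bf L}^{\dagger(k-1)})_j$ modulo the DC term $\frac{1}{N}{\bf L}^{\dagger(k-1)}{\bf 1}_N\in N({\bf L})$ (which vanishes for $k\geq 2$), and then apply ${\bf L}^{k}$, using ${\bf L}^{k}{\bf L}^{\dagger k}{\bf L}=({\bf I}_N-\frac{1}{N}{\bf J}_N){\bf L}={\bf L}$, to obtain ${\bf L}^{k}{\bf x}=\sum_{j\in V_S}w_j{\bf L}_j$.

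The only genuine content --- and the reason $(iii)$ carries an extra hypothesis --- is the sparsity bookkeeping at the end of each case. In $(i)$ and $(ii)$ the generators ${\bf S}_j^T$ and ${\bf S}_j$ are intrinsically sparse ($2$-sparse, resp.\ $\deg(j)$-sparse), so the conclusion $||{\bf L}^k{\bf x}||_0\ll N$ (resp.\ $||{\bf S}{\bf L}^k{\bf x}||_0\ll|E|$) follows automatically once $|E_S|,|V_S|\ll N$ and the relevant degrees are bounded; but in $(iii)$ the coefficient vector is $\sum_{j\in V_S}w_j{\bf L}_j$ and a column ${\bf L}_j$ is only $(\deg(j)+1)$-sparse, so one genuinely needs $G$ sufficiently sparse, exactly the stated condition $||{\bf c}||_0\ll N$ with ${\bf c}=\sum_{j\in V_S}w_j{\bf L}_j$. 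I would close by recording, as in Rem.\ \ref{rems}, that the sparse vectors inherit the structure of their generators --- range of ${\bf S}^T$ (zero-sum) in $(i)$, range of ${\bf S}$ in $(ii)$ --- which is the structured sparsity later exploited in Sects.\ $3.4$ and $5$. I anticipate no obstacle beyond this: all three cases are routine matrix algebra organized around the single projector ${\bf I}_N-\frac{1}{N}{\bf J}_N$.
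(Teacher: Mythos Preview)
Your proposal is correct and follows essentially the same route as the paper: both proofs reduce everything to the projector identity ${\bf L}^{k}{\bf L}^{\dagger k}=\left({\bf I}_N-\frac{1}{N}{\bf J}_N\right)^{k}={\bf I}_N-\frac{1}{N}{\bf J}_N$ and then kill the ${\bf J}_N$-term via ${\bf J}_N{\bf S}^T={\bf 0}$, ${\bf S}{\bf J}_N={\bf 0}$, ${\bf L}{\bf J}_N={\bf 0}$ in cases $(i)$--$(iii)$ respectively. Your treatment is in fact slightly more explicit than the paper's in two places --- you spell out the idempotence of the projector and you flag that the identity ${\bf L}^{\dagger k}{\bf L}={\bf L}^{\dagger(k-1)}$ in $(iii)$ only holds literally for $k\geq 2$ (the $k=1$ case differing by a harmless constant) --- while the paper additionally records the SVD verification of ${\bf S}^{\dagger}={\bf L}^{\dagger}{\bf S}^T$ that you simply cite.
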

\noindent \textit{Proof}. See Appendix.\\
\\
The piecewise-smooth graph signals in Lemma \ref{summ} can also be interpreted as the splines of their defining operator \cite{splinesn}, with knots given by the support of the induced sparse pattern. Since we are exclusively operating in the discrete domain and the operators are rank deficient, the connection with splines originating from continuous Green's functions is not strict.\footnote{In contrast to classical Green's functions and splines, we note that while the discrete Green's functions of ${\bf L}$ are given by ${\bf L}^{\dagger}$, its splines are in the range of ${\bf S}^{\dagger}={\bf L}^{\dagger}{\bf S}^T$, marking a difference in order.} 
For simplicity, we will use the term \textit{knots} here more generally to refer to the discontinuities of (any linear combination of) discrete Green's functions of rank-deficient operators.  As an aside, the nullspace of the operators at hand also forms part of the space of piecewise-smooth signals but is not directly associated to knots.\\
\noindent It thus becomes evident that the non-zero entries of the columns of ${\bf S}^T$, and, where applicable, of ${\bf S}$ and ${\bf L}$, assume the role of \textit{knots} of the Green's functions and thereby define an underlying structured sparsity model for graphs. Thus choosing ${\bf c}$ in the range of ${\bf S}^T$ (and if applicable, ${\bf L}$) provides a constrained synthesis representation of the form ${\bf x}={\bf L}^{\dagger k}{\bf c}$ which is analysis-sparse, and therefore also contained in the corresponding analysis model. 
\begin{rmk} \label{typerm}The relation ${\bf L}^{\dagger}{\bf L}={\bf I}_N-\frac{1}{N}{\bf J}_N$ constitutes the projection onto $N({\bf L})^{\perp}$. Accordingly, we deduce that the synthesis representations ${\bf L}^{\dagger}{\bf L}_j$ and ${\bf L}^{\dagger}{\bf S}^T_j={\bf S}^{\dagger}_j$, which can be respectively annihilated by ${\bf L}$ with resulting patterns ${\bf L}_j$ (if $G$ is sufficiently sparse) and ${\bf S}^T_j$, encapsulate different orders of smoothness. Here, ${\bf L}^{\dagger}{\bf L}_j$, which is also sparse with respect to the lower order operator ${\bf S}$ with pattern ${\bf S}_j$, is one degree less smooth than ${\bf S}^{\dagger}_j$ and two degrees less smooth than ${\bf L}^{\dagger}_j$. This interpretation can be generalized to higher order operators. \end{rmk}
The synthesis representation ${\bf x}$ is further sparse with respect to the analysis operator ${\bf S}{\bf L}^k$ with sparsity pattern in the range of ${\bf S}$; nevertheless, its basis atoms, consisting of edge signals ${\bf S}_j\in\mathbb{R}^{|E|}$, do not lie in the same domain, providing an intermediate analysis stage. In particular, ${\bf S}{\bf L}^{k-1}$ and ${\bf L}^k$ signify incremental orders of annihilation, which will be more clearly substantiated for circulant graphs in Sect. $4$. For arbitrary ${\bf c}$, provided the graph at hand sufficiently sparse, the analysis of ${\bf x}$ with ${\bf L}^{k+m}$ and ${\bf S}{\bf L}^{k+m}$ for $m,k\in\mathbb{Z}^{\geq 0}$, creates the constrained sparse vectors ${\bf L}^{m}{\bf c}$ and ${\bf S}{\bf L}^{m}{\bf c}$ respectively.

\subsection{The graph Laplacian UoS model}
In light of previous derivations, we summarize the structural relation between the graph Laplacian-based signal models for a connected graph:
\begin{thm}\label{thm331} For a connected graph, the cosparse analysis model, described by $N({\bf \Psi}_{\Lambda}{\bf L})$, constitutes a constrained case of the sparse synthesis model, generated by ${\bf L}^{\dagger}{\bf \Psi}_{\Lambda^{\complement}}^T$, with respect to matrix ${\bf W}\in\mathbb{R}^{|{\Lambda^{\complement}}|\times {\Lambda^{\complement}}-1}$ (see Eq.\ (\ref{eq:const})) which satisfies $\langle{\bf \Psi}_{\Lambda^{\complement}}^T{\bf W}{\bf e}_j {,} N({\bf L})\rangle=0,\ \ j=0, ..., |\Lambda^{\complement}|-2$, up to a translation by $N({\bf L})=z{\bf 1}_N,\ z\in\mathbb{R}$. 
\end{thm}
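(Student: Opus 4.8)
The plan is to derive the statement as essentially a repackaging of Proposition~\ref{prop1}, made precise by isolating the three structural ingredients named in the theorem: the synthesis dictionary ${\bf L}^{\dagger}{\bf \Psi}_{\Lambda^{\complement}}^T$, the coefficient constraint encoded by ${\bf W}$, and the additive translation by $N({\bf L})$. First I would recall from Proposition~\ref{prop1} that for $|\Lambda|<N$ on a connected graph
\[
N({\bf \Psi}_{\Lambda}{\bf L}) = z{\bf 1}_N + {\bf L}^{\dagger}{\bf \Psi}_{\Lambda^{\complement}}^T{\bf W}{\bf c}, \qquad z\in\mathbb{R},\ {\bf c}\in\mathbb{R}^{|\Lambda^{\complement}|-1}.
\]
The unconstrained synthesis model built from the dictionary ${\bf L}^{\dagger}{\bf \Psi}_{\Lambda^{\complement}}^T$ is, by definition, the set $\{{\bf L}^{\dagger}{\bf \Psi}_{\Lambda^{\complement}}^T{\bf v}:{\bf v}\in\mathbb{R}^{|\Lambda^{\complement}|}\}$; the analysis subspace is obtained from it by the single restriction ${\bf v}={\bf W}{\bf c}$, i.e.\ by forcing the coefficient vector into the $(|\Lambda^{\complement}|-1)$-dimensional column space of ${\bf W}$. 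This is precisely the sense in which the cosparse model is a \emph{constrained} instance of the sparse one, and it establishes the first half of the claim.

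Next I would make the constraint explicit. As shown in the proof of Proposition~\ref{prop1}, ${\bf W}=N({\bf J}_{|\Lambda|,|\Lambda^{\complement}|})$, so each column of ${\bf W}$ has entries summing to zero; since ${\bf \Psi}_{\Lambda^{\complement}}^T$ merely redistributes these entries to the coordinates indexed by $\Lambda^{\complement}$ and leaves the remaining coordinates zero, the column sums are preserved, giving
\[
\langle {\bf \Psi}_{\Lambda^{\complement}}^T{\bf W}{\bf e}_j,\ {\bf 1}_N\rangle = 0, \qquad j=0,\dots,|\Lambda^{\complement}|-2.
\]
Because the graph is connected, $N({\bf L})=\{z{\bf 1}_N:z\in\mathbb{R}\}$, so this orthogonality is exactly $\langle {\bf \Psi}_{\Lambda^{\complement}}^T{\bf W}{\bf e}_j,\ N({\bf L})\rangle=0$, as stated. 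I would also remark that this is nothing but the Fredholm Alternative condition for the solvability of ${\bf L}{\bf u}={\bf \Psi}_{\Lambda^{\complement}}^T{\bf W}{\bf c}$, using ${\bf L}={\bf L}^T$ so that $N({\bf L}^T)=N({\bf L})$; this ties the constraint back to its origin and explains why ${\bf W}$ is forced to carry this column structure rather than being an arbitrary basis of coefficients.

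Finally I would address the translation term. Since ${\bf L}$ is symmetric positive semidefinite, $N({\bf L}^{\dagger})=N({\bf L})$, hence ${\bf L}^{\dagger}{\bf 1}_N={\bf 0}_N$ and the dictionary ${\bf L}^{\dagger}{\bf \Psi}_{\Lambda^{\complement}}^T$ can never generate the constant vector on its own; the summand $z{\bf 1}_N$ is therefore a genuine additive component of the subspace, not one absorbed into the span. Collecting the three observations reproduces the displayed decomposition of Proposition~\ref{prop1} in the language of the theorem and completes the proof. I do not expect a real obstacle here — the content is already contained in Proposition~\ref{prop1}; the only care needed is to phrase ``constrained case'' rigorously (constraint $=$ membership in the column space of ${\bf W}$ $=$ zero-sum columns $=$ orthogonality to $N({\bf L})$ $=$ the F.A.\ condition) and to note explicitly that the $N({\bf L})$-translation is not redundant with the span of the dictionary.
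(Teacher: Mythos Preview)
Your proposal is correct and follows essentially the same approach as the paper, which simply states that the theorem ``follows from previous discussion'' --- i.e., from Proposition~\ref{prop1} and the surrounding remarks on the Fredholm Alternative and the zero-sum column structure of ${\bf W}$. You have unpacked that discussion more explicitly (including the observation that the $N({\bf L})$-translation is not absorbed into the dictionary span), but the content and route are the same.
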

\noindent \textit{Proof.}  Follows from previous discussion.\\
\\
Notably, the atoms of the synthesis subspaces (and of its constrained counterpart in the analysis model), are orthogonal to the translation vector $N({\bf L})$, confining its solution space.\\
\\
More formally, the signals ${\bf x}$ which satisfy the model constraint $||{\bf L}{\bf x}||_0=N-l$ (or ${\bf L}_{\Lambda}{\bf x}={\bf 0}_{\Lambda}$) for cosupport $\Lambda$ are elements of the \textit{analysis} union of subspaces $\bigcup_{\Lambda}\textit{W}_{\Lambda}$ over all subspaces $W_{\Lambda}$, defined as $W_{\Lambda}:=N({\bf L}_{\Lambda})$, of cardinality $|\Lambda|=l$, i.e. the subspaces of all signals which have cosparsity $l$ with respect to ${\bf L}$. Given a dictionary ${\bf L}^{\dagger}$, the vectors ${\bf x}$ that satisfy ${\bf x}={\bf L}^{\dagger}{\bf c}$ with $||{\bf c}||_0=k$ (or ${\bf x}={\bf L}^{\dagger}_{\Lambda^{\complement}}{\bf c}_{\Lambda^{\complement}}$), are elements of the \textit{synthesis} union of subspaces $\bigcup_{\Lambda^{\complement}}V_{\Lambda^{\complement}}$, where $V_{\Lambda^{\complement}}$ are $k$-dimensional subspaces spanned by $k$ columns from ${\bf L}_j^{\dagger}$, i.e. $V_{\Lambda^{\complement}}=span({\bf L}^{\dagger}_j ,j\in\Lambda^{\complement})$. \\
In direct comparison, consider the subspace $V_{\Lambda^{\complement}}$ of dimension $|\Lambda^{\complement}|=k$; for arbitrary support $\Lambda^{\complement}$ of fixed cardinality $k$, the synthesis model defines the union $\bigcup_{|\Lambda^{\complement}|=k} \textit{V}_{\Lambda^{\complement}}$ comprising ${N}\choose{k}$ subspaces. In contrast, the analysis model, excluding the nullspace $N({\bf L})$ and provided $|\Lambda|<N-1$, induces the subspace $W_{{\Lambda}}$ spanned by the columns ${\bf L}^{\dagger}({\bf e}_i-{\bf e}_j),\ i,j\in\Lambda^{\complement}$ (see Rem. \ref{rems}), whose union $\bigcup_{|\Lambda|=N-k} \textit{W}_{\Lambda}$, for arbitrary cosupport $\Lambda$ of fixed cardinality $l=N-k$, comprises the same number of subspaces ${N}\choose{k}$ of reduced dimension $k-1$. It becomes evident that, despite generating the same number of unique subspaces\footnote{Here, we assume for simplicity, the absence of degenerate cases, resulting from further linear dependencies.}, the analysis UoS model describes only a subset of elements of the synthesis UoS model as a result of the rank deficiency constraint and associated dimensionality reduction.   
In this case, we conclude $\bigcup_{|\Lambda|=N-k} \textit{W}_{\Lambda}\subseteq \bigcup_{|\Lambda^{\complement}|=k} \textit{V}_{\Lambda^{\complement}}$, which is in line with the result for operators in general position \cite{cos}.\\
Nevertheless, when $N({\bf L})$ of dimension $1$ is taken into account, the analysis model is translated and no longer contained in the synthesis model, while their respective subspace dimensions become identical. For the previous nesting property to hold, one therefore needs to consider the projection of the analysis subspaces onto the range of ${\bf L}$ via ${\bf L}{\bf L}^{\dagger}$. Table \ref{tab1} summarizes the differences in subspace distribution of the two models. Here, it is noteworthy that the subspace dimensions and number for both models are equivalent, just as in the case of an invertible square operator, except when $k=N-l=1$. 
\begin{table}[t]
\centering
\scalebox{0.85}{
\begin{tabular}{||c| c c c|| c c c||} 
\hline
&&Synthesis &&&Analysis&\\
 \hline
Sparsity & Dim. & Subsp. & No. & Dim. & Subsp. & No.\  \\
 \hline\hline
 $1$ &  $1$ & ${\bf L}^{\dagger}_j$ & $N$ & $1$ & ${\bf 1}_N$ & $1$  \\
 &&&&&&\\
$2$ & $2$ & $span({\bf L}^{\dagger}_j,j\in\Lambda^{\complement})$ & ${N}\choose{2}$ & $2$ & $span({\bf 1}_N;{\bf L}^{\dagger}({\bf e}_i-{\bf e}_j),i,j\in\Lambda^{\complement})$ & ${N}\choose{2}$ \\
&&&&&&\\
$k\ll N$& $k$& $span({\bf L}^{\dagger}_j,j\in\Lambda^{\complement})$  &${N}\choose{k}$ &$k$ & $span({\bf 1}_N;{\bf L}^{\dagger}({\bf e}_i-{\bf e}_j),i,j\in\Lambda^{\complement})$ &${N}\choose{k}$\\ 
\hline
\end{tabular}}
\caption{Subspace Characterization of ${\bf L}$ for a Connected Graph}
\label{tab1}
\end{table}
\\
{\bf Disconnected Graphs}.
When the graph is disconnected with $t$ components, the number of subspaces and their dimension is no longer globally consistent and depends on the distribution of the vertex sets $C_i=\Lambda_i\bigcup \Lambda_i^{\complement}$ per connected component of cardinality $N_i=|C_i|$, with $N=\sum_{i=1}^t N_i$ and $\Lambda=\bigcup_{i=1}^t \Lambda_i$.\\
For simplicity, let $t=2$ and assuming $k<N_i$ for the sparsity level $k$, the synthesis UoS model consists of ${N\choose{k}}$ possible unique subspaces $V_{\Lambda^{\complement}}$, with $|{\Lambda^{\complement}}|=k$, of dimension $k$.\\
Conversely, the analysis UoS model, excluding $N({\bf L})$, of subspaces ${W}_{\Lambda}$ with cosupport cardinality $|{\Lambda}|=l=N-k$, is split into ${{N_1}\choose{k-1}}+{{N_2}\choose{k-1}}+\sum_{k_1=2}^{k-2}{{N_1}\choose{k_1}}{{N_2}\choose{k-k_1}}$ subspaces of dimension $k-2$ and ${{N_1}\choose{k}}+{{N_2}\choose{k}}$ subspaces of dimension $k-1$, whose total sum is less than ${N\choose{k}}$, marking a reduction in comparison to the synthesis model. Further, the ${{N_1}\choose{k-1}}+{{N_2}\choose{k-1}}$ subspaces of dimension $k-1$ in fact generate signals of increased cosparsity $l+1=N-k+1$, despite the constraint $|\Lambda|=l$. These phenomena occur as a result of the bases in ${\bf L}_t^{\dagger}({\bf e}_i-{\bf e}_j),\ i,j \in \Lambda_t$ being empty for $|{\Lambda_t^{\complement}}|\leq 1$, and entails that subspace unions for a fixed $|\Lambda|$ share subspaces with others of higher/lower cardinality, but with the same cosparsity.
Here, we have employed the Chu-Vandermonde identity \cite{chu} \[{{N}\choose{k}}=\sum_{k_1=0}^k{{N_1}\choose{k_1}}{{N_2}\choose{k-k_1}}\enskip \text{with}\enskip N=N_1+N_2\enskip\text{and}\enskip k=k_1+k_2,\] to establish a reduction in the number of unique subspaces of fixed dimension, i.e. ${{N}\choose{k}}>{{N_1}\choose{k_1}}{{N_2}\choose{k_2}}$, when the sparsity level $k_i$ per subgraph is fixed. One may exclude the special cases with $k_i=0,1,\enskip i=1,2,$ by imposing $k_i\geq 2$ per connected component, resulting in $\sum_{k_1=2}^{k-2}{{N_1}\choose{k_1}}{{N_2}\choose{k-k_1}}$ subspaces of fixed, uniform dimension $k-2$ and cosparsity $l$. \\
Overall, it becomes evident that while the distribution of subspace dimension and number is location dependent in the analysis model, the total number of subspaces changes as well. When taking into account $N({\bf L})$, the total dimension of the analysis subspaces is equivalent to that of the synthesis subspaces, except for the cases with $|\Lambda_i^{\complement}|=0$, where the total dimension increases to $k+1$ in the analysis model. The comparison for $t=2$ is summarized in Table \ref{tab2}; here, the analysis model includes $N({\bf L})$.\\

\begin{table}[t]
\centering
\scalebox{0.8}{
\begin{tabular}{||c| c  c|| c||} 
\hline
&&Synthesis &Analysis\\
 \hline
$|\Lambda^{\complement}|$ & Dim. & No. & No. $\&$ Dim.  \\
 \hline\hline
 $1$ &  $1$&$N$ &$\# 1$ $N({\bf L})$ of dim. $2$ \\
$2$  & $2$ & ${{N}\choose{2}}$ & $\# 1$ $N({\bf L})$ of dim. $2$, ${{N_1}\choose{2}} +{{N_2}\choose{2}}$ of dim. $3$\\
$3$&$3$&${{N}\choose{3}}$& ${{N_1}\choose{2}} +{{N_2}\choose{2}}$ of dim. $3$, ${{N_1}\choose{3}} +{{N_2}\choose{3}}$ of dim. $4$\\
&&&\\
$k$& $k$&${{N}\choose{k}}$ & ${{N_1}\choose{k-1}}+{{N_2}\choose{k-1}}+\sum_{k_1=2}^{k-2}{{N_1}\choose{k_1}}{{N_2}\choose{k-k_1}}$ of dim. $k$, ${{N_1}\choose{k}}+{{N_2}\choose{k}}$ of dim. $k+1$\\
\hline
\end{tabular}}
\caption{Subspace Characterization of ${\bf L}$ for Graph with $t=2$ Connected Components}
\label{tab2}
\end{table}%
\noindent This analysis can be further generalized to graphs with $t$ connected components, as the formula admits the extension 
\[{{N}\choose{k}}=\sum_{k_1+k_2+...+k_t=k}{{N_1}\choose{k_1}}{{N_2}\choose{k_2}}...{{N_t}\choose{k_t}}.\]
In particular, provided $t<k<N_i$, the UoS of cardinality $|\Lambda|=N-k$, for a graph with $t$ connected components, comprises subspaces ranging from total dimension $k$ to the maximum $k+t-1$, and whose total number is necessarily smaller than ${{N}\choose{k}}$. This can be standardized by applying support constraints and we summarize this scenario in Table \ref{tab3}.\\
\begin{table}[t]
\centering
\scalebox{0.8}{
\begin{tabular}{||c| c c || c c||} 
\hline
&&Synthesis &&Analysis\\
 \hline
$|\Lambda^{\complement}|$ & Dim. & No. & Constr.& No. $\&$ Dim.  \\
 \hline\hline
 $1,...,t$&  $t$&${{N}\choose{t}}$ &$|\Lambda_i^{\complement}|\leq 1$ &$\#1$ $N({\bf L})=\{{\bf 1}_{C_1},...,{\bf 1}_{C_t}\}$ of dim. $t$ \\
 &&&&\\
$t<k<N_i$& $k$&${{N}\choose{k}}$ &  $|\Lambda_i^{\complement}|\geq 2$ & $\#1$ $N({\bf L})$ of dim. $t$ and \\
&&&&$\sum_{k_1+...+k_t=k,\ 2\leq k_i\leq k-2}{{N_1}\choose{k_1}}...{{N_t}\choose{k_t}}$ of dim. $k-t$\\
\hline
\end{tabular}}
\caption{Subspace Characterization of ${\bf L}$ for Graph with $t$ Connected Components (Constrained)}
\label{tab3}
\end{table}%
Further to Tables \ref{tab1} and \ref{tab2}, we note that the structure of the analysis subspaces constitutes a `reduced' version, and in the special case of circulant graphs, as will be shown in Sect.\ $4$, a `degree-reduced' version, of the synthesis subspaces by a step-size of $1$ for $t=1$ (see Rem. \ref{typerm}). For $t>1$, the analysis subspaces become piecewise-'reduced', consisting of $t$ pieces. Specifically, due to the block-diagonal structure of ${\bf L}^{\dagger}$, the generated subspaces in both models define piecewise distributed graph signals, which are piecewise-smooth with respect to the individual subgraphs; the discrepancy between the two models is maintained through the imposed block-wise constraints in the analysis model, which induce a (rank-)reduction. In general, it would appear that the higher the rank-deficiency of ${\bf L}$ (associated with decreased connectivity), the more structurally constrained its subspaces. \\
 \\
The linear constraints imposed in form of a rank-deficient analysis operator define a UoS signal model of piecewise smooth (Green's) functions with structured discontinuities which separate functions of (possibly) different type, following the application of constraints of variable order (see Rem.\ \ref{typerm}). 
Further, the basis in $({\bf e}_i-{\bf e}_j), \  i,j\in{\Lambda^{\complement}}$ of the analysis model, signifies a structured sparsity model whose linear dependencies (constraints) affect both the location and value of its coefficients; this will be further enlarged upon in Sect. $5$. %arising from the the structured rank deficiency of the operator at hand, 

\section{The (Co)sparse Signal Model on Circulant Graphs}
\subsection{Motivation: From Analysis to Synthesis}
Previous work on wavelets and sparsity on circulant graphs \cite{splinesw}, has focused on the derivation of annihilation properties of circulant graph difference operators, including the graph Laplacian ${\bf L}$ and the novel, generalized graph Laplacian design ${\bf L}_{\alpha}$. Serving as the pillars of the cosparse analysis model on circulant graphs, we proceed to show how these insights, in combination with derived closed-form expressions for the Green's functions of the (generalized) graph Laplacian in form of MPP ${\bf L}^{\dagger}$ (${\bf L}_{\alpha}^{\dagger}$), and the theory of Sect.\ $3$, can be extended to form a comprehensive and uniquely characterizable union of subspaces model. In an effort to further comprehend transitional properties, we further demonstrate how the arising discrepancies between the analysis and synthesis models can be concretely exemplified on the basis of the parametric generalized graph Laplacian ${\bf L}_{\alpha}$, which is nonsingular for certain $\alpha$ and singular otherwise. We initiate the discussion by intuitively demonstrating the relation between an analysis and synthesis view of the circulant graph Laplacian:\\
\\
Let ${\bf x}\in\mathbb{R}^N$ denote a linear polynomial signal, as per Def.\ $2.2$, with slope $\epsilon$ and a discontinuity between $x(0)$ and $x(N-1)$ on an undirected banded circulant graph of dimension $N$ with associated graph Laplacian ${\bf L}$, and consider the graph measurement vector ${\bf y}={\bf L}{\bf x}$. In \cite{splinesw} it was established that the representer polynomial $l(z)$ of ${\bf L}$ has two vanishing moments, i.e. it annihilates linear polynomials subject to a border effect dependent on the bandwidth of the graph. With $N({\bf L})={\bf 1}_N$, for the system to yield a solution, as per the Fredholm Alternative, we require $\langle {\bf y},{\bf 1}_N \rangle =0$. We directly evaluate ${\bf y}$ via simple algebra to be of the form
\[{\bf y}^T=\begin{bmatrix}
y_1 & y_2&\dots & y_M &0 & \dots 0 & -y_M &\dots&-y_2& -y_1\end{bmatrix}^T
\]
where $y_i=-N(\sum_{j=i}^M d_j)\epsilon$, and discover that the F.A. trivially holds. \\
\\
Let ${\bf L}^{\dagger}$ denote the MPP of ${\bf L}$, which is also symmetric and circulant, with columns ${\bf L}_j^{\dagger}$; accordingly, the solution ${\bf x}'$ to the graph Laplacian system can be more generally written in a \textit{sparse, constrained synthesis} formulation as:
\begin{equation}
{\bf x}'={\bf L}^{\dagger}{\bf y}+N({\bf L})=\sum_{j=1}^M y_j ({\bf L}_{j-1}^{\dagger}-{\bf L}_{N-j}^{\dagger})+z{\bf 1}_N,\enskip z\in\mathbb{R}.
\end{equation}
The particular structure of ${\bf y}$ and ${\bf L}^{\dagger}$ elucidate an interesting phenomenon, specific to circulant matrices: the column pairs $\{{\bf L}_j^{\dagger},{\bf L}_{N-j-1}^{\dagger}\}_j$ of ${\bf L}^{\dagger}$, whose pairwise difference is considered, constitute flipped versions of each other. In particular, when ${\bf L}_j^{\dagger}$ is a discrete polynomial with even order exponent $2k$, its vertically flipped version ${\bf L}_{N-j-1}^{\dagger}$ represents its reflection across a vertical line so that their difference removes all terms of that order; this is not satisfied in the case of  odd exponents. \\
In order to deduce the exact nature of the basis functions ${\bf L}_j^{\dagger}$, we proceed with the analysis of the simple cycle, as the base case for general circulant graphs, which will lay the foundation of all subsequent analysis.

\subsubsection{The simple cycle}
Consider the simple cycle graph $G_C=(V,E_C)$ with edge weight set to $d_1=1$, without loss of generality, and Laplacian representer polynomial $l_C(z)=(2-(z+z^{-1}))$, which corresponds to two vanishing moments. We deduce from the annihilation property of $l_C(z)$ and, hence, expression ${\bf x}=y_1 ({\bf L}_{C, 0}^{\dagger}-{\bf L}_{C, N-1}^{\dagger})+z{\bf 1}_N,\ z\in\mathbb{R}$ from Sect.\ $4.1$, that the difference between consecutive entries $\tilde{l}_i=L_C^{\dagger}(j,(i+j)_{N})$ of ${\bf L}_C^{\dagger}$, such as in
\[{\bf L}_{C, 0}^{\dagger}-{\bf L}_{C, N-1}^{\dagger}=\begin{bmatrix}
\tilde{l}_0-\tilde{l}_1\\
\tilde{l}_1-\tilde{l}_2\\
... \\
-(\tilde{l}_1-\tilde{l}_2)\\
-(\tilde{l}_0-\tilde{l}_1)\\
\end{bmatrix}
\]
must be linearly decreasing, which implies that ${\bf L}_{C, 0}^{\dagger}$ and ${\bf L}_{C, N-1}^{\dagger}$ must be of a quadratic polynomial form. In fact, an explicit expression for the MPP pseudoinverse ${\bf L}_C^{\dagger}$ of the simple cycle graph Laplacian has been derived as the sum of quadratic and piecewise linear terms:
\begin{propp}\label{propp1} (see Thm.\ $1$, \cite{green2}). The pseudoinverse ${\bf L}_C^{\dagger}$ of the (unnormalized) graph Laplacian of the simple cycle has entries
\newline $L_C^{\dagger}(i,j)=\frac{(N-1)(N+1)}{12N}-\frac{1}{2}|j-i|+\frac{(j-i)^2}{2N},\enskip\text{for}\enskip 0\leq i,j\leq N-1.$
\end{propp}
\noindent We discover that on the basis of this expression, one may re-derive the annihilation property of ${\bf L}_C$ and, by extension, ${\bf L}$. In particular, consider the columns $L_C^{\dagger}(i,j_1)$ and $L_C^{\dagger}(i,j_2)$, with variable $i$, $0\leq i \leq N-1$, and fixed but arbitrary $j_1$, $j_2$ $\in\lbrack 0\enskip N-1\rbrack$, such that
\[L_C^{\dagger}(i,j_1)-L_C^{\dagger}(i,j_2)=-\frac{1}{2}|j_1-i|+\frac{1}{2}|j_2-i|+\frac{(j_1^2-j_2^2)-2i(j_1-j_2)}{2N}\]
which is (piecewise) linear in $i$. Specifically, for $j_2<j_1$ (wlog), we have %the expression $L_C^{\dagger}(i,j_1)-L_C^{\dagger}(i,j_2)$ gives rise to the following piecewise linear form
\begin{equation}\label{eq:diff}L_C^{\dagger}(i,j_1)-L_C^{\dagger}(i,j_2)=\frac{(j_1^2-j_2^2)-2i(j_1-j_2)}{2N}+\scalebox{0.8}{$\left \{
  \begin{aligned}
    &\frac{1}{2}(j_2-j_1), && 0\leq i\leq j_2\\
&i-\frac{1}{2}(j_2+j_1), &&\ j_2< i\leq j_1\\
&-\frac{1}{2}(j_2-j_1), &&j_1< i\leq N-1\\
\end{aligned} \right.$}
\end{equation}
with discontinuities whose location depends on the specific choice of $j_1,j_2$. For increasing $j_2$ and decreasing $j_1$, this produces a $2$-piece linear polynomial with respective pieces on $j_2< i\leq j_1$, and, following circularity, on the joint interval of $j_1< i\leq N-1$ and $0\leq i\leq j_2$. When $j_1=N-1$ and $j_2=0$, we obtain for ${\bf L}_{C, N-1}^{\dagger}-{\bf L}_{C, 0} ^{\dagger}$ \begin{equation}\label{eq:sym1}L_C^{\dagger}(i,N-1)-L_C^{\dagger}(i,0)=-\frac{N-1}{2 N}+\frac{i}{N},\ 0\leq i \leq N-1,\end{equation} which returns the $1$-piece linear polynomial with a discontinuity at its endpoints, as the minimum possible number of discontinuities. 
This (anti-)symmetric pattern continues for all pairs chosen $j_1=N-1-j_2$, \\
\[L_C^{\dagger}(i,j_1)-L_C^{\dagger}(i,j_2)=\frac{(2i -N+1)(2 j_2-N+1)}{2N}+ \scalebox{0.8}{$\left \{
  \begin{aligned}
    & {j_2-(N-1)}/{2}, &&\ 0\leq i\leq j_2 \\
    &i-({N-1})/{2}, && j_2< i\leq N-1-j_2\\
    &-(j_2-{(N-1)}/{2}), && N-1-j_2< i\leq N-1\\
  \end{aligned} \right.$}
,\]
which gives rise to a $2$-piece linear polynomial which is anti-symmetric with respect to a vertical line. \\
\\
It becomes evident that simple differences of the Green's functions always annihilate the quadratic term; this further reaffirms our previous observation that the difference of pairs ${\bf L}^{\dagger}_j$ and ${\bf L}^{\dagger}_{N-j-1}$ from a symmetric circulant matrix ${\bf L}^{\dagger}$ induces a degree reduction when they are of polynomial form with even exponents.\\
\\
Moreover, the general signal structure $a L_C^{\dagger}(i,j_1)+b L_C^{\dagger}(i,j_2)$, for $a,b\in \mathbb{R}$ and $a\neq -b$, gives rise to piecewise quadratic polynomials. At last, when coefficient tuples of $3$ are chosen to be of the form of columns of ${\bf L}_C$, we have ${\bf L}_C^{\dagger}{\bf L}_C={\bf I}_N-\frac{1}{N}{\bf J}_N$, constituting a twofold degree reduction to a piecewise constant solution.\\
\\
In line with previous derivations, we state the complete subspace of the cosparse analysis model for the simple cycle graph $G_C=(V,E_C)$ as
\[N({\bf \Psi}_{\Lambda}{\bf L}_C)=z {\bf 1}_N+{\bf L}_C^{\dagger}{\bf \Psi}_{\Lambda^{\complement}}^T{\bf W}{\bf c}=z {\bf 1}_N+\sum_{j\in E_S}t_j({\bf S}^{\dagger}_C)_{j},\] 
where ${\bf S}^{\dagger}_C$ is the MPP of the incidence matrix ${\bf S}_C$ of $G_C$, for suitable edge sequence $E_S\subset E$, weights $t_j\in\mathbb{R}$, $z\in\mathbb{R}$ and coefficient vector ${\bf c}\in\mathbb{R}^{|{\Lambda^{\complement}}|-1}$.\\
\\
{\bf The incidence matrix}. We conclude by deriving a closed-form expression for ${\bf S}^{\dagger}_C$:
\begin{lem}\label{circlins}
On the simple cycle graph, the entries of the MPP ${\bf S}_C^{\dagger}$ of the circulant incidence matrix ${\bf S}_C$ with first row $\lbrack 1\ -1\ 0\ ...\ 0\rbrack$ are given by
\[S_C^{\dagger}(i,j)=\frac{N-1}{2N}-\frac{j-i}{N}\enskip\text{for}\enskip i\leq j,\enskip\enskip 0\leq i,j\leq N-1\]
and give rise to piecewise linear polynomial rows and columns.
\end{lem}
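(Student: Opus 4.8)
The plan is to compute $\mathbf{S}_C^{\dagger}$ directly from the already-established expression for $\mathbf{L}_C^{\dagger}$ in Property~\ref{propp1}, exploiting the Gram relation $\mathbf{L}_C = \mathbf{S}_C^T\mathbf{S}_C$. Since $\mathbf{S}_C$ has full column rank $N-1$ and its nullspace (as a linear map) is trivial on $N(\mathbf{S}_C)^{\perp}$, the standard identity $\mathbf{S}_C^{\dagger} = \mathbf{L}_C^{\dagger}\mathbf{S}_C^T$ holds: indeed $\mathbf{L}_C^{\dagger}\mathbf{S}_C^T$ is symmetric-compatible with the four Moore--Penrose conditions once one checks $\mathbf{S}_C(\mathbf{L}_C^{\dagger}\mathbf{S}_C^T) = \mathbf{S}_C\mathbf{S}_C^{\dagger}$ is the orthogonal projector onto $R(\mathbf{S}_C)$, which follows from $\mathbf{S}_C\mathbf{L}_C^{\dagger}\mathbf{S}_C^T\mathbf{S}_C = \mathbf{S}_C\mathbf{L}_C^{\dagger}\mathbf{L}_C = \mathbf{S}_C(\mathbf{I}_N - \tfrac1N\mathbf{J}_N) = \mathbf{S}_C$ (using $\mathbf{S}_C\mathbf{1}_N = \mathbf{0}$). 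So the first step is simply to invoke this identity; alternatively one can note that, as established in Remark~\ref{rems} and the surrounding discussion, $\mathbf{S}_C^{\dagger}$ is the matrix of elementary Green's functions $(\mathbf{S}_C^{\dagger})_j = \mathbf{L}_C^{\dagger}(\mathbf{S}_C^T)_j$, which is exactly this column-wise relation.

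Next I would carry out the entry-wise computation. The $j$-th column of $\mathbf{S}_C^T$ is $\mathbf{e}_j - \mathbf{e}_{j+1}$ (indices mod $N$, matching the first-row convention $[1\ {-}1\ 0\ \dots\ 0]$ so that edge $j$ joins vertex $j$ to vertex $j+1$). Hence
\[
S_C^{\dagger}(i,j) = L_C^{\dagger}(i,j) - L_C^{\dagger}(i,j+1).
\]
Plugging in Property~\ref{propp1}, the constant term $\tfrac{(N-1)(N+1)}{12N}$ cancels, and one is left with $-\tfrac12\big(|j-i| - |j+1-i|\big) + \tfrac{1}{2N}\big((j-i)^2 - (j+1-i)^2\big)$. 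The quadratic piece simplifies to $\tfrac{1}{2N}\big(-2(j-i) - 1\big) = -\tfrac{j-i}{N} - \tfrac{1}{2N}$, and for $i \le j$ the absolute-value piece gives $-\tfrac12\big((j-i) - (j+1-i)\big) = \tfrac12$. Adding these yields $S_C^{\dagger}(i,j) = \tfrac12 - \tfrac1{2N} - \tfrac{j-i}{N} = \tfrac{N-1}{2N} - \tfrac{j-i}{N}$, which is the claimed formula. One must be slightly careful at the wrap-around index $j = N-1$ (where $j+1$ becomes $0$), but circularity of $\mathbf{S}_C$ and $\mathbf{L}_C^{\dagger}$ makes the formula consistent there after reduction mod $N$; I would remark on this rather than belabour it.

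Finally, the piecewise-linear claim for rows and columns: fixing $j$ and varying $i$ over $0 \le i \le j$, the expression $\tfrac{N-1}{2N} - \tfrac{j-i}{N}$ is affine (slope $\tfrac1N$) in $i$; for $i > j$ one uses $|j-i| = i-j$ and an analogous simplification gives a second affine piece with slope $-\tfrac1N$ (equivalently, by the circulant symmetry $S_C^{\dagger}(i,j) = -S_C^{\dagger}(j,i)$ read off the anti-symmetry of the incidence matrix's pseudoinverse), so each column is a $2$-piece linear polynomial with a single discontinuity in slope; by symmetry of the circulant structure the same holds for rows. The main obstacle here is purely bookkeeping: getting the index conventions for the edge orientation and the mod-$N$ wrap-around to line up with Property~\ref{propp1} so that the cancellations are clean — there is no conceptual difficulty once the identity $\mathbf{S}_C^{\dagger} = \mathbf{L}_C^{\dagger}\mathbf{S}_C^T$ is in hand.
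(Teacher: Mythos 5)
Your proposal is correct, but it takes a different route from the paper's own proof of the lemma. The paper works directly from the defining relation $\mathbf{S}_C^{\dagger}\mathbf{S}_C=\mathbf{I}_N-\tfrac{1}{N}\mathbf{J}_N$: writing $S_C^{\dagger}(n):=S_C^{\dagger}(i,j)$ with $n=j-i$ (legitimate by circularity), it obtains the first-order recurrence $S_C^{\dagger}(n)-S_C^{\dagger}(n-1)=-\tfrac{1}{N}$ with the boundary condition $S_C^{\dagger}(0)-S_C^{\dagger}(N-1)=1-\tfrac{1}{N}$ and the zero-sum constraint $\sum_n S_C^{\dagger}(n)=0$ coming from $\mathbf{S}_C^{\dagger}\mathbf{1}_N=\mathbf{0}_N$, and solves this system to get $S_C^{\dagger}(n)=-\tfrac{n}{N}+\tfrac{N-1}{2N}$; this recurrence-relation technique is the same template the paper later reuses for $\mathbf{L}_{C,\alpha}^{-1}$ and $\mathbf{L}_{C,\alpha}^{\dagger}$. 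You instead invoke $\mathbf{S}_C^{\dagger}=\mathbf{L}_C^{\dagger}\mathbf{S}_C^T$ and difference consecutive columns of the known $\mathbf{L}_C^{\dagger}$ from Property~\ref{propp1} — which is precisely the \emph{alternative} derivation the paper itself records in Remark~\ref{syu} immediately after the lemma, so your route is sound and your algebra checks out (and your verification of the Penrose conditions for $\mathbf{L}_C^{\dagger}\mathbf{S}_C^T$ is a welcome addition the paper leaves implicit). The recurrence route buys independence from Property~\ref{propp1} and generalizes to the parametric operators; your route is shorter given that $\mathbf{L}_C^{\dagger}$ is already in hand. Two small slips in your closing paragraph, neither fatal: for $i>j+1$ the second affine piece $\tfrac{-N-1}{2N}-\tfrac{j-i}{N}$ has the \emph{same} slope $+\tfrac{1}{N}$ in $i$ as the first piece — the discontinuity is a jump in value at $i=j+1$, not a change of slope — and the asserted antisymmetry $S_C^{\dagger}(i,j)=-S_C^{\dagger}(j,i)$ does not hold exactly (the two sides differ by $\tfrac{1}{N}$), so the row statement should instead be read off from circularity, as the paper does.
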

\begin{proof}
From ${\bf S}_C^{\dagger}{\bf S}_C={\bf I}_N-\frac{1}{N}{\bf J}_N$ and letting $S_C^{\dagger}(n):=S_C^{\dagger}(i,j)$ with $n=j-i$ represent each entry as a function of the distance $n$ due to circularity, we obtain the recurrence relation $S_C^{\dagger}(n)-S_C^{\dagger}(n-1)=-\frac{1}{N}$ with boundary condition $S_C^{\dagger}(0)-S_C^{\dagger}(N-1)=1-\frac{1}{N}$ and, from ${\bf S}^{\dagger}{\bf 1}_N={\bf 0}_N$, the constraint $\sum_{n=0}^{N-1} S_C^{\dagger}(n)=0$. The solution is given by $S_C^{\dagger}(n)=-\frac{n}{N}+\frac{N-1}{2N}$. Since ${\bf S}_C$ is circulant (but not symmetric), we conduct the change of variable $n=j-i$
and obtain the desired result. 
While this describes that the rows and columns of the upper-triangular part of ${\bf S}_C^{\dagger}$ are piecewise linear, it follows from circularity that this also holds for the lower-triangular part.
\end{proof}
\begin{rmk} \label{syu} One can alternatively directly infer an expression for ${\bf S}^{\dagger}_C$ via ${\bf L}^{\dagger}_C{\bf S}^T_C={\bf S}^{\dagger}_C$ from 
\[{S}_C^{\dagger}(i,j)={L}_C^{\dagger}{S}_C^{T}(i,j)={L}_C^{\dagger}(i,j)-{L}_C^{\dagger}(i,j+1)=\scalebox{0.85}{$
\left \{
  \begin{aligned}
&\frac{N-1}{2N}-\frac{j-i}{N},&& \ 0\leq i\leq j\\
&     \frac{1-N}{2N},&& \ i=j+1\\
&\frac{-N-1}{2N}-\frac{j-i}{N},&& \ j+1< i\leq N-1\\
    %&-(j_2-{(N-1)}/{2}), && N-j_2< i\leq N-1\\
  \end{aligned} \right.$}
,\]
where  $0\leq j\leq N-2$. When $j=N-1$, we have ${S}_C^{\dagger}(i,N-1)=-\frac{N-1}{2N}+\frac{i}{N}$ on $0\leq i \leq N-1$ as in Eq. (\ref{eq:sym1}), due to circularity.
\end{rmk}
In contrast to taking differences between arbitrary columns of ${\bf L}_C^{\dagger}$, as seen in Eq.\ (\ref{eq:diff}), which generally results in piecewise linear polynomial solutions with at most two pieces, the special case of Rem.\ \ref{syu} where differences are consecutive, resulting in exclusively $1$-piece linear polynomials, presents an alternative solution space with respect to the incidence matrix.\\
Furthermore, we discover that taking arbitrary differences of the rows of ${\bf S}_C^{\dagger}$ produces piecewise-constant solutions:
\begin{lem}\label{diffind}
The differences between arbitrary rows $i_1$ and $i_2$ of ${\bf S}^{\dagger}_C$, with $i_1<i_2$ wlog, and variable $0\leq j\leq N-1$, produce piecewise-constant row-functions of the form
\[{S}_C^{\dagger}(i_1,j)-{S}_C^{\dagger}(i_2,j)=
\left \{
  \begin{aligned}
&\frac{i_1-i_2}{N},&& \ 0\leq j\leq i_1-1\\
%& -1+\frac{i_1-i_2}{N},&& \ j=i_1+1\\
&1+\frac{i_1-i_2}{N},&& \ i_1\leq j\leq i_2-1\\
%&\frac{i_1-i_2}{N},&& \  j=i_2+1\\
&\frac{i_1-i_2}{N},&& \ i_2\leq j\leq N-1\\
    %&-(j_2-{(N-1)}/{2}), && N-j_2< i\leq N-1\\
  \end{aligned} \right.
\]
with zero row-sum.
\end{lem}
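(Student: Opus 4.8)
The plan is to work directly from the closed-form expression for ${\bf S}_C^{\dagger}$ obtained in Lemma \ref{circlins}, and to read off each row-difference by a short case analysis on the position of the column index $j$ relative to $i_1$ and $i_2$. The first (and really the only delicate) step is to record the entry in a form valid without any case distinction on $i \le j$: since ${\bf S}_C$ is circulant, so is ${\bf S}_C^{\dagger}$, hence each entry depends only on the cyclic offset $n = (j-i) \bmod N \in \{0,\dots,N-1\}$, namely $S_C^{\dagger}(i,j) = \tfrac{N-1}{2N} - \tfrac{n}{N}$; for $i \le j$ this is exactly the formula in Lemma \ref{circlins}, and otherwise it is its circular continuation with $n = j-i+N$. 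Having this uniform description matters because the naive expression $\tfrac{N-1}{2N}-\tfrac{j-i}{N}$ is only correct in the no-wraparound regime.

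Next I would fix $i_1 < i_2$ and split $0 \le j \le N-1$ into the three intervals $[0,\,i_1-1]$, $[i_1,\,i_2-1]$, $[i_2,\,N-1]$, tracking the signs of $j-i_1$ and $j-i_2$ on each. On the first interval both offsets are negative, so both moduli pick up a $+N$, and forming $S_C^{\dagger}(i_1,j) - S_C^{\dagger}(i_2,j)$ cancels both the $j$-dependence and the two $N$'s, leaving $\tfrac{i_1-i_2}{N}$. On the middle interval $j-i_1 \ge 0$ while $j-i_2 < 0$, so only the second term wraps, contributing the extra $+N/N = 1$ and yielding $1 + \tfrac{i_1-i_2}{N}$. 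On the last interval neither offset wraps, and the difference is again $\tfrac{i_1-i_2}{N}$. In each case the value is independent of $j$, which is precisely the claimed piecewise-constant form with the stated three values.

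Finally, the zero row-sum follows from the constraint ${\bf S}_C^{\dagger}{\bf 1}_N = {\bf 0}_N$ already invoked in the proof of Lemma \ref{circlins} (equivalently, $R({\bf S}_C^{\dagger}) \perp {\bf 1}_N$ combined with circularity), so every individual row of ${\bf S}_C^{\dagger}$ sums to zero and hence so does any difference of two rows; alternatively one may simply add the three pieces, $i_1\cdot\tfrac{i_1-i_2}{N} + (i_2-i_1)\bigl(1+\tfrac{i_1-i_2}{N}\bigr) + (N-i_2)\cdot\tfrac{i_1-i_2}{N}$, and check it collapses to $0$. The hard part here is genuinely just the bookkeeping of the modular reductions in the middle step — making sure the signs of $j-i_1$ and $j-i_2$ are correctly assigned across the three intervals and that the empty-interval edge cases ($i_1=0$ or $i_2=N-1$) are handled by the same formulas — but this is routine once the circulant form of the entry is in hand.
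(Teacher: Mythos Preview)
Your proposal is correct and follows essentially the same approach as the paper: write down the explicit entry formula for $S_C^{\dagger}(i,j)$, do a case split on the position of $j$ relative to $i_1$ and $i_2$, and read off the constant values; the zero row-sum is then checked by the same direct summation you give. The only cosmetic difference is that the paper starts from the three-case expression in Remark~\ref{syu} (which singles out $j=i-1$ as a separate case before it collapses into the adjacent interval), whereas you compactify this into the single circulant formula $S_C^{\dagger}(i,j)=\tfrac{N-1}{2N}-\tfrac{(j-i)\bmod N}{N}$ and track the modular reductions --- arguably a tidier bookkeeping, but the same proof.
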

\noindent \textit{Proof.} See \ref{appa}.\\
\\
\noindent Similarly, one can show that taking differences between arbitrary columns of ${\bf S}^{\dagger}_C$ produces piecewise constant solutions. In view of Cor.\ \ref{incnull}, it is interesting to observe that the known solution space of $N({\bf \Psi }_{\Lambda}{\bf S}_C)$ with terms of the form ${\bf S}^{\dagger}_C ({\bf S}_C)_j$ is given through the degree-reduction of piecewise linear polynomials. \\
More generally, this entails that the solution ${\bf L}_C^{\dagger}{\bf c}$ subject to the double constraint ${\bf c}=(({\bf e}_i-{\bf e}_j)\ast ({\bf e}_k-{\bf e}_l))_{modN}$, which is the circular convolution modulo $N$ of two zero-sum constraint vectors, and provided at least one pair of $i, j \in V$, $k,l \in V$, is consecutive, gives piecewise constant solutions. The particular choice ${\bf c}=({\bf L}_{C})_j$ of support $3$, which has $2$ vanishing moments in the classical sense, produces the solution with least discontinuities ${\bf L}_C^{\dagger}({\bf L}_{C})_j=({\bf e}_j-\frac{1}{N}{\bf 1}_N)$.
\\
\noindent In summary, by reinterpreting the analysis description of the simple cycle graph Laplacian as a constrained synthesis formulation generated via the MPP and leveraging its closed-form expression as a discrete polynomial, we have arrived at a complete analytic characterization of the underlying functions defining the subspaces of the \textit{sparse synthesis model}. Here, arbitrary linear combinations of the columns of ${\bf L}_C^{\dagger}$ give rise to up to piecewise-quadratic polynomials, whose discontinuities depend on the particular choice of columns, or in other words, the locations of the sparse entries in ${\bf y}\in\mathbb{R}^N$ for synthesis representation ${\bf x}={\bf L}_C^{\dagger}{\bf y}$. When, furthermore, the amplitudes of the sparse entries $y_i$ are constrained to sum to $0$, which will be further enlarged upon in the next section, the representation ${\bf x}$ assumes a purely piecewise linear polynomial (or piecewise constant) form, which can also be described via the incidence matrix MPP ${\bf S}^{\dagger}_C$, subject to a translation by $N({\bf L}_C)$, within the \textit{cosparse analysis model} of ${\bf L}_C$. \\
\\
Notably, the developed relations establish a synthesis interpretation of (classical) vanishing moment constraints in form of analytical MPP expressions, revealing the underlying solution subspaces which govern the implicit analysis operation of ${\bf L}_C$, which is central as a high-pass operator to classical discrete wavelet analysis. Here, vanishing moments further represent the linear dependencies which determine the rank-deficiency of ${\bf L}_C$.

\subsection{General Circulant Graphs}
Equipped with the comprehensive theory for the simple cycle graph, we proceed to investigate signal models induced by difference operators on general undirected connected circulant  graphs, and begin by deriving a decomposition result for the graph Laplacian, which extends (Lemma $3.1$, \cite{splinesw}): 
\begin{lem}\label{lemdecomp}The graph Laplacian ${\bf L}$ of a circulant graph $G_S=(V,E)$, with generating set $S$ such that $s=1\in S$ and bandwidth $M<N/2$, can be decomposed as ${\bf L}={\bf P}_{G_S}{\bf L}_C$, where ${\bf P}_{G_S}$ is a circulant positive definite matrix of bandwidth $M-1$, which depends on $G_S$, with representer polynomial 
\[P_{G_S}(z)= \left(\sum_{i=1}^{M} i d_i \right)+\sum_{i=1}^{M-1} \left(\sum_{k=i+1}^{M}(k-i) d_k \right) (z^i+z^{-i})\]
and weights $d_i=A_{j,(i+j)_{N}}$, and ${\bf L}_C$ denotes the graph Laplacian of the unweighted simple cycle.
\end{lem}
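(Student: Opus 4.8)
The plan is to pass from matrices to their representer polynomials: since products of circulant matrices correspond to products of their generating Laurent polynomials, and since the bandwidth hypothesis $M<N/2$ rules out any wrap-around modulo $z^{N}-1$, the claimed factorisation ${\bf L}={\bf P}_{G_S}{\bf L}_C$ is equivalent to a genuine Laurent-polynomial identity $l(z)=P_{G_S}(z)\,l_C(z)$, where $l(z)$ is the representer polynomial of ${\bf L}$ and $l_C(z)=2-z-z^{-1}$ that of ${\bf L}_C$. Using the entrywise description of ${\bf L}={\bf D}-{\bf A}$ from Sect.\ $2.3$ (namely $l_0=\sum_{i=1}^{M}2d_i$ and $l_i=-d_i$), one first rewrites $l(z)=\sum_{i=1}^{M}d_i\,(2-z^{i}-z^{-i})$.

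Next I would factor each summand through the cycle polynomial. Since $2-z^{i}-z^{-i}=(1-z^{i})(1-z^{-i})$, $1-z^{i}=(1-z)(1+z+\dots+z^{i-1})$, $1-z^{-i}=(1-z^{-1})(1+z^{-1}+\dots+z^{-(i-1)})$, and $(1-z)(1-z^{-1})=l_C(z)$, we get $l(z)=l_C(z)\,Q(z)$ with $Q(z):=\sum_{i=1}^{M}d_i\,(1+z+\dots+z^{i-1})(1+z^{-1}+\dots+z^{-(i-1)})$. The remaining task is purely combinatorial: expanding the $i$-th bracketed product gives $i+\sum_{k=1}^{i-1}(i-k)(z^{k}+z^{-k})$ (the coefficient of $z^{\pm k}$ counts pairs $(a,b)$ with $0\le a,b\le i-1$ and $a-b=k$), and interchanging the order of summation in $\sum_{i=1}^{M}d_i\sum_{k=1}^{i-1}(i-k)(z^{k}+z^{-k})$ collects it into $\sum_{k=1}^{M-1}\big(\sum_{i=k+1}^{M}(i-k)d_i\big)(z^{k}+z^{-k})$. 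Together with the constant term $\sum_{i=1}^{M}i\,d_i$ this is exactly $P_{G_S}(z)$, so $Q=P_{G_S}$; in passing this shows ${\bf P}_{G_S}$ is symmetric circulant of bandwidth $M-1$, and since the Laurent degree of $P_{G_S}l_C$ equals $M<N/2$ the polynomial identity lifts faithfully to the matrix identity ${\bf L}={\bf P}_{G_S}{\bf L}_C$ (equivalently ${\bf L}_C{\bf P}_{G_S}$, as circulants commute).

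Finally I would establish positive definiteness of ${\bf P}_{G_S}$ from its DFT spectrum $\lambda_k({\bf P}_{G_S})=P_{G_S}(e^{2\pi i k/N})$. For $k\ne 0$ the factorisation gives $\lambda_k({\bf P}_{G_S})=\lambda_k({\bf L})/\lambda_k({\bf L}_C)$, and $\lambda_k({\bf L}_C)=2-2\cos(2\pi k/N)>0$ while $\lambda_k({\bf L})>0$ by connectivity of $G_S$, so $\lambda_k({\bf P}_{G_S})>0$. For $k=0$, $\lambda_0({\bf P}_{G_S})=P_{G_S}(1)=Q(1)=\sum_{i=1}^{M}i^{2}d_i\ge d_1>0$, since $1\in S$ forces $d_1>0$. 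Hence all eigenvalues of the symmetric circulant ${\bf P}_{G_S}$ are strictly positive, i.e.\ ${\bf P}_{G_S}$ is positive definite, which completes the proof.

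I do not expect a genuine obstacle here: the whole argument rests on the single algebraic observation $d_i(2-z^{i}-z^{-i})=d_i\,l_C(z)\,(1+\dots+z^{i-1})(1+\dots+z^{-(i-1)})$ together with a double-sum reindexing to match the stated form of $P_{G_S}(z)$; the one point requiring a word of care is invoking $M<N/2$ so that the Laurent-polynomial identity is equivalent to the circulant-matrix identity (no aliasing modulo $z^{N}-1$), while the bandwidth count and the positive-definiteness drop out of the representer polynomial and the circulant eigenvalue formula.
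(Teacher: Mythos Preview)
Your factorisation argument is essentially identical to the paper's: both write $l(z)=\sum_{i}d_i(2-z^{i}-z^{-i})$, pull out $l_C(z)=(1-z)(1-z^{-1})$ via the geometric-series factorisation of $1-z^{\pm i}$, and then expand the remaining product and reindex to obtain the stated $P_{G_S}(z)$.

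The one genuine difference is the positive-definiteness step. The paper argues structurally: each term $d_i\,(1+\dots+z^{i-1})(1+\dots+z^{-(i-1)})$ is a Gramian and hence gives a positive semidefinite circulant, while the $i=1$ term is exactly $d_1{\bf I}_N$ with $d_1>0$ (since $1\in S$), so the sum is positive definite. You instead compute the DFT eigenvalues directly, reading $\lambda_k({\bf P}_{G_S})=\lambda_k({\bf L})/\lambda_k({\bf L}_C)>0$ for $k\neq 0$ from connectivity of both graphs, and $\lambda_0({\bf P}_{G_S})=P_{G_S}(1)=\sum_i i^{2}d_i>0$ at $k=0$. Both routes are sound; your argument has the mild virtue of making explicit why the hypothesis $M<N/2$ matters (no aliasing modulo $z^{N}-1$, so the Laurent identity lifts to the matrix identity), which the paper leaves implicit.
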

\noindent \textit{Proof.} See \ref{appa}.\\
\\
\noindent Here, $s=1\in S$ is a sufficient condition which ensures that $G_S$ is connected. In the case of the simple cycle graph, we trivially have ${\bf P}_{G_S}={\bf I}_N$ up to a constant weight factor, and, in general, ${\bf P}_{G_S}$ encapsulates the additional connectivity information of ${G_S}$, creating the border effect in the annihilation of polynomial signals; from Lemma \ref{lemdecomp} it becomes evident that its coefficients directly mirror the structure of ${\bf y}$ in Sect.\ $4.1$ as a special case. \\
\\
Leveraging the decomposition of ${\bf L}$ on circulant graphs, we establish:
\begin{lem}\label{lemcircl}The MPP of ${\bf L}$ in Lemma \ref{lemdecomp} can be decomposed as ${\bf L}^{\dagger}={\bf P}_{G_S}^{-1}{\bf L}^{\dagger}_C$ and its rows/columns constitute piecewise quadratic polynomials which are `perturbed' by ${\bf P}_{G_S}^{-1}$ and orthogonal to $N({\bf L})=z{\bf 1}_N,\ z\in\mathbb{R}$.
\end{lem}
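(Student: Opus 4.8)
The plan is to leverage that ${\bf P}_{G_S}$, ${\bf L}_C$ and ${\bf L}={\bf P}_{G_S}{\bf L}_C$ all lie in the commutative algebra of circulant matrices and are simultaneously diagonalised by the DFT matrix, so that pseudoinversion can be carried out ``eigenvalue-wise''. First I would check that ${\bf X}:={\bf P}_{G_S}^{-1}{\bf L}_C^{\dagger}$ (well defined, since ${\bf P}_{G_S}$ is positive definite, hence invertible) satisfies the four Moore--Penrose axioms for ${\bf L}$. Commutativity of circulants reduces each axiom to an identity for ${\bf L}_C$ and ${\bf L}_C^{\dagger}$: ${\bf L}{\bf X}{\bf L}={\bf P}_{G_S}({\bf L}_C{\bf L}_C^{\dagger}{\bf L}_C)={\bf L}$, ${\bf X}{\bf L}{\bf X}={\bf P}_{G_S}^{-1}({\bf L}_C^{\dagger}{\bf L}_C{\bf L}_C^{\dagger})={\bf X}$, and ${\bf L}{\bf X}={\bf X}{\bf L}={\bf L}_C{\bf L}_C^{\dagger}={\bf I}_N-\frac{1}{N}{\bf J}_N$, which is symmetric. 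By uniqueness of the pseudoinverse this gives ${\bf L}^{\dagger}={\bf P}_{G_S}^{-1}{\bf L}_C^{\dagger}$. Equivalently, $\lambda_k({\bf L})=P_{G_S}(e^{2\pi i k/N})\,l_C(e^{2\pi i k/N})$ vanishes precisely at $k=0$, because $P_{G_S}(e^{2\pi i k/N})>0$ for all $k$ and $l_C(e^{2\pi i k/N})=2-2\cos(2\pi k/N)$; hence ${\bf L}$ and ${\bf L}_C$ share their kernel and range and the reciprocal of the nonzero eigenvalues factorises accordingly.

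The statement about rows and columns is then immediate. By Property~\ref{propp1} the entry $L_C^{\dagger}(i,j)=\frac{(N-1)(N+1)}{12N}-\frac{1}{2}|j-i|+\frac{(j-i)^2}{2N}$, so each column $({\bf L}_C^{\dagger})_j$, read as a function of the row index, is a $2$-piece quadratic polynomial (the modulus term producing the single knot, modulo circularity). Since the $j$-th column of ${\bf L}^{\dagger}$ equals ${\bf P}_{G_S}^{-1}({\bf L}_C^{\dagger})_j$, it is exactly the piecewise quadratic $({\bf L}_C^{\dagger})_j$ filtered by the circulant ${\bf P}_{G_S}^{-1}$ --- a ``perturbed'' piecewise quadratic in the sense used throughout the paper. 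Finally, ${\bf L}^{\dagger}$ is symmetric circulant (it is circulant as a product of circulants, and $({\bf L}^{\dagger})^T=({\bf L}^T)^{\dagger}={\bf L}^{\dagger}$ since ${\bf L}$ is symmetric), so the identical description transfers verbatim to the rows.

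Orthogonality to $N({\bf L})=z{\bf 1}_N$ is a one-liner: ${\bf L}^{\dagger}{\bf 1}_N={\bf P}_{G_S}^{-1}({\bf L}_C^{\dagger}{\bf 1}_N)={\bf P}_{G_S}^{-1}{\bf 0}_N={\bf 0}_N$, using the defining Green's-function/MPP constraint ${\bf L}_C^{\dagger}{\bf 1}_N={\bf 0}_N$ from Sect.~$3.3$; symmetry then gives ${\bf 1}_N^T{\bf L}^{\dagger}={\bf 0}_N^T$ as well, so every column and row of ${\bf L}^{\dagger}$ has zero sum, i.e.\ lies in $\{z{\bf 1}_N\}^{\perp}=R({\bf L})$. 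Alternatively, $R({\bf L}^{\dagger})=R({\bf L}^T)=R({\bf L})=N({\bf L})^{\perp}$ follows directly from the Moore--Penrose identities.

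The one genuinely delicate point is pinning down ``perturbed piecewise quadratic'': since ${\bf P}_{G_S}$ is banded of bandwidth $M-1$ but ${\bf P}_{G_S}^{-1}$ is a full circulant, one cannot claim the filtered column is literally piecewise quadratic on any finite stretch away from the knot. I would address this either informally, consistent with the paper's use of ``perturbation'' for graph-dependent border corrections, or quantitatively, by invoking the exponential off-diagonal decay of the inverse of a banded positive-definite matrix: ${\bf P}_{G_S}^{-1}$ acts as a fast-decaying convolution kernel which, having nonzero DC gain $1/P_{G_S}(1)$, preserves the leading quadratic behaviour of $({\bf L}_C^{\dagger})_j$ away from the knot and contributes only a localised correction around it --- the ``inverse'' counterpart of the border effect of ${\bf P}_{G_S}$ identified in Lemma~\ref{lemdecomp}. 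This is the step I expect to need the most care to phrase rigorously; everything else is routine given Lemma~\ref{lemdecomp} and Property~\ref{propp1}.
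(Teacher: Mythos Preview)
Your proposal is correct and covers everything the paper's proof does, and more. The paper establishes ${\bf L}^{\dagger}={\bf P}_{G_S}^{-1}{\bf L}_C^{\dagger}$ directly via the common DFT eigenbasis, writing ${\bf L}={\bf V}{\bf \Sigma}_2{\bf V}^H$, ${\bf L}_C={\bf V}{\bf \Sigma}_1{\bf V}^H$, ${\bf P}_{G_S}={\bf V}{\bf S}{\bf V}^H$ and observing $({\bf S}{\bf \Sigma}_1)^{\dagger}={\bf S}^{-1}{\bf \Sigma}_1^{\dagger}$; this is precisely your ``equivalently'' remark. Your primary route through the four Moore--Penrose axioms and circulant commutativity is a perfectly valid alternative that avoids writing out the diagonalisation and makes the role of $N({\bf L}_C)=N({\bf L})$ explicit. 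For the ``perturbed piecewise quadratic'' part the paper does exactly what you anticipate: it invokes the exponential off-diagonal decay of the inverse of a cyclically banded positive-definite matrix (citing \cite{volkov}, \cite{demko}) and leaves the interpretation at the same informal level you describe. Your separate verification of orthogonality to ${\bf 1}_N$ is not spelled out in the paper's proof but is a welcome addition.
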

\noindent \textit{Proof}. See \ref{appa}.
\begin{rmk}
According to \cite{volkov}, the inverse of a cyclically banded positive definite matrix, such as ${\bf P}_{G_S}$ when $G_S$ is banded, contains entries that decay exponentially (in absolute value) away from the diagonal and corners of the matrix, inducing a `perturbation' on ${\bf L}^{\dagger}_C$. Here, the specific shape of ${\bf P}_{G_S}^{-1}$ is governed by the edge weights and bandwidth $M$, and assumes that of an approximately banded matrix when the weights are (close to being) uniform and $M$ sufficiently small. 
\end{rmk}
\noindent We further infer for the elementary piecewise-smooth functions $({\bf S}^{\dagger})_j$ on $G_S$:
\begin{lem}\label{lemlincirc}The columns of ${\bf S}^{\dagger}={\bf P}_{G_S}^{-1}{\bf L}_C^{\dagger}{\bf S}^T$ for a circulant graph (as above) are piecewise linear polynomials, subject to a perturbation by ${\bf P}_{G_S}^{-1}$.
\end{lem}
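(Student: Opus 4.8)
\noindent\textit{Proof proposal.} The plan is to reduce the claim to the simple-cycle calculations of Sect.\ $4.1.1$ using two factorizations that are already in hand. First I would use the standard identity ${\bf S}^{\dagger}={\bf L}^{\dagger}{\bf S}^T$ for the pseudoinverse of a Gram matrix --- the same one invoked in Rem.\ \ref{syu} for the cycle, valid for any (in particular rank-deficient) matrix --- and then substitute the decomposition ${\bf L}^{\dagger}={\bf P}_{G_S}^{-1}{\bf L}_C^{\dagger}$ of Lemma~\ref{lemcircl}. This immediately yields ${\bf S}^{\dagger}={\bf P}_{G_S}^{-1}{\bf L}_C^{\dagger}{\bf S}^T$, establishing the stated factorization, so that it then remains only to describe the shape of the columns.

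Next I would analyse the ``unperturbed'' matrix ${\bf L}_C^{\dagger}{\bf S}^T$ column by column. Each column of ${\bf S}^T$ is attached to an edge $\{i,(i+s_k)_N\}$ of $G_S$ and equals $\sqrt{A_{i,(i+s_k)_N}}\,({\bf e}_i-{\bf e}_{(i+s_k)_N})$, so the corresponding column of ${\bf L}_C^{\dagger}{\bf S}^T$ is the scaled difference $\sqrt{A_{i,(i+s_k)_N}}\big(({\bf L}_C^{\dagger})_i-({\bf L}_C^{\dagger})_{(i+s_k)_N}\big)$ of two columns of ${\bf L}_C^{\dagger}$. By the closed form of Property~\ref{propp1}, this difference is precisely the quantity computed in Eq.~(\ref{eq:diff}): the quadratic terms of the two Green's functions cancel, leaving a piecewise-linear polynomial in the vertex index with at most two pieces (one piece in the degenerate case of circularly adjacent columns, cf.\ Eq.~(\ref{eq:sym1})). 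Since ${\bf L}_C^{\dagger}$ is symmetric with ${\bf L}_C^{\dagger}{\bf 1}_N={\bf 0}_N$, each such column moreover has zero sum, i.e.\ lies in $N({\bf S})^{\perp}$. Hence ${\bf L}_C^{\dagger}{\bf S}^T$ has piecewise-linear columns.

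Finally I would account for the left multiplication by the circulant positive definite matrix ${\bf P}_{G_S}^{-1}$, exactly as in the proof of Lemma~\ref{lemcircl}: since ${\bf P}_{G_S}$ is cyclically banded of bandwidth $M-1$, its inverse has entries decaying exponentially away from the diagonal and corners (\cite{volkov}), so ${\bf P}_{G_S}^{-1}$ is an approximately banded operator and ${\bf S}^{\dagger}={\bf P}_{G_S}^{-1}({\bf L}_C^{\dagger}{\bf S}^T)$ consists of the piecewise-linear columns of ${\bf L}_C^{\dagger}{\bf S}^T$ up to a perturbation by ${\bf P}_{G_S}^{-1}$; this also preserves orthogonality to $N({\bf S})=z{\bf 1}_N$ because the circulant ${\bf P}_{G_S}^{-1}$ fixes the ${\bf 1}_N$-direction. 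Specialising to $G_S=G_C$ gives ${\bf P}_{G_S}={\bf I}_N$ and recovers Lemma~\ref{circlins}, a useful consistency check.

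The step I expect to be the main obstacle is not technical but a matter of precision: stating exactly in what sense ``perturbation by ${\bf P}_{G_S}^{-1}$'' leaves the piecewise-linear character intact. As in Lemma~\ref{lemcircl}, the conclusion is qualitative --- a piecewise-linear profile modulated by an exponentially localized perturbation whose spread is governed by the bandwidth $M$ --- and should be phrased consistently with that Lemma and its accompanying remark. A secondary point worth care is the bookkeeping of the two regimes of Eq.~(\ref{eq:diff}) (generic two-piece versus the one-piece case of adjacent columns) and verifying that the zero-sum property indeed survives both factorizations.
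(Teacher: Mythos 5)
Your proposal is correct and follows essentially the same route as the paper's proof: it uses the MPP relation ${\bf S}^{\dagger}={\bf L}^{\dagger}{\bf S}^T$ together with the decomposition ${\bf L}^{\dagger}={\bf P}_{G_S}^{-1}{\bf L}_C^{\dagger}$, observes that each column of ${\bf L}_C^{\dagger}{\bf S}^T$ is a (scaled) difference of two columns of ${\bf L}_C^{\dagger}$ and hence piecewise linear by Eq.\ (\ref{eq:diff}), and treats the left factor ${\bf P}_{G_S}^{-1}$ as an approximately banded perturbation. The additional remarks on the zero-sum property and the simple-cycle consistency check are sound but not needed beyond what the paper states.
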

\noindent The preceding result establishes a fundamental relation between polynomial functions and circulant graphs. In particular, while Lemma \ref{circlins} states that the rows and columns (and their linear combinations) of ${\bf S}^{\dagger}_C$ are piecewise linear polynomials, Lemma \ref{lemlincirc} implies that the columns of ${\bf P}_{G_S}{\bf S}^{\dagger}$ and their linear combinations describe piecewise linear polynomials. Here, both ${\bf P}_{G_S}$ and ${\bf S}^{\dagger}$ encapsulate information, specific to the connectivity of the circulant graph at hand, which is cancelled out through their product. In other words, any piecewise linear polynomial function, which is in the subspace orthogonal to ${\bf 1}_N$, 
can be represented via ${\bf P}_{G_S}{\bf S}^{\dagger}$ on an arbitrary circulant graph.\\
\\
In light of Lemmata \ref{lemcircl} and \ref{lemlincirc}, we proceed to quantify the discrepancy between the analysis and synthesis models on circulant graphs:
\begin{thm}\label{thmmain} On circulant graphs (as per Lemma \ref{lemdecomp}), the signal subspaces associated with the cosparse analysis model of ${\bf L}$, given by
\[N({\bf \Psi}_{\Lambda}{\bf L})=z{\bf 1}_N+{\bf P}_{G_S}^{-1}{\bf L}_C^{\dagger}{\bf \Psi}^T_{\Lambda^{\complement}}{\bf W}{\bf c},\] for $z\in\mathbb{R}$ and ${\bf c}\in\mathbb{R}^{|\Lambda^{\complement}|-1}$, with alternate representation ${\bf L}_C^{\dagger}{\bf \Psi}^T_{\Lambda^{\complement}}{\bf W}{\bf c}={\bf S}_C^{\dagger}{\bf t}$ for suitable ${\bf t}\in\mathbb{R}^N$,
consist of up to piecewise linear polynomials, subject to a translation by $N({\bf L})$, while those of the sparse synthesis model, generated by ${\bf L}^{\dagger}$, describe up to piecewise quadratic polynomials, respectively subject to graph-dependent perturbations in form of factor ${\bf P}^{-1}_{G_S}$. \\
\end{thm}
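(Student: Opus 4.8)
The plan is to assemble the statement from the structural results already established, handling the analysis and synthesis halves in turn and then reading off the degree gap between them. For the analysis subspace I would start from Prop.\ \ref{prop1}, which gives $N({\bf \Psi}_{\Lambda}{\bf L}) = z{\bf 1}_N + {\bf L}^{\dagger}{\bf \Psi}^T_{\Lambda^{\complement}}{\bf W}{\bf c}$, and substitute the circulant factorisation ${\bf L}^{\dagger} = {\bf P}_{G_S}^{-1}{\bf L}_C^{\dagger}$ of Lemma \ref{lemcircl}, which immediately yields the first displayed expression. Next I would invoke the Fredholm-Alternative characterisation of Rem.\ \ref{rems}: the vector ${\bf \Psi}^T_{\Lambda^{\complement}}{\bf W}{\bf c}$ sums to zero, i.e.\ lies in ${\bf 1}_N^{\perp}$, hence in $R({\bf S}_C^T)$ (the simple cycle being connected), so ${\bf \Psi}^T_{\Lambda^{\complement}}{\bf W}{\bf c} = {\bf S}_C^T{\bf t}$ for some ${\bf t}\in\mathbb{R}^N$. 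Combined with the Gramian identity ${\bf L}_C^{\dagger}{\bf S}_C^T = {\bf S}_C^{\dagger}$ (already used in Rem.\ \ref{syu}), this produces the advertised alternate form ${\bf L}_C^{\dagger}{\bf \Psi}^T_{\Lambda^{\complement}}{\bf W}{\bf c} = {\bf S}_C^{\dagger}{\bf t}$. Since by Lemma \ref{circlins} every column of ${\bf S}_C^{\dagger}$, and hence every linear combination ${\bf S}_C^{\dagger}{\bf t}$, is a piecewise linear polynomial whose breakpoints are dictated by the support of ${\bf t}$ (equivalently by $\Lambda^{\complement}$ through Rem.\ \ref{rems}), left-multiplication by ${\bf P}_{G_S}^{-1}$ turns it into a ${\bf P}_{G_S}^{-1}$-perturbed piecewise linear polynomial, while the additive $z{\bf 1}_N = N({\bf L})$ supplies the translation term. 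This settles the analysis claim.

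For the synthesis subspace, any ${\bf x} = {\bf L}^{\dagger}{\bf c}$ with ${\bf c}$ sparse in $\mathbb{R}^N$ reads ${\bf x} = {\bf P}_{G_S}^{-1}{\bf L}_C^{\dagger}{\bf c}$ via Lemma \ref{lemcircl}. By Property \ref{propp1} each column of ${\bf L}_C^{\dagger}$ is a quadratic in the vertex index plus a piecewise linear $|j-i|$ term, so a linear combination over the support of ${\bf c}$ is, after piecing together across the discontinuities introduced by those $|j-i|$ terms, a piecewise quadratic polynomial; this is precisely the computation sketched in Sect.\ $4.1.1$ for pairs and triples of columns, carried through for arbitrary sparse ${\bf c}$. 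Applying ${\bf P}_{G_S}^{-1}$ perturbs this into the stated form, and orthogonality to ${\bf 1}_N$ is inherited from ${\bf L}_C^{\dagger}{\bf 1}_N = {\bf 0}_N$ (Lemma \ref{lemcircl}). The degree gap then follows at once: on the analysis side the F.A.\ forces the coefficient vector into $R({\bf S}_C^T)$, i.e.\ into a combination of the zero-sum atoms ${\bf e}_i - {\bf e}_j$, and --- exactly as observed in Sect.\ $4.1$ for the flipped column pairs $\{{\bf L}^{\dagger}_{C,j},{\bf L}^{\dagger}_{C,N-1-j}\}$ --- such differences annihilate the even-degree (quadratic) part and leave only the linear one, whereas the unconstrained synthesis model keeps it; the maximality (``up to'' piecewise linear, resp.\ quadratic) is witnessed by the extremal cases already computed in Sect.\ $4.1.1$, e.g.\ $a{\bf L}^{\dagger}_{C,j_1} + b{\bf L}^{\dagger}_{C,j_2}$ with $a\neq -b$ for the quadratic case.

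The part that needs the most care --- and which I regard as the main obstacle --- is not the polynomial bookkeeping but making ``subject to a perturbation by ${\bf P}_{G_S}^{-1}$'' a well-defined and stable notion: ${\bf P}_{G_S}^{-1}$ is not banded, so multiplying a genuinely piecewise polynomial vector by it destroys exact polynomiality at every entry. The resolution I would pursue is the one flagged in the remark following Lemma \ref{lemcircl}: since ${\bf P}_{G_S}$ is cyclically banded and positive definite, ${\bf P}_{G_S}^{-1}$ has entries decaying exponentially away from the diagonal and the corners (Volkov), so the perturbation is a controlled convolutional smoothing whose effective width is governed by the bandwidth $M$ and the edge weights. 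Accordingly I would phrase the conclusion qualitatively (``up to a graph-dependent perturbation''), in exact parallel with how the vector ${\bf y}$ of Sect.\ $4.1$ and the border effect of \cite{splinesw} are treated, rather than assert exact piecewise polynomiality; everything else in the proof is a direct concatenation of Prop.\ \ref{prop1}, Lemmata \ref{lemcircl} and \ref{circlins}, Property \ref{propp1}, and Rem.\ \ref{rems}.
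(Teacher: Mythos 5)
Your proposal is correct and takes essentially the same route as the paper, whose own proof of Theorem \ref{thmmain} is simply ``follows from above discussion'': you concatenate Prop.\ \ref{prop1}, Lemma \ref{lemcircl}, Rem.\ \ref{rems} (zero-sum $\Rightarrow$ range of ${\bf S}_C^T$, then ${\bf L}_C^{\dagger}{\bf S}_C^T={\bf S}_C^{\dagger}$), Lemma \ref{circlins} and Property \ref{propp1} exactly as the text intends, and the degree-reduction via differences of columns of ${\bf L}_C^{\dagger}$ is the computation of Eq.\ (\ref{eq:diff}). Your caveat that ``perturbation by ${\bf P}_{G_S}^{-1}$'' must be read qualitatively, justified by the exponential off-diagonal decay from \cite{volkov}, matches the paper's own treatment in the remark following Lemma \ref{lemcircl}.
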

\noindent \textit{Proof.} Follows from above discussion.
\begin{rmk} Owing to unique structural properties of ${\bf L}$ and ${\bf S}$, including structured sparsity and polynomial functions, Thm.\ \ref{thmmain} exemplifies their rich representation range within both models: these properties facilitate alternative representations of signals through both ${\bf L}^{\dagger}$ and ${\bf S}^{\dagger}$, subject to constraints, while simultaneously establishing closed-form expressions for crucial graph operators.\end{rmk}
The synthesis model defines graph signals ${\bf x}={\bf L}^{\dagger}{\bf c}$ as linear combinations of (perturbed) piecewise quadratic polynomials, encompassing piecewise linear and piecewise constant signals for suitable choices of ${\bf c}$, which are orthogonal to ${\bf 1}_N$ and whose discontinuities depend on the location of the sparse entries (knots) of ${\bf c}$ at $\Lambda^{\complement}$. This establishes a comprehensive model, containing different orders of smoothness and hop-localization of the subspaces\footnote{In particular, ${\bf S}^{\dagger}_j$ is piecewise smooth with respect to ${\bf L}$, which induces sparsity localized with respect to a $1$-hop neighborhood of the graph, and for sufficiently sparse graphs, ${\bf L}^{\dagger}_j$ is piecewise smooth with respect to ${\bf L}^2$, which induces $2$-hop localized sparsity.}, while the analysis model constitutes a structured instance thereof, comprising signals that are sparse with respect to ${\bf L}$, up to a shift by ${\bf 1}_N$.\\
The constraint matrix ${\bf W}$ in the analysis subspace ${\bf P}_{G_S}^{-1}{\bf L}_C^{\dagger}({\bf \Psi}^T_{\Lambda^{\complement}}){\bf W}$, which takes zero-sum differences between atoms of ${\bf L}_C^{\dagger}$, thereby giving rise to (piecewise) linear polynomials subject to a perturbation by ${\bf P}_{G_S}^{-1}$, marks a significant discrepancy in the structure of basis functions between the analysis and synthesis models. By applying the double constraint ${\bf c}=(({\bf e}_i-{\bf e}_j)\ast ({\bf e}_k-{\bf e}_l))_{modN}$, the solution ${\bf L}^{\dagger}{\bf c}$ further reduces to perturbed piecewise constant form, while in the special case ${\bf L}^{\dagger}{\bf L}$, we obtain piecewise constant solutions with one discontinuity. 
\\
\\
{\bf Higher-Order Operators.} Theorem \ref{thmmain} can be trivially generalized to higher-order operators ${\bf L}^k$, yielding the partial subspace ${\bf L}^{\dagger k}{\bf \Psi}^T_{\Lambda^{\complement}}{\bf W}{\bf c}={\bf L}^{\dagger k-1}{\bf S}^{\dagger}{\bf t}$, for suitable ${\bf t}\in\mathbb{R}^{|E|}$ and ${\bf c}\in\mathbb{R}^{|\Lambda^{\complement}|-1}$. The known analysis-property that ${\bf L}^k$ annihilates up to degree $2k-1$ polynomials, subject to a graph-dependent border effect, as per (Lemma $3.1$, \cite{splinesw}), implicitly defines the solution space of $N({\bf \Psi}_{\Lambda}{\bf L}^k)$, which, combined with the constrained synthesis-based expression of Lemma \ref{summ} $(i)$, reveals that ${\bf L}^{\dagger k-1}{\bf S}^{\dagger}$ necessarily gives rise to degree $2k-1$ (piecewise) polynomials, which are orthogonal to ${\bf 1}_N$, and (possibly) subject to graph-dependent perturbations through ${\bf P}_{G_S}^{-1}$.\\
In view of Lemma \ref{summ} $(iii)$ for $k=2$, the MPP relation ${\bf L}^2{\bf L}^{\dagger}={\bf L}$ further reveals that for sufficiently banded $G_S$, ${\bf L}^{2}$, whose rows have $4$ vanishing moments, annihilates the `perturbed' quadratic polynomial $({\bf L}^{\dagger})_j$ with structured sparse output ${\bf L}_j$. In other words, the non-zeros of the columns ${\bf L}_j$ are the locations of the knots of the Green's functions ${\bf L}^{\dagger 2}{\bf L}={\bf L}^{\dagger}$. Here, ${\bf L}^{\dagger}$ is sparse with respect to a $2$-hop localized neighborhood on the graph.\\ 
While closed-form expressions for the higher-order MPPs need to be derived on a case by case basis, it has been shown that the columns of ${\bf L}^{\dagger 2}_C$ describe 4th order polynomials \cite{plonka}, from which it then naturally follows that the columns of ${\bf L}^{\dagger 2}_C{\bf S}^T_C={\bf L}^{\dagger}_C{\bf S}^{\dagger}_C$ are cubic polynomials which are sparse with respect to the operator ${\bf L}_C^2$; accordingly, ${\bf L}_C^{\dagger 2}{\bf L}_C={\bf L}_C^{\dagger}$ describe quadratic polynomials, following a twofold degree-reduction. In addition ${\bf L}^{\dagger 2}_C$ is sparse with respect to ${\bf S}_C{\bf L}_C^2$, which possesses an extra vanishing moment through ${\bf S}_C$. Statements carry over, subject to perturbations, for sufficiently banded circulant graphs. Further building on this discussion, we state a generalization of the annihilation property to operators of the form ${\bf S}{\bf L}^k$:
\begin{lem}\label{gan} For any banded circulant graph, ${\bf S }{\bf L}^k$, $k\in\mathbb{N}$ annihilates polynomial signals of up to order $2k$, subject to graph-dependent border effects.\end{lem}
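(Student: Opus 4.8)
The plan is to reduce the claim to the order of vanishing at $z=1$ of the representer polynomial of each circulant block of ${\bf S}{\bf L}^{k}$, and then to invoke the standard correspondence between that order and the number of vanishing moments of the associated banded circular difference operator.

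First I would put the incidence matrix into a convenient circulant block form. Since the graph is banded with bandwidth $M$ and $2M\ll N$, every $s\in S$ satisfies $s<N/2$, so each generator contributes a ring of $N$ distinct edges $\{i,(i+s)_N\}$, $i=0,\dots,N-1$. Ordering the edges ring by ring and, within a ring, by lower endpoint turns ${\bf S}$ into the vertical stack of the circulant blocks ${\bf S}^{(s)}\in\mathbb{R}^{N\times N}$, $s\in S$, where ${\bf S}^{(s)}$ has first row placing $\sqrt{d_s}$ at offset $0$ and $-\sqrt{d_s}$ at offset $s$, i.e.\ representer polynomial $\sqrt{d_s}\,(1-z^{s})$ with edge weight $d_s=A_{0,(s)_N}$. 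Permuting rows of ${\bf S}$ and flipping individual edge orientations merely permutes and negates rows, hence leaves $N({\bf S}{\bf L}^{k})$ unchanged, so this normalization is without loss of generality; one verifies directly that $\sum_{s\in S}({\bf S}^{(s)})^{T}{\bf S}^{(s)}$ has representer polynomial $\sum_{s\in S}d_s(2-z^{s}-z^{-s})=l(z)$, consistent with ${\bf L}={\bf S}^{T}{\bf S}$.

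Next I would bring in Lemma \ref{lemdecomp}: ${\bf L}={\bf P}_{G_S}{\bf L}_C$ with ${\bf P}_{G_S}$ and ${\bf L}_C$ circulant, hence commuting, so ${\bf L}^{k}={\bf P}_{G_S}^{k}{\bf L}_C^{k}$ and the $s$-th block of ${\bf S}{\bf L}^{k}$ is the circulant matrix ${\bf S}^{(s)}{\bf P}_{G_S}^{k}{\bf L}_C^{k}$ with representer (Laurent) polynomial $t_s(z):=\sqrt{d_s}\,(1-z^{s})\,P_{G_S}(z)^{k}\,l_C(z)^{k}$. As ${\bf S}{\bf L}^{k}{\bf p}={\bf 0}$ iff every block annihilates ${\bf p}$, it suffices to treat a single $t_s$. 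Writing $l_C(z)=2-z-z^{-1}=-z^{-1}(z-1)^{2}$ shows $l_C(z)^{k}$ has a zero of order $2k$ at $z=1$; $1-z^{s}=(1-z)(1+z+\dots+z^{s-1})$ contributes order exactly $1$, since its cofactor equals $s\neq 0$ at $z=1$; and $P_{G_S}(z)^{k}$ contributes order $0$, because ${\bf P}_{G_S}$ is positive definite and hence $P_{G_S}(1)>0$. Thus $t_s$ has a zero of order exactly $2k+1$ at $z=1$, equivalently $t_s^{(j)}(1)=0$ for $0\le j\le 2k$, equivalently the moment conditions $\sum_r (t_s)_r\,r^{j}=0$ for $0\le j\le 2k$ hold.

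Finally I would turn the moment conditions into annihilation of polynomials. For a polynomial signal $p$ on the cycle and any vertex $n$ whose distance to the wrap-around pair $\{0,N-1\}$ exceeds the bandwidth of ${\bf S}{\bf L}^{k}$ (which is $\le (k+1)M=O(kM)$), the circular convolution at $n$ coincides with an ordinary one, so $({\bf S}{\bf L}^{k}p)(n)=\sum_r (t_s)_r\,p(n-r)=0$ whenever $\deg p\le 2k$; at the $O(kM)$ vertices near the wrap-around the periodic extension of $p$ is not polynomial and annihilation may fail, which is exactly the graph-dependent border effect named in the statement (its spread being governed by $M$ and $k$, just as for ${\bf L}^{k}$ in \cite{splinesw}). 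An equivalent route is induction on $k$ using ${\bf S}{\bf L}^{k}=({\bf S}{\bf L}^{k-1}){\bf L}$, each additional factor ${\bf L}$ adding two vanishing moments, with base case ${\bf S}{\bf L}$. The main obstacle I anticipate is not the order bookkeeping but making the border-effect clause rigorous: one must separate, for a given polynomial input, the interior vertices at which the circular and ordinary convolutions agree from the $O(kM)$ boundary vertices at which they need not, and argue that this boundary region is controlled precisely by the bandwidth of ${\bf S}{\bf L}^{k}$ — which mirrors, and should be aligned with, the border-effect analysis already carried out for ${\bf L}^{k}$ in \cite{splinesw}.
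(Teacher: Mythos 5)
Your proof is correct, but it takes a genuinely different route from the paper's. The paper never decomposes ${\bf S}$ into circulant blocks; instead it writes ${\bf S}={\bf S}^{\dagger T}{\bf L}$ and factors ${\bf S}{\bf L}^{k}=({\bf S}{\bf S}_C^{\dagger})\,({\bf S}_C{\bf L}_C^{k})\,{\bf P}_{G_S}^{k}$, establishes that ${\bf S}_C{\bf L}_C^{k}$ has $2k+1$ vanishing moments from the representer polynomials $1-z$ and $1-z^{-1}$ of the cycle incidence factors, and then argues that the left factor ${\bf S}{\bf S}_C^{\dagger}$ preserves sparsity because its rows are piecewise constant (Lemma \ref{diffind}) and the intermediate output ${\bf y}$ is a localized, zero-sum vector orthogonal to constant pieces whenever the graph is sufficiently banded. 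Your argument replaces that support/orthogonality step — arguably the least transparent part of the paper's proof — with an elementary reduction: after reordering edges ring by ring, each block of ${\bf S}{\bf L}^{k}$ is itself circulant with representer $\sqrt{d_s}\,(1-z^{s})\,P_{G_S}(z)^{k}\,l_C(z)^{k}$, whose zero at $z=1$ has order exactly $2k+1$ (using $P_{G_S}(1)>0$ from positive definiteness), and the moment conditions plus the interior/wraparound split give the annihilation and the border effect directly. What your route buys is a self-contained, purely polynomial argument with an exact vanishing-moment count per edge ring and no appeal to ${\bf S}_C^{\dagger}$ or Lemma \ref{diffind}; what the paper's route buys is consistency with the Green's-function objects (${\bf S}_C^{\dagger}$, ${\bf S}{\bf S}_C^{\dagger}$) it has already characterized and reuses throughout Section 4. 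Both rest on the same decomposition ${\bf L}={\bf P}_{G_S}{\bf L}_C$ and the same ultimate count of $2k+1$ moments, and both leave the border-effect clause at the same qualitative level of rigor, so neither has a gap the other closes.
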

\noindent It becomes evident that ${\bf S}{\bf L}^k$ functions as the intermediate graph difference operator between ${\bf L}^{k}$ and ${\bf L}^{k+1}$, whose number of vanishing moments\footnote{In the ideal scenario of a simple cycle, ${\bf S}_C{\bf L}_C^k$ has an odd number of vanishing moments $2k+1$, compared the even number of vanishing moments $2k$ of ${\bf L}_C^k$.} increases in degree steps of $2$. Further, the above result, in conjunction with Lemma \ref{summ} $(ii)$, provides a concrete identification of the constrained synthesis (or analysis) signal space for circulant graphs, whereby a signal of the form ${\bf x}={\bf L}^{\dagger k}{\bf w}$ with suitable ${\bf w}\in\mathbb{R}^N$ that is sparse with respect to ${\bf S}{\bf L}^k$, in the range of ${\bf S}$, must belong to the space of degree $2k$ (piecewise) polynomials, which are orthogonal to ${\bf 1}_N$, possibly subject to perturbations through ${\bf P}_{G_S}^{-1}$. Thus, in the specific instance when a given higher-order polynomial can be expressed as ${\bf L}^{\dagger k}{\bf w}$, its sparse pattern with respect to ${\bf S}{\bf L}^k$ is given by ${\bf S}{\bf w}$.
\\
\\
{\bf A note on discontinuities.} The (perturbed) piecewise linear functions of $N({\bf \Psi}_{\Lambda}{\bf L})$ exhibit discontinuities whose locations depend on the choice of subset $\Lambda$ and the graph connectivity/bandwidth. The specific structure of ${\bf P}_{{G_S}}^{-1}{\bf L}_C^{\dagger}{\bf t}$, with structured sparse vector ${\bf t}={\bf \Psi}^T_{\Lambda^{\complement}}{\bf W}{\bf c}$, whose non-zeros are at locations ${\Lambda}^{\complement}$ and sum to $0$, in fact, reveals piecewise linear polynomials, whose discontinuities are perturbed (and hence amplified in spread) by the graph-dependent, (if applicable) approximately banded matrix ${\bf P}_{G_S}^{-1}$. Here, the choice and density of pattern ${\bf t}$ determines the number, location and/or spread (i.e. amplification via ${\bf P}_{{G_S}}^{-1}$) of the discontinuities. The analysis of the simple cycle previously revealed that the number of discontinuities, and hence, number of distinct polynomial pieces, is minimized to one, when considering differences between consecutive atoms of ${\bf L}_C^{\dagger}$. Accordingly, the spread of the discontinuity is amplified by ${\bf P}_{{G_S}}^{-1}$ for general circulant graphs, its size increasing with the graph connectivity. One may further enforce an anti-symmetric function shape, by e.g. taking differences ${\bf L}^{\dagger}_j-{\bf L}^{\dagger}_i$ for $i,j\in V$, confined to pairwise flipped versions $j=N-i-1$, which facilitates anti-symmetry with respect to a vertical line, as observed on the simple cycle. Furthermore, one may specifically tailor the relative signal smoothness through ${\bf t}$, as the following remark exemplifies:
\begin{rmk}\label{smoth}
The representation ${\bf x}={\bf P}_{G_S}^{-1}{\bf L}_C^{\dagger}{\bf \Psi}_{\Lambda^{\complement}}^T{\bf W}{\bf c}$, assumes different orders of smoothness according to the choice of basis ${\bf W}$. Consider basis ${\bf \Psi}_{\Lambda^{\complement}}^T\tilde{{\bf W}}$, which absorbs column $({\bf P}_{G_S})_j$ while satisfying the constraints of ${\bf W}$ via circular convolution, such that $\Lambda^{\complement}$ covers the support of ${\bf p}:=(({\bf P}_{G_S})_j\ast({\bf e}_k-{\bf e}_l))_{mod N}$, for suitable $j,k,l\in V$. Consequently, ${\bf x}={\bf L}^{\dagger}{\bf p}$ becomes sparse with respect to ${\bf L}_C$, in addition to ${\bf L}$, with sparse output respectively given by $({\bf e}_j\ast({\bf e}_k-{\bf e}_l))_{mod N}$ and ${\bf p}$, and can be considered piecewise-smooth in the `classical' sense as well as with respect to $G_S$, removing the perturbation by ${\bf P}^{-1}_{G_S}$.
\end{rmk}
In view of the above, a signal ${\bf L}^{\dagger}{\bf t}$ on a general circulant graph can achieve the same smoothness and minimum number of discontinuities as a signal on the simple cycle for ${\bf t}=(({\bf P}_{G_S})_j\ast({\bf e}_k-{\bf e}_l))_{mod N}$ with suitable $j,k,l\in V$. One needs to distinguish here between the discontinuities relative to the graph connectivity, in form of a more dense ${\bf t}$, as well with respect to the simple cycle, which facilitates a notion of classical smoothness that is detached from the graph domain.
\\
\\
{\bf Special Case (Complete Graph)}. In the special case of an unweighted complete (circulant) graph, we have ${\bf S}^{\dagger}=\frac{1}{{N}}{\bf S}^{T}$ and ${\bf L}^{\dagger}=\frac{1}{{N^2}}{\bf L}$, which result from ${\bf L}^{\dagger}{\bf L}=\frac{1}{N}{\bf L}$ and the MPP relations ${\bf L}^{\dagger}{\bf S}^T={\bf S}^{\dagger}$ and ${\bf L}{\bf S}^{\dagger}={\bf S}^T$. This implies that $({\bf S}^{\dagger})_j$ is trivially smooth with respect to ${\bf L}$, as it simultaneously represents its sparsity pattern. Nevertheless, since the graph is dense, $({\bf L}^{\dagger})_j$ is not sparse with respect to ${\bf L}^2$. As a result of the maximum graph connectivity, the piecewise-smooth functions $({\bf S}^{\dagger})_j$ thus take the form of two opposite-sign impulses, as per $({\bf S}^{T})_j$, while $({\bf L}^{\dagger})_j$ is piecewise constant, just as $({\bf L})_j$.\\
\\
We conclude by illustrating the specific model subspace discrepancies through examples. In Fig.\ \ref{fig:newfig}, $(a)$-$(b)$, we consider two distinct quadratic polynomial atoms $({\bf L}^{\dagger})_j$ with a discontinuity respectively at position $t=i,j$ along with their difference, which gives rise to the linear signal with two discontinuities at positions $i,j$ and, hence, $2$-sparse analysis representation ${\bf e}_i-{\bf e}_j$ with respect to ${\bf L}$. Here, subfigures $(a)$ and $(b)$ respectively illustrate the functions resulting from the domain of two different unweighted circulant graphs of dimension $N$. The perturbation effect, resulting from approximately banded matrix ${\bf P}_{G_S}^{-1}$ of decaying support, is clearly visible around the discontinuities for the non-trivial circulant case in $(b)$.
Further, in Fig. \ref{fig:newfig}$(c)$, we consider the MPP $({\bf S}^{\dagger})_0={\bf L}^{\dagger}({\bf S}^{T})_0$ for unweighted circulant graphs of increasing connectivity, subject to rescaling for visibility, where $({\bf S}^{T})_0$ uniformly corresponds to $({\bf e}_1-{\bf e}_0)$. For the simple cycle, $({\bf S}^{\dagger})_0$ is a $1$-piece linear polynomial with a single discontinuity at its endpoints, which is subject to perturbations with increasing graph connectivity. In the limiting case of the complete graph, we observe two consecutive, opposite sign impulses at the discontinuity location and zeros otherwise.
\begin{figure}
 \centering
  \begin{subfigure}[htbp]{0.49\textwidth}
{\includegraphics[width=2.5in]{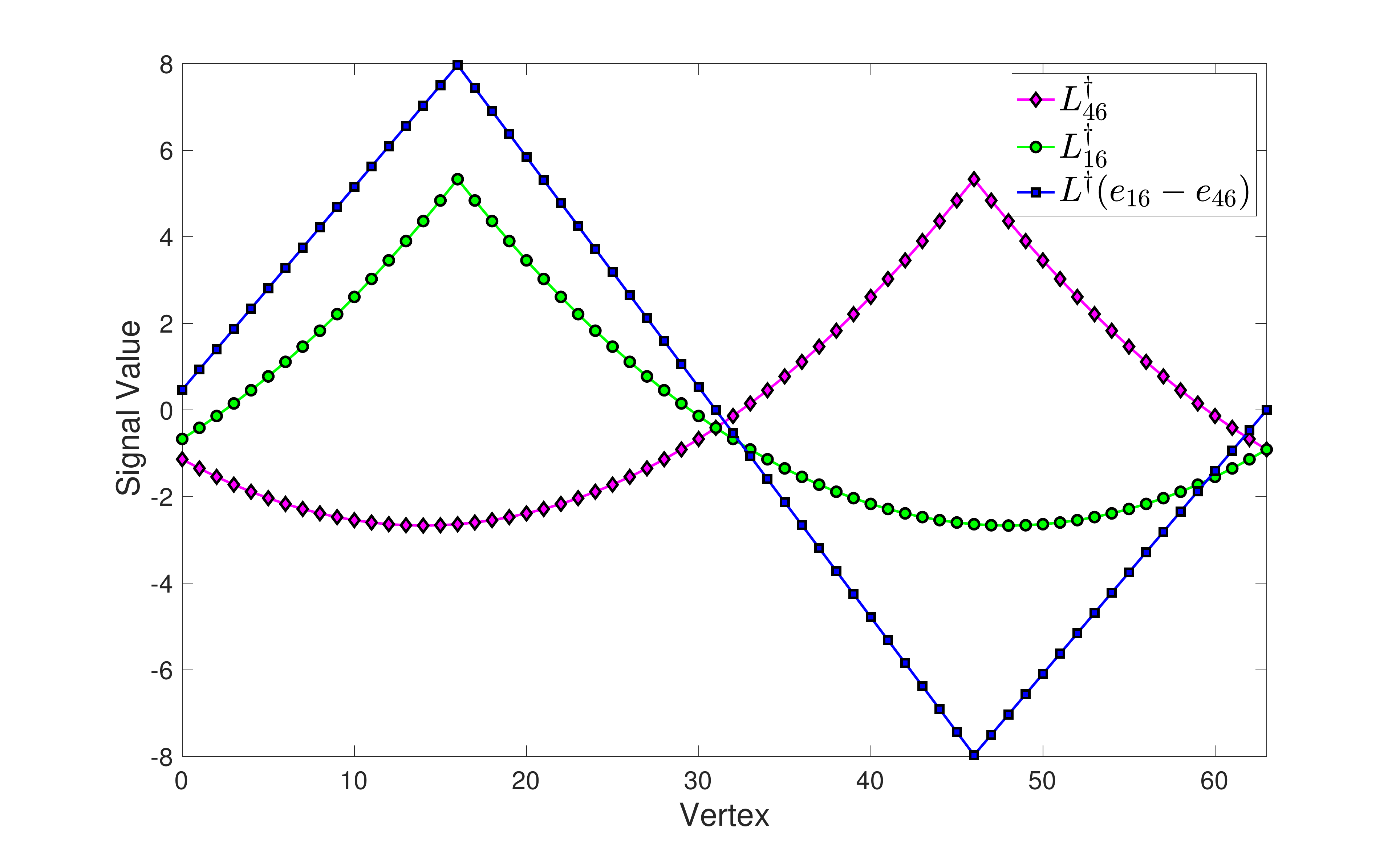}} 
\caption{\small{Functions on $G_S$ with $S=\{1\}$}}
\end{subfigure}
%\vspace{1mm}
\begin{subfigure}[htbp]{0.49\textwidth}
 { \includegraphics[width=2.5in]{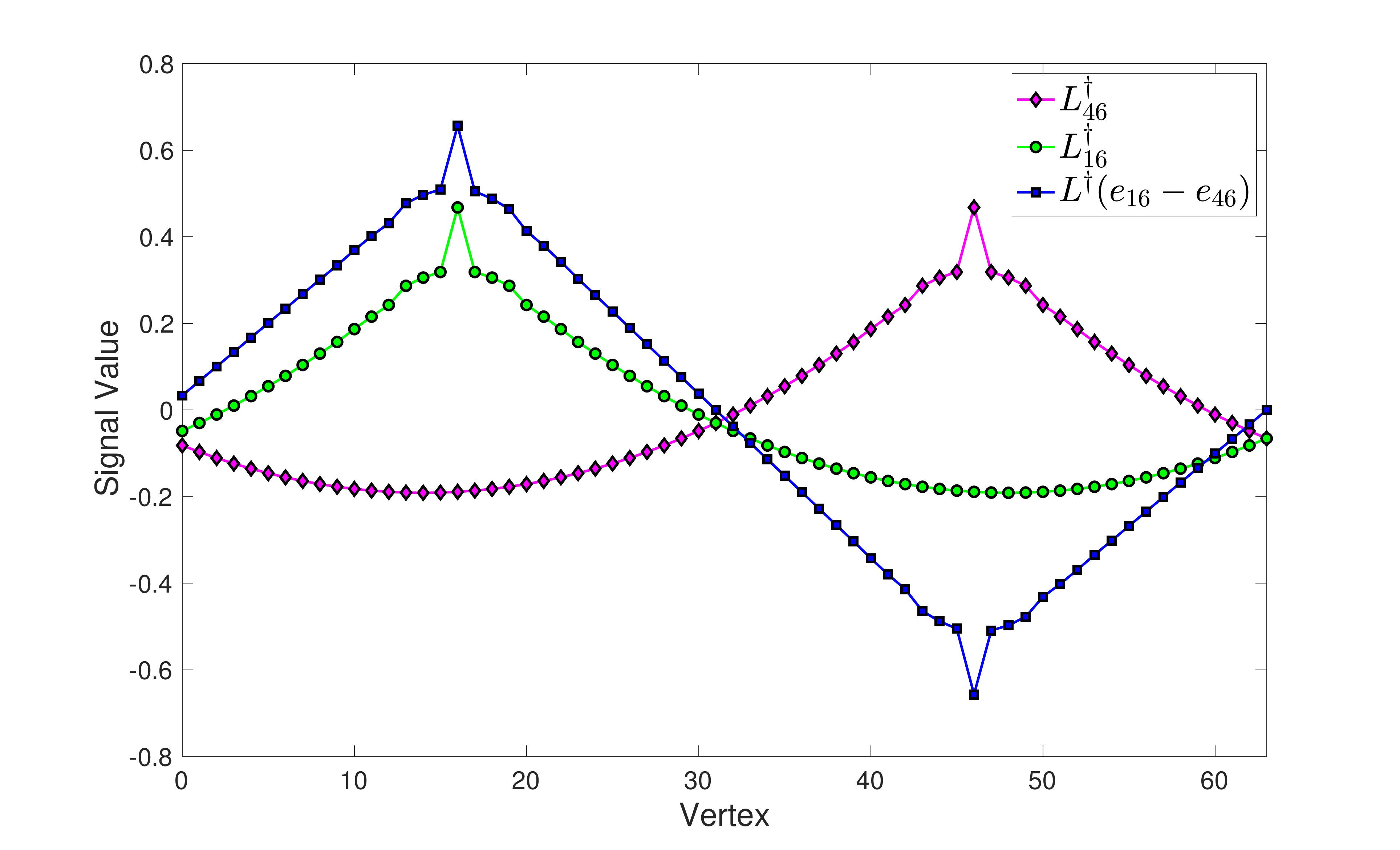}}%\hspace{5mm}
 \caption{\small{Functions on $G_S$ with $S=\{1,2,3\}$}}
 \end{subfigure}
 \begin{subfigure}[htbp]{0.49\textwidth}
 { \includegraphics[width=2.5in]{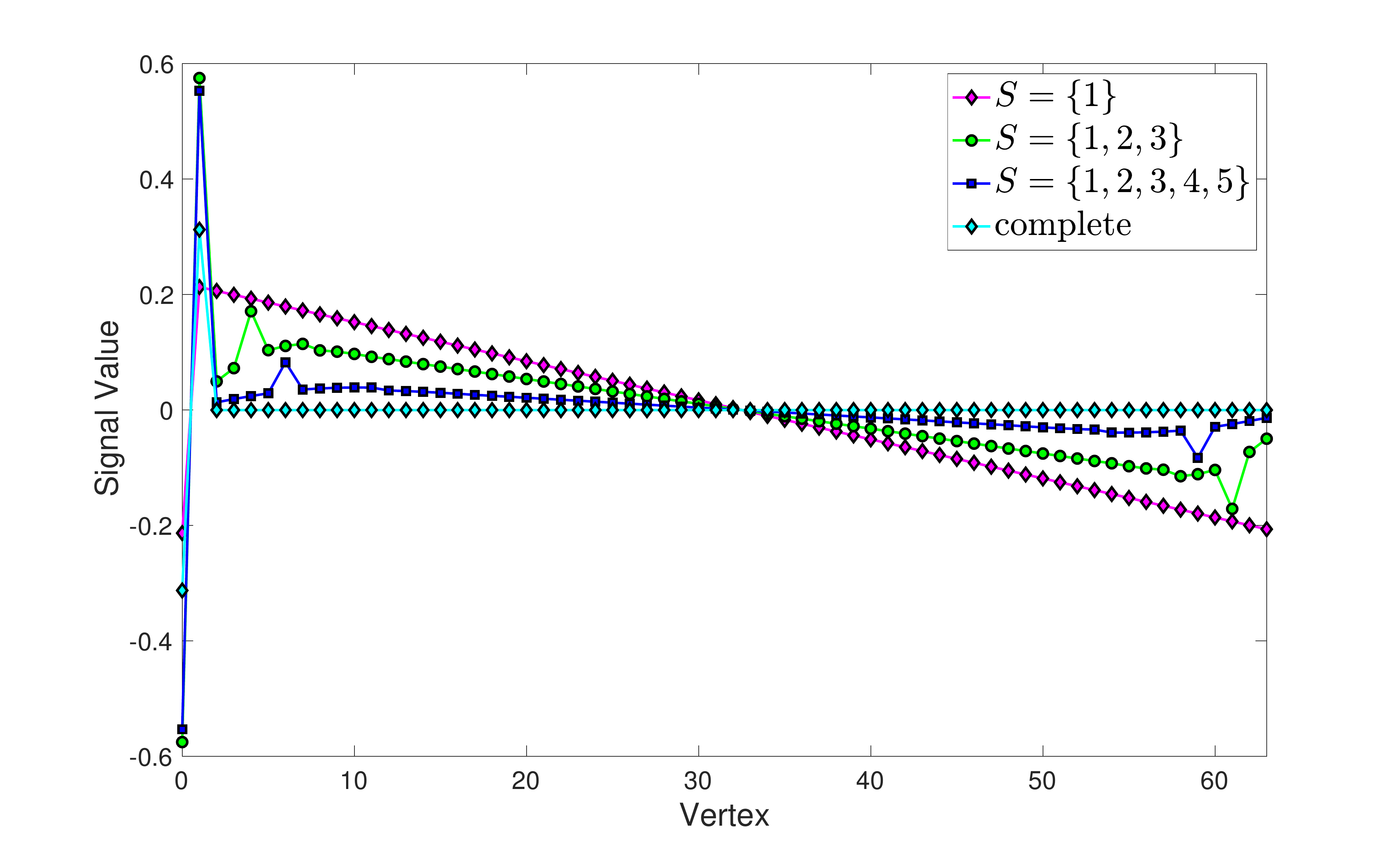}}%\hspace{5mm}
 \caption{\small{$({\bf S}^{\dagger})_0$ on different $G_S$}}
 \end{subfigure}
	\caption{Comparison of signal models on circulant graphs}
\label{fig:newfig}\end{figure}

\subsection{The generalized graph Laplacian}
In an effort to explore signal models for broader classes of signals on graphs, we study the generalized, parametric circulant graph Laplacian operator ${\bf L}_{\alpha}=d_{\alpha}{\bf I}_N-{\bf A}$, with node degree $d_{\alpha}=\sum_{j=1}^M 2 d_j \cos(\alpha j)$ parameterized by $\alpha\in\mathbb{C}$. Designed, in previous work (see Lemma\ $3.2$, \cite{splinesw}), to possess the property to annihilate classes of complex exponential signals of the form ${\bf x}=e^{\pm i\alpha{\bf t}}$ with exponent $\alpha\in\mathbb{C}$, and by extension, for its powers ${\bf L}_{\alpha}^k$ to annihilate complex exponential polynomials of degree $k-1$ of the form ${\bf x}=e^{\pm i\alpha {\bf t}}{\bf p}$, this operator further possesses the following emergent characteristic (see \ref{app2}):
\begin{propp}\label{prop4}The generalized graph Laplacian ${\bf L}_{\alpha}$ on a circulant graph is invertible for $\alpha\neq2\pi k/N, \ k\in\mathbb{N}$, and rank-deficient otherwise. \end{propp}
\noindent Consequently, we achieve exact annihilation of complex exponentials with $N({\bf L}_{\alpha})=z_1e^{i\alpha {\bf t}}+z_2e^{-i\alpha {\bf t}},\ z_1,z_2\in\mathbb{C}$ for $\alpha=2\pi k/N, \ k\in\mathbb{N}$, while, otherwise, this is subject to a border effect dependent on the graph bandwidth $M$. Further, ${\bf L}_{\alpha}$ reduces to the classical graph Laplacian for $\alpha=0$. Accordingly, it serves as a relevant case to explore transitional properties and relations between UoS models for square invertible and rank-deficient operators respectively. \\
We proceed to derive closed-form expressions for both cases on the simple cycle graph with ${\bf L}_{C,\alpha}=2\cos(\alpha){\bf I}_N-{\bf A}_C$ and representer polynomial $l_{C, \alpha}=2\cos(\alpha)-z-z^{-1}=-z^2(1-e^{i\alpha}z)(1-e^{-i\alpha}z)$, which has two exponential vanishing moments:
\begin{lem}\label{inva}
The inverse ${\bf L}_{C,\alpha}^{-1}$ of ${\bf L}_{C, \alpha}$ on the simple cycle, for $\alpha\neq 2\pi k/N,\enskip k\in\mathbb{N}$, has entries 
\[{L}_{C, \alpha}^{-1}(m,n)=\frac{1}{(-e^{-i\alpha}+e^{i\alpha})(-1+e^{i\alpha N})} e^{i\alpha |n-m|}
+\frac{1}{(e^{-i\alpha}-e^{i\alpha})(-1+e^{-i\alpha N})} e^{-i\alpha |n-m|},\]\[0\leq m,n\leq N-1.\]
\end{lem}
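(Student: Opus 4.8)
The plan is to exploit the circulant structure: since $\mathbf{L}_{C,\alpha}$ is circulant with representer polynomial $l_{C,\alpha}(z)=2\cos(\alpha)-z-z^{-1}$, its inverse is again circulant, so it suffices to determine a single sequence $g(n):=L_{C,\alpha}^{-1}(m,(m+n)_N)$ as a function of the circular distance $n$, and then substitute $n\mapsto |n-m|$ using the symmetry $l_i=l_{N-i}$. First I would translate the identity $\mathbf{L}_{C,\alpha}\mathbf{L}_{C,\alpha}^{-1}=\mathbf{I}_N$ into a linear constant-coefficient recurrence for $g$: reading off one row of the product gives $-g(n-1)+2\cos(\alpha)g(n)-g(n+1)=\delta_{n,0}$ for $n=0,\dots,N-1$, interpreted cyclically (indices mod $N$). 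Away from $n=0$ this is the homogeneous recurrence $g(n+1)-2\cos(\alpha)g(n)+g(n-1)=0$, whose characteristic polynomial $r^2-2\cos(\alpha)r+1=(r-e^{i\alpha})(r-e^{-i\alpha})$ has roots $e^{\pm i\alpha}$ — which is why $l_{C,\alpha}$ factors as stated and has two exponential vanishing moments. Hence the general solution on the "non-wrapping" range is $g(n)=A\,e^{i\alpha n}+B\,e^{-i\alpha n}$.

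Next I would pin down the constants $A,B$ by imposing (i) the circular boundary/jump condition coming from the single inhomogeneous equation at $n=0$, and (ii) the symmetry $g(n)=g(N-n)$ forced by the symmetry of $\mathbf{L}_{C,\alpha}$ (equivalently, the circulant inverse is symmetric). Concretely, writing $g$ on $0\le n\le N-1$ as $A e^{i\alpha n}+B e^{-i\alpha n}$ and using cyclicity to evaluate the recurrence at the two "seam" indices $n=0$ and $n=N-1$ yields two linear equations in $A,B$; the determinant of this $2\times 2$ system is proportional to $(e^{i\alpha}-e^{-i\alpha})(e^{i\alpha N}-1)$ up to units, which is nonzero precisely when $\alpha\neq 2\pi k/N$, consistent with Property~\ref{prop4}. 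Solving gives
\[
A=\frac{1}{(-e^{-i\alpha}+e^{i\alpha})(-1+e^{i\alpha N})},\qquad
B=\frac{1}{(e^{-i\alpha}-e^{i\alpha})(-1+e^{-i\alpha N})},
\]
and then replacing $n$ by $|n-m|$ (valid because $g(n)=g(N-n)$, so the formula is insensitive to the sign of $n-m$) produces exactly the claimed expression for $L_{C,\alpha}^{-1}(m,n)$.

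As a cross-check I would verify directly that the proposed entries satisfy $\mathbf{L}_{C,\alpha}\mathbf{L}_{C,\alpha}^{-1}=\mathbf{I}_N$ by a telescoping/geometric-sum computation, and confirm consistency with the known DFT diagonalization: since $\mathbf{L}_{C,\alpha}$ has eigenvalues $2\cos\alpha-2\cos(2\pi k/N)$, the inverse entries are $\frac1N\sum_k (2\cos\alpha-2\cos(2\pi k/N))^{-1} e^{2\pi i k(n-m)/N}$, and evaluating this geometric-type sum should reproduce the closed form; this also makes the non-vanishing of the eigenvalues (hence invertibility) manifest under the hypothesis on $\alpha$. The main obstacle I anticipate is purely bookkeeping rather than conceptual: handling the "seam" of the cyclic recurrence carefully — i.e.\ making sure the piecewise ansatz $A e^{i\alpha n}+B e^{-i\alpha n}$ on $0\le n\le N-1$ correctly encodes the wrap-around, and keeping track of the $e^{\pm i\alpha N}$ factors and the sign conventions so that the final symmetric form in $|n-m|$ comes out with the stated constants. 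I would also note the degenerate subtlety that for certain real $\alpha$ one could have $e^{i\alpha}=e^{-i\alpha}$ (i.e.\ $\alpha\in\pi\mathbb{Z}$), which is excluded here since $\alpha=0$ gives the singular classical Laplacian and $\alpha=\pi$ is of the form $2\pi k/N$ only for even $N$; away from those cases the two-root ansatz is valid.
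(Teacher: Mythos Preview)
Your proposal is correct and follows essentially the same route as the paper: exploit circulant symmetry to reduce to a scalar sequence in $l=|n-m|$, solve the characteristic equation $-r^2+2\cos(\alpha)r-1=0$ to get the ansatz $C_1 e^{i\alpha l}+C_2 e^{-i\alpha l}$, and then impose two linear conditions to fix the constants. The only minor difference is in the choice of the second constraint: the paper uses the row-sum identity $\sum_{l=0}^{N-1} L^{-1}_{C,\alpha}(l)=\frac{1}{2\cos(\alpha)-2}$ (from ${\bf L}_{C,\alpha}{\bf 1}_N=(2\cos(\alpha)-2){\bf 1}_N$) together with the boundary equation at $l=0$, whereas you use the two seam equations (equivalently, the symmetry $g(n)=g(N-n)$); both yield the same $2\times 2$ system with determinant proportional to $(e^{i\alpha}-e^{-i\alpha})(e^{i\alpha N}-1)$ and hence the same constants.
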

\begin{lem}\label{mppa}
The pseudoinverse ${\bf L}_{C, \alpha}^{\dagger}$ for $\alpha=2\pi k/N,\enskip k\in\mathbb{N}$ and $\alpha\neq 0, k \pi$, of ${\bf L}_{C, \alpha}$ on the simple cycle has entries
\[{L}_{C, \alpha}^{\dagger}(m,n)=\frac{e^{i\alpha}}{2N}\left(\frac{2|n-m|(-1+e^{2 i\alpha})+(N-1)-e^{2 i\alpha}(N+1)}{(-1+e^{2 i \alpha})^2}\right)e^{i\alpha |n-m|}\]
\[+\frac{e^{-i\alpha}}{2N}\left(\frac{2|n-m|(-1+e^{-2 i\alpha})+(N-1)-e^{-2 i\alpha}(N+1)}{(-1+e^{-2 i \alpha})^2}\right)e^{-i\alpha |n-m|},\enskip 0\leq m,n\leq N-1.\]
For $\alpha=\pi$ ($k=N/2$ at even $N$) we have
\[{L}_{C, \pi}^{\dagger}(m,n)=(-1)^{|n-m|+1}\left(\frac{(n-m)^2}{2N}-\frac{1}{2}|n-m|+\frac{N^2-1}{12 N}\right)\]\[=(-1)^{|n-m|+1}{L}_{C}^{\dagger}(m,n),\enskip 0\leq m,n\leq N-1.\]
\end{lem}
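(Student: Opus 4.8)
The plan is to find, in each case, the single sequence $g(\cdot)$ generating the symmetric circulant matrix $\mathbf{L}_{C,\alpha}^{\dagger}$, using the Green's-function characterisation of the MPP recalled in Section 4: $\mathbf{L}_{C,\alpha}\mathbf{L}_{C,\alpha}^{\dagger}=\mathbf{I}_N-\mathbf{P}$ together with $\mathbf{L}_{C,\alpha}^{\dagger}\mathbf{P}=\mathbf{0}$, where $\mathbf{P}$ is the orthogonal projector onto $N(\mathbf{L}_{C,\alpha})$. By Property \ref{prop4} and the accompanying description of the nullspace, for $\alpha=2\pi k/N$ with $0<\alpha<\pi$ one has $N(\mathbf{L}_{C,\alpha})=\mathrm{span}\{e^{i\alpha\mathbf{t}},e^{-i\alpha\mathbf{t}}\}$, the two atoms being orthogonal with squared norm $N$ (since $2\alpha\not\equiv 0\bmod 2\pi$), whence $P_{m,n}=\tfrac{2}{N}\cos(\alpha(m-n))$; for $\alpha=\pi$ (which forces $N$ even, $k=N/2$) the nullspace collapses to the one-dimensional $\mathrm{span}\{((-1)^n)_n\}$ and $P_{m,n}=\tfrac{1}{N}(-1)^{m-n}$.

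For $0<\alpha<\pi$, set $g(p)=L_{C,\alpha}^{\dagger}(0,p)$. Reading the first identity through the representer polynomial $l_{C,\alpha}(z)=2\cos\alpha-z-z^{-1}$ gives the scalar recurrence
\begin{equation*}
g(p+1)-2\cos\alpha\,g(p)+g(p-1)=\tfrac{2}{N}\cos(\alpha p)-\delta_0(p),\qquad p\in\mathbb{Z}_N,
\end{equation*}
while the second identity becomes the constraint $\sum_{p}g(p)e^{i\alpha p}=0$ (its conjugate following since $g$ is real). The homogeneous recurrence has distinct roots $e^{\pm i\alpha}$; the smooth forcing is resonant, and substituting $p\,e^{\pm i\alpha p}$ supplies the particular solution $\tfrac{p}{N(e^{i\alpha}-e^{-i\alpha})}\big(e^{i\alpha p}-e^{-i\alpha p}\big)$. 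I would then observe that the resulting family $g(p)=c_1e^{i\alpha p}+c_2e^{-i\alpha p}+\tfrac{p(e^{i\alpha p}-e^{-i\alpha p})}{N(e^{i\alpha}-e^{-i\alpha})}$ satisfies the recurrence including the $\delta_0$ at the seam for \emph{every} $c_1,c_2$ (consistent with $\dim N(\mathbf{L}_{C,\alpha})=2$; the $\delta_0$ is precisely what the non-periodicity of the resonant term demands), so reality pins $c_2=\overline{c_1}$ and the single complex condition $\sum_p g(p)e^{i\alpha p}=0$ then removes the remaining freedom. Evaluating $\sum_{p=0}^{N-1}p\,e^{2i\alpha p}=N/(e^{2i\alpha}-1)$ and collecting terms yields $c_2=-\tfrac{e^{i\alpha}}{2N}\cdot\tfrac{(N+1)-(N-1)e^{2i\alpha}}{(e^{2i\alpha}-1)^2}$, which on regrouping is exactly the stated closed form. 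The main obstacle here is purely computational: compressing the orthogonality-determined constant into the compact form with $(-1+e^{2i\alpha})^2$ denominators, and checking that the displayed formula — written via $|n-m|$ — is genuinely symmetric under $p\mapsto N-p$, which uses $e^{i\alpha N}=1$.

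For $\alpha=\pi$ I would bypass the recurrence by an orthogonal conjugation. Let $\mathbf{D}=\mathrm{diag}\big((-1)^0,(-1)^1,\dots,(-1)^{N-1}\big)$, which is real, symmetric, orthogonal with $\mathbf{D}^2=\mathbf{I}_N$. Since $N$ is even, every cycle edge joins vertices of opposite parity, so $\mathbf{D}\mathbf{A}_C\mathbf{D}=-\mathbf{A}_C$, hence $\mathbf{D}\mathbf{L}_{C,\pi}\mathbf{D}=\mathbf{D}(-2\mathbf{I}_N-\mathbf{A}_C)\mathbf{D}=-2\mathbf{I}_N+\mathbf{A}_C=-\mathbf{L}_C$. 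As the Moore--Penrose inverse commutes with orthogonal conjugation and with nonzero scalar multiples, $\mathbf{L}_{C,\pi}^{\dagger}=(-\mathbf{D}\mathbf{L}_C\mathbf{D})^{\dagger}=-\mathbf{D}\mathbf{L}_C^{\dagger}\mathbf{D}$, i.e.\ $L_{C,\pi}^{\dagger}(m,n)=-(-1)^{m+n}L_C^{\dagger}(m,n)=(-1)^{|n-m|+1}L_C^{\dagger}(m,n)$; inserting the expression for $L_C^{\dagger}$ from Property \ref{propp1} gives the claimed identity. (Equivalently, the substitution $g(p)=(-1)^{p+1}h(p)$ turns the $\alpha=\pi$ recurrence, together with its boundary and zero-sum conditions, into exactly the system characterising $\mathbf{L}_C^{\dagger}$, with the same outcome.)
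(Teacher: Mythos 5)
Your proof is correct. For $\alpha\neq 0,\pi$ it is essentially the paper's argument: the same resonant recurrence, the same particular solution $\tfrac{p(e^{i\alpha p}-e^{-i\alpha p})}{N(e^{i\alpha}-e^{-i\alpha})}$, and the same orthogonality constraint $\sum_p g(p)e^{i\alpha p}=0$ pinning the constant (your $c_2$ agrees exactly with the paper's $C_2$ after using $e^{i\alpha N}=1$ to collapse the power sum to $N/(e^{2i\alpha}-1)$). The only methodological difference is how the remaining two real degrees of freedom are removed: the paper imposes the diagonal ``boundary'' equation $2\cos\alpha\,L^{\dagger}(0)-2L^{\dagger}(1)=\tfrac{N-2}{N}$ to get $C_2=C_1+\tfrac{1}{e^{i\alpha}-e^{-i\alpha}}$, whereas you observe that in the fully periodic indexing the $-\delta_0$ at the seam is produced automatically by the non-periodicity of the resonant term and instead invoke reality ($c_2=\bar c_1$); both are legitimate and equivalent, since a circulant $\mathbf{X}$ satisfying $\mathbf{L}\mathbf{X}=\mathbf{I}-\mathbf{P}$ and $\mathbf{X}\mathbf{P}=\mathbf{0}$ already fulfils all four Penrose conditions (circulants commute), so uniqueness of the MPP closes the argument. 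For $\alpha=\pi$ your route is genuinely different and slicker: conjugating by $\mathrm{diag}((-1)^n)$ sends $\mathbf{A}_C\mapsto-\mathbf{A}_C$ (valid because $N$ is even, including the wrap-around edge), hence $\mathbf{L}_{C,\pi}=-\mathbf{D}\mathbf{L}_C\mathbf{D}$ and $\mathbf{L}_{C,\pi}^{\dagger}=-\mathbf{D}\mathbf{L}_C^{\dagger}\mathbf{D}$, which imports Property 4.1 wholesale and explains structurally why the two formulas coincide up to the sign $(-1)^{|n-m|+1}$; the paper instead re-solves the recurrence with forcing $(-1)^{l+1}/N$ and triple roots. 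The only items worth making explicit in a write-up are the routine verification that your first-column function satisfies $g(d)=g(N-d)$ (so that the answer can legitimately be written in terms of $|n-m|$), which you correctly flag, and the short Penrose-condition check above.
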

\noindent It becomes evident that, according to $\alpha$, the rows and columns of the parametric inverse ${\bf L}_{C, \alpha}^{-1}$ define complex exponentials, while those of the rank-deficient parametric pseudoinverse ${\bf L}_{C, \alpha}^{\dagger}$ define linear complex exponential polynomials, i.e. their function order exceeds the former by two degrees, as a result of the operator's rank-deficiency. In addition, the atoms $({\bf L}_{C, \alpha}^{-1})_j$ and $({\bf L}_{C, \alpha}^{\dagger})_j$ both exhibit a discontinuity between positions $j-1$ and $j$.
\begin{rmk}\label{contr1}
The derived formulae both cease to apply at $\alpha=0$, the classical graph Laplacian case, since they have a pole at that value; nevertheless, the expression for ${\bf L}_C^{\dagger}$ is known from previous work (see Property \ref{propp1}) to be a piecewise quadratic polynomial. To further reinforce the link between these derivations, we note that the recurrence relation resulting from the graph Laplacian MPP system ${\bf L}_C{\bf L}^{\dagger}_C={\bf I}_N-\frac{1}{N}{\bf J}_N$ with unknown ${L}^{\dagger}_C(m,n)$ and ${L}^{\dagger}_C(r):={L}^{\dagger}_C(m,n)$ for $r=|n-m|$, is $-{L}^{\dagger}_C(r+1)+2 {L}^{\dagger}_C(r)-{L}^{\dagger}_C(r-1)=-1/N$, with boundary condition $2 {L}^{\dagger}_C(0)-2 {L}^{\dagger}_C(1)=1-1/N$ and constraint $\sum_{r=0}^{N-1} {L}^{\dagger}_C(r)=0$. The homogeneous solution to the system is $({L}^{\dagger}_C)_H(r)=\frac{N^2-1}{12N} -\frac{1}{2} r$, which is notably linear, while the particular solution becomes $({L}^{\dagger}_C)_P(r)=\frac{r^2}{2N}$ as a result of repeated roots at $1$. In contrast, the recurrence relations for ${\bf L}_{C, \alpha}^{-1}$ and ${\bf L}_{C, \alpha}^{\dagger}$ give rise to single and double roots at $(e^{\pm i\alpha})^r$ respectively, which leads to their respective solution sets of complex exponentials $e^{\pm i\alpha {\bf t}}$ and linear complex exponential polynomials ${\bf p}e^{\pm i\alpha {\bf t}}+e^{\pm i\alpha {\bf t}}$. Here, distinct circular boundary conditions and constraints lead to multiple repeated roots, and hence, an extended solution space through the particular solution, for the MPP ${\bf L}_{C, \alpha}^{\dagger}$. Eventually, in the special case of $\alpha=0$, the system solution for ${\bf L}_{C, \alpha=0}^{\dagger}={\bf L}_C^{\dagger}$ gains triple roots with a quadratic polynomial solution set, i.e. the solution set for ${\bf L}_{C, \alpha}^{\dagger}$ gains another repeated root at $1^r$ for $\alpha=0$. For $\alpha=\pi$, we similarly observe the occurrence of triple roots in the recurrence relations for ${\bf L}_{C, \pi}^{\dagger}$, with the distinction  that they are given by $(-1)^{r}$; in fact, as a result of similar boundary conditions, the formulae for ${\bf L}_{C, \pi}^{\dagger}$ and ${\bf L}_{C}^{\dagger}$ coincide up to a factor $(-1)^{r+1}$. We refer to \ref{app2} for the detailed proofs.
\end{rmk}
In order to extend these properties to general circulant graphs, we state the following decomposition, as a generalization of Lemma \ref{lemdecomp}: 
\begin{lem}\label{newdecomp}
The generalized graph Laplacian ${\bf L}_{\alpha}$ on a connected circulant graph, with generating set $S$ and bandwidth $M<N/2$, can be decomposed as ${\bf L}_{\alpha}={\bf L}_{C, \alpha}{\bf P}_{\alpha}$, where ${\bf L}_{C, \alpha}$ is the generalized simple cycle graph Laplacian and ${\bf P}_{\alpha}$ is a circulant matrix of bandwidth $M-1$ with representer polynomial
\[{P}_{\alpha}(z)=\sum_{j=1}^Md_j \left(r_{j-1}+\sum_{t=1}^{j-1} r_{j-1-t} (z^t+z^{-t}))\right)\]
where 
\[r_t=\sum_{k=0}^{(t+1)/2-1}2\cos(\alpha(2k-t))=2\cos(\alpha t)+2\cos(\alpha(2-t))+...+2\cos(\alpha 3)+2\cos(\alpha)\]
when t is odd, and otherwise,
\[r_t=1+\sum_{k=0}^{t/2-1}2\cos(\alpha(2k-t))=2\cos(\alpha t)+...+2\cos(\alpha 2)+1\]
which depends on the graph at hand. 
\end{lem}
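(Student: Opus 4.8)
The plan mirrors the proof of Lemma \ref{lemdecomp}: since ${\bf L}_{\alpha}$, ${\bf L}_{C,\alpha}$ and the candidate factor ${\bf P}_{\alpha}$ are all symmetric circulant matrices of bandwidths $M$, $1$ and $M-1$, and $M<N/2$, the product of the latter two circulants again has bandwidth at most $M<N/2$, so the asserted matrix identity ${\bf L}_{\alpha}={\bf L}_{C,\alpha}{\bf P}_{\alpha}$ is equivalent to the exact identity of Laurent representer polynomials $L_{\alpha}(z)=L_{C,\alpha}(z)\,P_{\alpha}(z)$. Using ${\bf L}_{\alpha}=d_{\alpha}{\bf I}_N-{\bf A}$ with $d_{\alpha}=\sum_{j=1}^{M}2d_{j}\cos(\alpha j)$ and representer polynomial $A(z)=\sum_{j=1}^{M}d_{j}(z^{j}+z^{-j})$ gives $L_{\alpha}(z)=\sum_{j=1}^{M}d_{j}\bigl(2\cos(\alpha j)-z^{j}-z^{-j}\bigr)$, and the claimed $P_{\alpha}(z)$ is likewise a $d_{j}$-weighted sum of the blocks $Q_{j}(z):=r_{j-1}+\sum_{t=1}^{j-1}r_{j-1-t}(z^{t}+z^{-t})$; hence by linearity in the edge weights it suffices to prove, for every integer $j\ge 1$, the single-edge identity
\begin{equation*}
2\cos(\alpha j)-z^{j}-z^{-j}=\bigl(2\cos\alpha-z-z^{-1}\bigr)\,Q_{j}(z).
\end{equation*}

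To prove this I would factor $2\cos(\alpha j)-z^{j}-z^{-j}=-z^{-j}(z^{j}-e^{i\alpha j})(z^{j}-e^{-i\alpha j})$ and $2\cos\alpha-z-z^{-1}=-z^{-1}(z-e^{i\alpha})(z-e^{-i\alpha})$, so that the quotient of the two sides of the single-edge identity equals
\begin{equation*}
z^{-(j-1)}\cdot\frac{z^{j}-e^{i\alpha j}}{z-e^{i\alpha}}\cdot\frac{z^{j}-e^{-i\alpha j}}{z-e^{-i\alpha}}=z^{-(j-1)}\Bigl(\sum_{k=0}^{j-1}z^{k}e^{i\alpha(j-1-k)}\Bigr)\Bigl(\sum_{l=0}^{j-1}z^{l}e^{-i\alpha(j-1-l)}\Bigr),
\end{equation*}
where each factor has been expanded via the finite geometric sum $\tfrac{z^{j}-w^{j}}{z-w}=\sum_{k=0}^{j-1}z^{k}w^{j-1-k}$ with $w=e^{\pm i\alpha}$. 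Multiplying out and collecting the coefficient of $z^{m}$ (with $m=k+l-(j-1)$ ranging over $-(j-1),\dots,j-1$), an elementary re-indexing reduces that coefficient to the symmetric exponential sum $\sum_{s=0}^{n}e^{i\alpha(n-2s)}$ with $n=j-1-|m|$, i.e.\ the same object as the $m=0$ coefficient but with $j-1$ replaced by $j-1-|m|$; all coefficients with $|m|>j-1$ vanish.

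The one genuinely bookkeeping-heavy step — the main obstacle — is to recognise this sum as exactly the quantity $r_{n}$ of the statement: pairing the summand at $s$ with the summand at $n-s$ converts $\sum_{s=0}^{n}e^{i\alpha(n-2s)}$ into a sum of terms $2\cos(\alpha(n-2s))$ over half the range, with a leftover $+1$ precisely when $n$ is even (the self-paired middle index $s=n/2$) and no leftover when $n$ is odd — which is exactly the even/odd dichotomy in the two displayed formulas for $r_{t}$. Substituting back yields $Q_{j}(z)$ on the nose, establishing the single-edge identity and hence $L_{\alpha}(z)=L_{C,\alpha}(z)P_{\alpha}(z)$, i.e.\ ${\bf L}_{\alpha}={\bf L}_{C,\alpha}{\bf P}_{\alpha}$.

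Finally I would record three immediate consequences: the extreme monomial of $P_{\alpha}(z)$ is $d_{M}r_{0}z^{\pm(M-1)}=d_{M}z^{\pm(M-1)}\ne0$ and the coefficients of $z^{t}$ and $z^{-t}$ coincide, so ${\bf P}_{\alpha}$ is indeed symmetric circulant of bandwidth exactly $M-1$; setting $\alpha=0$ gives $r_{t}=t+1$ and recovers Lemma \ref{lemdecomp}; and, ${\bf L}_{C,\alpha}$ and ${\bf P}_{\alpha}$ being circulant, they commute, so the factorization may be written equally as ${\bf L}_{\alpha}={\bf P}_{\alpha}{\bf L}_{C,\alpha}$. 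For non-real $\alpha$ one may either repeat the same computation verbatim — the geometric series are purely formal — or note that both sides are polynomials in $z$ whose coefficients are entire in $\alpha$, so the identity for real $\alpha$ extends by analytic continuation.
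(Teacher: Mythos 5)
Your proof is correct and follows essentially the same route as the paper: both factor each single-edge term $2\cos(\alpha j)-z^{j}-z^{-j}$ through the roots $e^{\pm i\alpha}$ via finite geometric sums (your $\sum_{k}z^{k}e^{\pm i\alpha(j-1-k)}$ is, up to the scalar $e^{\pm i\alpha(j-1)}$, the paper's $p_j(z),q_j(z)$), and both obtain $r_t$ by pairing conjugate exponentials in the convolution coefficients, with the leftover $+1$ exactly when $t$ is even. Your closing observations (exact bandwidth $M-1$ when $d_M\neq 0$, recovery of Lemma \ref{lemdecomp} at $\alpha=0$, commutativity of the circulant factors) are correct but not required for the statement.
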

\noindent Unlike Lemma \ref{lemdecomp} for $\alpha=0$, the positive definiteness (or invertibility) of the banded matrix ${\bf P}_{\alpha}$ depends on parameter $\alpha$ and the graph connectivity; due to the complex connectivity of graphs and plethora of special cases, we refrain from deriving exact conditions on when this is satisfied here and assume this to be the case a priori. We note a relation to the occurrence of repeated eigenvalues in ${\bf L}_{\alpha}$. In general, we observe that with an increase in connectivity (i.e. bandwith $M$) and in the value of $\alpha$, ${\bf P}_{\alpha}^{-1}$ gains larger magnitude entries, and eventually loses its positive definite (non-singular) property. We proceed to state:
\begin{cor}\label{mppinva}
$(i)$ The inverse of ${\bf L}_{\alpha}$ for $\alpha\neq 2\pi k/N,\enskip k\in\mathbb{N}$, can be decomposed as ${\bf L}_{\alpha}^{-1}={\bf L}_{C, \alpha}^{-1}{\bf P}_{\alpha}^{-1}$, and its rows and columns constitute complex exponentials, as per Lemma \ref{inva}, which are perturbed by ${\bf P}_{\alpha}^{-1}$.\\
$(ii)$ For suitable $\alpha$, satisfying $\alpha=2\pi k/N$ and $\alpha\neq 0, k \pi,\enskip k\in\mathbb{N}$, and graph connectivity $G_S$, such that ${\bf P}_{\alpha}$ is positive definite, the MPP of ${\bf L}_{\alpha}$ can be decomposed as ${\bf L}_{\alpha}^{\dagger}={\bf L}_{C, \alpha}^{\dagger}{\bf P}_{\alpha}^{-1}$. The rows and columns of ${\bf L}_{\alpha}^{\dagger}$ constitute complex exponential linear polynomials, as per Lemma \ref{mppa}, which are perturbed by ${\bf P}_{\alpha}^{-1}$.
\end{cor}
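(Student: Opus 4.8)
\noindent\textit{Proof idea.} Both parts rest on the decomposition $\mathbf{L}_{\alpha}=\mathbf{L}_{C,\alpha}\mathbf{P}_{\alpha}$ of Lemma \ref{newdecomp}, on the rank dichotomy of Property \ref{prop4}, and on the fact that $\mathbf{L}_{\alpha}$, $\mathbf{L}_{C,\alpha}$ and $\mathbf{P}_{\alpha}$ are circulant of the same size, hence simultaneously diagonalized by the (unitary) DFT matrix $\mathbf{F}$ and mutually commuting. Write $\mathbf{L}_{C,\alpha}=\mathbf{F}^{H}\,\mathrm{diag}(\mu_j)\,\mathbf{F}$ and $\mathbf{P}_{\alpha}=\mathbf{F}^{H}\,\mathrm{diag}(\nu_j)\,\mathbf{F}$, so $\mathbf{L}_{\alpha}=\mathbf{F}^{H}\,\mathrm{diag}(\mu_j\nu_j)\,\mathbf{F}$; the standing hypothesis that $\mathbf{P}_{\alpha}$ is positive definite means $\nu_j\neq 0$ for all $j$.

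For $(i)$: when $\alpha\neq 2\pi k/N$ the simple-cycle operator $\mathbf{L}_{C,\alpha}$ is invertible (Property \ref{prop4} applied to the cycle, in agreement with the closed form of Lemma \ref{inva}), so all $\mu_j\neq 0$; since $\mathbf{L}_{\alpha}$ is invertible by Property \ref{prop4}, $\mathbf{P}_{\alpha}=\mathbf{L}_{C,\alpha}^{-1}\mathbf{L}_{\alpha}$ is invertible as well, and inverting the product while using that circulant matrices commute gives $\mathbf{L}_{\alpha}^{-1}=\mathbf{P}_{\alpha}^{-1}\mathbf{L}_{C,\alpha}^{-1}=\mathbf{L}_{C,\alpha}^{-1}\mathbf{P}_{\alpha}^{-1}$. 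The $j$-th column of this product is the circular convolution of the complex-exponential first column of $\mathbf{L}_{C,\alpha}^{-1}$ from Lemma \ref{inva} with the $j$-th column of $\mathbf{P}_{\alpha}^{-1}$, whose entries decay exponentially away from index $j$ (cf.\ the decay estimates for inverses of cyclically banded positive definite matrices, \cite{volkov}); this is exactly the asserted perturbed complex exponential.

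For $(ii)$: when $\alpha=2\pi k/N$ with $\alpha\neq 0,k\pi$, Property \ref{prop4} makes $\mathbf{L}_{\alpha}$ (and $\mathbf{L}_{C,\alpha}$) rank-deficient, so some $\mu_j$ vanish and the rest do not. The pseudoinverse of a circulant matrix is again circulant, with eigenvalues obtained by inverting the nonzero $\mu_j\nu_j$ and leaving the zero ones in place, so it suffices to check the scalar identity $(\mu_j\nu_j)^{\dagger}=\mu_j^{\dagger}\nu_j^{-1}$ for every $j$: if $\mu_j\neq 0$ both sides equal $\mu_j^{-1}\nu_j^{-1}$, and if $\mu_j=0$ both sides vanish, where one uses $\nu_j\neq 0$. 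Hence $\mathbf{L}_{\alpha}^{\dagger}=\mathbf{F}^{H}\,\mathrm{diag}(\mu_j^{\dagger}\nu_j^{-1})\,\mathbf{F}=\mathbf{L}_{C,\alpha}^{\dagger}\mathbf{P}_{\alpha}^{-1}$ (equivalently, verify the four Moore--Penrose conditions for $\mathbf{L}_{C,\alpha}^{\dagger}\mathbf{P}_{\alpha}^{-1}$ directly using commutativity and $\mathbf{L}_{C,\alpha}^{\dagger}\mathbf{L}_{C,\alpha}=\mathbf{L}_{C,\alpha}\mathbf{L}_{C,\alpha}^{\dagger}$). As a consistency check, $\mathbf{P}_{\alpha}^{-1}$ fixes each DFT eigenvector up to scale, so $N(\mathbf{L}_{\alpha})=N(\mathbf{L}_{C,\alpha})=span\{e^{i\alpha\mathbf{t}},e^{-i\alpha\mathbf{t}}\}$, which matches the range/nullspace structure required of the MPP. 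The column description then follows as in $(i)$, now convolving the complex-exponential linear polynomial columns of $\mathbf{L}_{C,\alpha}^{\dagger}$ from Lemma \ref{mppa} against the localized columns of $\mathbf{P}_{\alpha}^{-1}$.

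The main obstacle is conceptual rather than computational: the reverse-order law $(\mathbf{A}\mathbf{B})^{\dagger}=\mathbf{B}^{\dagger}\mathbf{A}^{\dagger}$ is false in general, so the factorization of $\mathbf{L}_{\alpha}^{\dagger}$ cannot be obtained by a formal rule; what rescues it is the simultaneous DFT-diagonalizability of all circulant matrices involved and, crucially, the invertibility of $\mathbf{P}_{\alpha}$ — the scalar identity, and hence the whole argument, fails at any frequency with $\nu_j=0$, which is precisely why the positive-definiteness hypothesis on $\mathbf{P}_{\alpha}$ cannot be dropped. The only remaining care is to confirm that the excluded values $\alpha\in\{0,k\pi\}$ are genuinely treated separately (Property \ref{propp1} and the last display of Lemma \ref{mppa}), and to make the word ``perturbation'' quantitative via the off-diagonal decay of $\mathbf{P}_{\alpha}^{-1}$, which is only approximately banded and whose decay degrades as the bandwidth $M$ and $|\alpha|$ grow.
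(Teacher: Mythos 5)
Your proposal is correct and follows essentially the same route as the paper: both rest on the factorization ${\bf L}_{\alpha}={\bf L}_{C,\alpha}{\bf P}_{\alpha}$ of Lemma \ref{newdecomp}, the invertibility dichotomy of Property \ref{prop4}, and simultaneous diagonalization of all factors by the DFT matrix, with the pseudoinverse identity checked eigenvalue-by-eigenvalue exactly as in the paper's proof of Lemma \ref{lemcircl} (which the paper's own proof of this corollary simply cites). Your explicit remark that the general reverse-order law for the MPP fails and is rescued only by commutativity plus the invertibility of ${\bf P}_{\alpha}$ is a useful clarification the paper leaves implicit, but it does not change the argument.
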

\noindent Consequently, we derive the cosparse analysis model for ${\bf L}_{\alpha}$, generated by $N({\bf \Psi}_{\Lambda}{\bf L}_{\alpha})$:
\begin{prop}\label{newnullsp1}
The nullspace $N({\bf \Psi}_{\Lambda}{\bf L}_{\alpha})$ of the generalized graph Laplacian operator ${\bf L}_{\alpha}$ on a circulant graph is given by\\
$(i)$ $N({\bf \Psi}_{\Lambda}{\bf L}_{\alpha})={\bf L}_{\alpha}^{-1}{\bf \Psi}^T_{\Lambda^{\complement}}\tilde{{\bf c}}$ for $\alpha\neq 2\pi k/N,\ k\in\mathbb{N}$, with arbitrary $\tilde{{\bf c}}\in\mathbb{R}^{|\Lambda^{\complement}|}$, and\\
$(ii)$ $N({\bf \Psi}_{\Lambda}{\bf L}_{\alpha})={\bf L}_{\alpha}^{\dagger}{\bf \Psi}_{\Lambda^{\complement}}^T{\bf W}_{\alpha}{\bf c}+z_1e^{i\alpha{\bf t}}+z_2e^{-i\alpha{\bf t}}$ for $\alpha=2\pi k/N,\ k\in\mathbb{N}$, and $\alpha\neq 0, k \pi$, with ${\bf W}_{\alpha}:=N({\bf \Psi}_{\Lambda}{\bf E}_{\alpha}{\bf \Psi}^T_{\Lambda^{\complement}})\in\mathbb{C}^{|\Lambda^{\complement}|\times |\Lambda^{\complement}|-2}$, ${\bf E}_{\alpha}=\sum_{j=1}^2{\bf u}_j {\bf u}^H_j$ for ${\bf u}_1=e^{ i\alpha {\bf t}}$, ${\bf u}_2=e^{-i\alpha {\bf t}}$, and arbitrary ${\bf c}\in\mathbb{C}^{|\Lambda^{\complement}|-2}$ and $z_1,z_2\in\mathbb{C}$. \\If $|\Lambda^{\complement}|< 3$, $N({\bf \Psi}_{\Lambda}{\bf L}_{\alpha})=N({\bf L}_{\alpha})$, while in the special case where $\Lambda^{\complement}=\{m,(m+\frac{N}{2})\},\ m\in V$, $N({\bf \Psi}_{\Lambda}{\bf L}_{\alpha})$ has rank $|\Lambda^{\complement}|+1$.  
\end{prop}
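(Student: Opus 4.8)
The plan is to treat the two regimes separately, in each case reducing the problem to Fredholm–Alternative bookkeeping exactly as in the proof of Proposition \ref{prop1}, and then to perform a rank count for the constraint matrix ${\bf W}_\alpha$. For case $(i)$, where $\alpha\neq 2\pi k/N$, I would invoke Property \ref{prop4}: ${\bf L}_\alpha$ is then invertible, so $N({\bf L}_\alpha^H)=\{{\bf 0}\}$ and the Fredholm Alternative imposes no constraint. The equation ${\bf \Psi}_\Lambda{\bf L}_\alpha{\bf u}={\bf 0}_\Lambda$ is equivalent to ${\bf L}_\alpha{\bf u}\in N({\bf \Psi}_\Lambda)=R({\bf \Psi}^T_{\Lambda^\complement})$, i.e.\ ${\bf L}_\alpha{\bf u}={\bf \Psi}^T_{\Lambda^\complement}\tilde{\bf c}$, whence ${\bf u}={\bf L}_\alpha^{-1}{\bf \Psi}^T_{\Lambda^\complement}\tilde{\bf c}$; since ${\bf L}_\alpha^{-1}$ is nonsingular and ${\bf \Psi}^T_{\Lambda^\complement}$ has full column rank, the map $\tilde{\bf c}\mapsto{\bf L}_\alpha^{-1}{\bf \Psi}^T_{\Lambda^\complement}\tilde{\bf c}$ is injective and the nullspace has dimension exactly $|\Lambda^\complement|$.

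For case $(ii)$, where $\alpha=2\pi k/N$ with $\alpha\neq 0,k\pi$, I would first record that $d_\alpha$ is real, so ${\bf L}_\alpha=d_\alpha{\bf I}_N-{\bf A}$ is real symmetric and $N({\bf L}_\alpha)=N({\bf L}_\alpha^H)=\mathrm{span}\{{\bf u}_1,{\bf u}_2\}$, and that for $\alpha\neq 0,k\pi$ the vectors $\tfrac1{\sqrt N}{\bf u}_1,\tfrac1{\sqrt N}{\bf u}_2$ are orthonormal, whence ${\bf L}_\alpha{\bf L}_\alpha^\dagger={\bf I}_N-\tfrac1N{\bf E}_\alpha$. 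Then, mirroring the proof of Proposition \ref{prop1}: $N({\bf L}_\alpha)\subset N({\bf \Psi}_\Lambda{\bf L}_\alpha)$ trivially; for the remaining directions one solves ${\bf L}_\alpha{\bf u}={\bf \Psi}^T_{\Lambda^\complement}{\bf w}$, which by the F.A.\ forces ${\bf \Psi}^T_{\Lambda^\complement}{\bf w}\perp N({\bf L}_\alpha)$, and sets ${\bf u}={\bf L}_\alpha^\dagger{\bf \Psi}^T_{\Lambda^\complement}{\bf w}$. Using ${\bf \Psi}_\Lambda{\bf \Psi}^T_{\Lambda^\complement}={\bf 0}$,
\begin{equation*}{\bf \Psi}_\Lambda{\bf L}_\alpha{\bf u}={\bf \Psi}_\Lambda\left({\bf I}_N-\tfrac1N{\bf E}_\alpha\right){\bf \Psi}^T_{\Lambda^\complement}{\bf w}=-\tfrac1N{\bf \Psi}_\Lambda{\bf E}_\alpha{\bf \Psi}^T_{\Lambda^\complement}{\bf w},\end{equation*}
so the condition becomes ${\bf w}\in N({\bf \Psi}_\Lambda{\bf E}_\alpha{\bf \Psi}^T_{\Lambda^\complement})=:{\bf W}_\alpha$. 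I would then check the converse (decompose any ${\bf u}\in N({\bf \Psi}_\Lambda{\bf L}_\alpha)$ along $N({\bf L}_\alpha)\oplus R({\bf L}_\alpha)$ and observe that ${\bf L}_\alpha{\bf u}_\perp$ is supported on $\Lambda^\complement$ and orthogonal to $N({\bf L}_\alpha)$, hence of the required form), giving $N({\bf \Psi}_\Lambda{\bf L}_\alpha)=z_1e^{i\alpha{\bf t}}+z_2e^{-i\alpha{\bf t}}+{\bf L}_\alpha^\dagger{\bf \Psi}^T_{\Lambda^\complement}{\bf W}_\alpha{\bf c}$.

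The rank statements I would obtain as follows. Since $\mathrm{rank}({\bf E}_\alpha)=2$, $\mathrm{rank}({\bf \Psi}_\Lambda{\bf E}_\alpha{\bf \Psi}^T_{\Lambda^\complement})\leq2$, with equality in the non-degenerate case (the restrictions of ${\bf u}_1,{\bf u}_2$ to $\Lambda^\complement$ and to $\Lambda$ each being linearly independent, which needs $|\Lambda^\complement|\geq3$), giving $\dim{\bf W}_\alpha=|\Lambda^\complement|-2$; together with the injectivity of ${\bf L}_\alpha^\dagger$ on $R({\bf L}_\alpha)$ and $\dim N({\bf L}_\alpha)=2$, the total rank is $|\Lambda^\complement|$. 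For $|\Lambda^\complement|<3$, ${\bf W}_\alpha$ contributes nothing beyond $N({\bf L}_\alpha)$, so $N({\bf \Psi}_\Lambda{\bf L}_\alpha)=N({\bf L}_\alpha)$. For the antipodal configuration $\Lambda^\complement=\{m,m+\tfrac N2\}$, I would use $e^{\pm i\alpha(m+N/2)}=(-1)^k e^{\pm i\alpha m}$ to see that the $\Lambda^\complement$-restrictions of ${\bf u}_1$ and ${\bf u}_2$ are both proportional to $(1,(-1)^k)^T$, so ${\bf \Psi}_\Lambda{\bf E}_\alpha{\bf \Psi}^T_{\Lambda^\complement}$ is a sum of two rank-one matrices with a common right factor, hence has rank $\leq1$; thus $\dim{\bf W}_\alpha\geq|\Lambda^\complement|-1=1$, the extra vector ${\bf L}_\alpha^\dagger{\bf \Psi}^T_{\Lambda^\complement}{\bf W}_\alpha$ lies in $R({\bf L}_\alpha)\setminus\{{\bf 0}\}$ and is therefore independent of $N({\bf L}_\alpha)$, giving total rank $3=|\Lambda^\complement|+1$.

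I expect the main obstacle to be the rank count for ${\bf \Psi}_\Lambda{\bf E}_\alpha{\bf \Psi}^T_{\Lambda^\complement}$: showing it is generically $2$ and isolating exactly which configurations (the antipodal pair among them) force it down to $1$, together with the conjugate/Hermitian bookkeeping that the complex nullspace vectors ${\bf u}_1,{\bf u}_2$ impose, and the verification that ${\bf L}_\alpha^\dagger$ is injective on the relevant restricted subspace, so that the claimed dimensions are exact rather than mere upper bounds.
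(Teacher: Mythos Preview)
Your proposal is correct and follows essentially the same approach as the paper: invoke Property \ref{prop4} for invertibility in case $(i)$, and in case $(ii)$ mirror the Fredholm--Alternative argument of Proposition \ref{prop1} using ${\bf L}_\alpha{\bf L}_\alpha^\dagger={\bf I}_N-\tfrac1N{\bf E}_\alpha$, followed by a rank--nullity count for ${\bf \Psi}_\Lambda{\bf E}_\alpha{\bf \Psi}^T_{\Lambda^\complement}$. Your treatment of the antipodal configuration is marginally more structural than the paper's---you factor ${\bf E}_\alpha$ and observe that the $\Lambda^\complement$-restrictions of ${\bf u}_1,{\bf u}_2$ become collinear, whereas the paper solves the $2\times 2$ orthogonality system $c_1e^{\pm i\alpha t_1}+c_2e^{\pm i\alpha t_2}=0$ directly---but the content is the same.
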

\noindent \textit{Proof.} See Appendix.\\
\\
\noindent In the rank-deficient case of Prop.\ \ref{newnullsp1}, the analysis constraint ${\bf \Psi}_{\Lambda^{\complement}}^T{\bf W}_{\alpha}$ encapsulates orthogonality to complex exponentials such that $({\bf E}_{\alpha})_j\perp{\bf \Psi}_{\Lambda^{\complement}}^T{\bf W}_{\alpha}{\bf c},\enskip j\in\Lambda$, is satisfied and the complex exponential part of ${\bf L}_{\alpha}^{\dagger}$ is annihilated, signifying a reduction in its order. 
Specifically, we can express ${\bf \Psi}_{\Lambda^{\complement}}^T{\bf W}_{\alpha}{\bf c}$ as $({\bf w}_{\alpha}\ast{\bf w}_{-\alpha})_{mod N}$, the circular convolution of $2$-sparse constraint vectors ${\bf w}_{\pm \alpha}\in\mathbb{C}^N$, with ${\bf w}_{\pm \alpha}=(e^{\pm i \alpha (n-m)}  {\bf e}_m-{\bf e}_n),\ n>m,$ of support in $\Lambda^{\complement}$, which respectively capture orthogonality to $e^{i \alpha {\bf t}}$ and $e^{-i \alpha {\bf t}}$. When ${\Lambda^{\complement}}$ is entirely arbitrary, the specific shape of the subspace ${\bf L}_{\alpha}^{\dagger}{\bf \Psi}_{\Lambda^{\complement}}^T{\bf W}_{\alpha}$ is more opaque 
due to the variation in ${\bf \Psi}_{\Lambda^{\complement}}^T{\bf W}_{\alpha}$; if additional conditions are imposed however, we can state the following: 
\begin{rmk}\label{constr}
In the case when $\Lambda^{\complement}\subset V$ has at least three consecutive elements, the constraint ${\bf \Psi}_{\Lambda^{\complement}}^T{\bf W}_{\alpha}$ can be conveniently formulated through a basis whose vectors possess consecutive non-zero entries $\lbrack -1\enskip 2\cos(\alpha)\enskip -1 \rbrack$ and zeros otherwise. In particular, these represent the (wavelet) basis elements of shortest length $3$, which are orthogonal to ${e^{\pm i\alpha {\bf t}}}$, and further represent the rows/columns of ${\bf L}_{C, \alpha}$. Due to ${\bf L}_{\alpha}^{\dagger}({\bf L}_{C, \alpha})_j={\bf P}_{\alpha}^{-1}{\bf L}_{C,\alpha}^{\dagger}({\bf L}_{C,\alpha})_j={\bf P}_{\alpha}^{-1}({\bf I}_N-\frac{1}{N}{\bf E}_{\alpha})_j$ such that ${\bf \Psi}_{\Lambda^{\complement}}^T{\bf W}_{\alpha}{\bf c}=({\bf L}_{C,\alpha})_j$ for suitable $j$ and ${\bf c}$, the analysis subspace consequently assumes the form of perturbed complex exponentials with discontinuities at positions $j$.
When ${\Lambda^{\complement}}$ further contains the graph support such that ${\bf L}_{\alpha}^{\dagger}{\bf \Psi}_{\Lambda^{\complement}}^T{\bf W}_{\alpha}{\bf c}={\bf L}_{\alpha}^{\dagger}({\bf L}_{\alpha})_j=({\bf I}_N-\frac{1}{N}{\bf E}_{\alpha})_j$ for some ${\bf W}_{\alpha}{\bf c}$ and index $j$, the solution subspace generates complex exponentials with knots at positions $j$. Nevertheless, unlike in the classical graph Laplacian case where one may consider the constrained ${\bf L}^{\dagger}{\bf L}_j$ and ${\bf L}^{\dagger}{\bf S}_j^T$, this representation does not facilitate any further reduction in degree since the generalized ${\bf L}_{\alpha}$ does not feature any additional vanishing moments compared to ${\bf L}_{C, \alpha}$.\footnote{Since ${\bf L}$ is PSD, the possible analysis constraints, given by ${\bf S}^T_j$, which is necessary and sufficient by the F.A., and the higher-order ${\bf L}_j$, are incremental; since ${\bf L}_{\alpha}$ is not PSD, similar decompositions do not exist and the exponential vanishing moments of $({\bf L}_{\alpha})_j$ are both absorbed into the analysis constraint, while for ${\bf L}$ only one vanishing moment is needed.} 
\end{rmk}
Following Rmk.\ \ref{constr}, for certain ${\Lambda^{\complement}}$, the introduced constraint reduces the subspace ${\bf L}_{\alpha}^{\dagger}={\bf P}_{\alpha}^{-1}{\bf L}_{\alpha,C}^{\dagger}$ of perturbed linear complex exponential polynomials by two orders, i.e. to perturbed complex exponentials, which confines the solution subspace of the analysis model $N({\bf \Psi}_{\Lambda}{\bf L}_{\alpha})$ to perturbed (piecewise) complex exponential functions. As a result, their order effectively coincides with the order of the functions in $N({\bf \Psi}_{\Lambda}{\bf L}_{\alpha})={\bf L}_{\alpha}^{-1}{\bf \Psi}^T_{\Lambda^{\complement}}{\bf c}$ in the invertible case. 
\\
\\
Accordingly, for $\Lambda^{\complement}$ as above, apart from a difference in admissible parameter values, the analysis subspaces of the rank-deficient and full-rank operators are comparable in order, while their corresponding synthesis subspaces differ in order, i.e. the latter exceeds the former by two exponential degrees. 
In direct comparison, the basis functions ${\bf L}_{\alpha}^{-1}$ of the full-rank operator ${\bf L}_{\alpha}$ are (perturbed) complex exponentials facilitating equivalence between the analysis and synthesis domains, while the basis functions ${\bf L}_{\alpha}^{\dagger}$ of the rank-deficient ${\bf L}_{\alpha}$ are (perturbed) complex exponential polynomials in the synthesis case, which exceed the former by two degrees and reduce to complex exponentials via the exponential vanishing moment constraints in the analysis case; hence, within their respective domain for $\alpha$, ${\bf L}_{\alpha}^{-1}$ and ${\bf L}_{\alpha}^{\dagger}$ are comparable in the analysis case as they assume the same order up to the knots\footnote{The knots can be inferred from ${\bf L}_{\alpha} ({\bf L}_{\alpha}^{-1})_j={\bf e}_j$ and, assuming a sufficiently sparse graph support, ${\bf L}_{\alpha}^{2}({\bf L}_{\alpha}^{\dagger})_j=({\bf L}_{\alpha})_j$.} and parameterization, while their degrees differ in the synthesis case.\\
Furthermore, due to the existence of a nullspace for the rank-deficient case, we further note that the corresponding solution subspaces in both analysis and synthesis models, i.e. ${\bf L}_{\alpha}^{\dagger}{\bf \Psi}_{\Lambda^{\complement}}^T{\bf W}_{\alpha}$ and ${\bf L}_{\alpha}^{\dagger}{\bf \Psi}_{\Lambda^{\complement}}^T$, are orthogonal to $N({\bf L}_{\alpha})$. Here, we assume that ${\bf P}_{\alpha}$ is positive definite, invoking a localized perturbation effect; if this is not satisfied, the solution structure is further affected by a potentially dense ${\bf P}_{\alpha}^{-1}$ and, in case of singularity, ${\bf P}_{\alpha}^{\dagger}$ and $N({\bf P}_{\alpha})$.
\\
In the merging of these findings, we substantiate the relation between the cosparse analysis and sparse synthesis models for ${\bf L}_{\alpha}$:
\begin{thm}\label{main2}
The cosparse analysis graph signal model, described by the generalized graph Laplacian ${\bf L}_{\alpha}$, with ${\Lambda^{\complement}}$ as in Rm.\ \ref{constr}, encompasses perturbed (piecewise) complex exponential functions, generated by \\
$(i1)$ ${\bf L}_{\alpha}^{-1}{\bf \Psi}^T_{\Lambda^{\complement}}\tilde{{\bf c}}$ for $\alpha\neq 2\pi k/N,\ k\in\mathbb{N}$, with $\tilde{{\bf c}}\in\mathbb{C}^{|\Lambda^{\complement}|}$, and by\\
$(ii1)$ ${\bf L}_{\alpha}^{\dagger}{\bf \Psi}_{\Lambda^{\complement}}^T N({\bf \Psi}_{\Lambda}{\bf E}_{\alpha}{\bf \Psi}^T_{\Lambda^{\complement}}){\bf c}+z_1e^{i\alpha{\bf t}}+z_2e^{-i\alpha{\bf t}}$ for $\alpha=2\pi k/N,\ k\in\mathbb{N}$ and $\alpha\neq 0, k \pi$, with ${{\bf c}}\in\mathbb{C}^{|\Lambda^{\complement}|-2}$. \\
The sparse synthesis graph signal model \\
$(i2)$ is equivalent to the former, in the nonsingular case, for $\alpha\neq 2\pi k/N,\ k\in\mathbb{N}$, \\
$(ii2)$ otherwise, in the singular case for $\alpha=2\pi k/N,\ k\in\mathbb{N}$ and $\alpha\neq 0, k \pi$, it is described by perturbed piecewise linear complex exponential polynomials, generated by ${\bf L}_{\alpha}^{\dagger}{\bf \Psi}_{\Lambda^{\complement}}^T$. Hence, the analysis model represents a constrained version of it with respect to $N({\bf \Psi}_{\Lambda}{\bf E}_{\alpha}{\bf \Psi}^T_{\Lambda^{\complement}})$, and up to a translation by the nullspace $N({\bf L}_{\alpha})=z_1e^{i\alpha{\bf t}}+z_2e^{-i\alpha{\bf t}}$.
\end{thm}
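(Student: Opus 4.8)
The plan is to obtain Theorem~\ref{main2} by merging the component results already in hand, treating the nonsingular regime $\alpha\neq 2\pi k/N$ and the singular regime $\alpha=2\pi k/N$, $\alpha\neq 0,k\pi$, in parallel, and in each case first pinning down the analysis subspace, then the synthesis subspace, and finally the comparison between them. For the analysis model: when $\alpha\neq 2\pi k/N$, Property~\ref{prop4} gives invertibility and Prop.~\ref{newnullsp1}$(i)$ gives $N({\bf \Psi}_{\Lambda}{\bf L}_{\alpha})={\bf L}_{\alpha}^{-1}{\bf \Psi}^T_{\Lambda^{\complement}}\tilde{{\bf c}}$; substituting the decomposition ${\bf L}_{\alpha}^{-1}={\bf L}_{C,\alpha}^{-1}{\bf P}_{\alpha}^{-1}$ from Cor.~\ref{mppinva}$(i)$ and the closed form of Lemma~\ref{inva} identifies the generating columns as complex exponentials $e^{\pm i\alpha|n-m|}$ acted on by the (approximately banded) perturbation ${\bf P}_{\alpha}^{-1}$, i.e.\ perturbed complex exponentials, which gives $(i1)$. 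When $\alpha=2\pi k/N$, $\alpha\neq 0,k\pi$, Prop.~\ref{newnullsp1}$(ii)$ already yields the form ${\bf L}_{\alpha}^{\dagger}{\bf \Psi}_{\Lambda^{\complement}}^T{\bf W}_{\alpha}{\bf c}+z_1e^{i\alpha{\bf t}}+z_2e^{-i\alpha{\bf t}}$ with ${\bf W}_{\alpha}=N({\bf \Psi}_{\Lambda}{\bf E}_{\alpha}{\bf \Psi}^T_{\Lambda^{\complement}})$; to see that this is a space of perturbed (piecewise) complex exponentials I would invoke Cor.~\ref{mppinva}$(ii)$ together with Lemma~\ref{mppa} — so the unconstrained columns of ${\bf L}_{\alpha}^{\dagger}={\bf L}_{C,\alpha}^{\dagger}{\bf P}_{\alpha}^{-1}$ are perturbed linear complex exponential polynomials — and then apply Rem.~\ref{constr}: for $\Lambda^{\complement}$ containing three consecutive indices the constraint ${\bf W}_{\alpha}$ can be re-expressed through the length-$3$ vectors $\lbrack-1\ \ 2\cos\alpha\ \ -1\rbrack$, i.e.\ rows of ${\bf L}_{C,\alpha}$, whence ${\bf L}_{\alpha}^{\dagger}{\bf \Psi}_{\Lambda^{\complement}}^T{\bf W}_{\alpha}{\bf c}={\bf P}_{\alpha}^{-1}({\bf I}_N-\frac{1}{N}{\bf E}_{\alpha})_j$ for suitable $j$, a twofold order reduction leaving perturbed complex exponentials with knots at $j$, translated by $N({\bf L}_{\alpha})$; this gives $(ii1)$.

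For the synthesis model and the comparison: in the nonsingular case ${\bf D}={\bf L}_{\alpha}^{-1}={\bf \Omega}^{-1}$, so by the elementary argument of Sect.~3.1 the nullspace translation term vanishes and $V_{\Lambda^{\complement}}=W_{\Lambda}$ for every matching support, hence the two unions of subspaces coincide, giving $(i2)$. In the singular case the dictionary ${\bf L}_{\alpha}^{\dagger}={\bf L}_{C,\alpha}^{\dagger}{\bf P}_{\alpha}^{-1}$ has, by Cor.~\ref{mppinva}$(ii)$ and Lemma~\ref{mppa}, columns that are perturbed linear complex exponential polynomials, so the synthesis union $\bigcup_{\Lambda^{\complement}}span(({\bf L}_{\alpha}^{\dagger})_j,\,j\in\Lambda^{\complement})$ is generated by ${\bf L}_{\alpha}^{\dagger}{\bf \Psi}_{\Lambda^{\complement}}^T$ as claimed in $(ii2)$; comparing with Prop.~\ref{newnullsp1}$(ii)$ then exhibits the analysis model as exactly this synthesis model restricted to coefficient vectors in the column range of ${\bf W}_{\alpha}=N({\bf \Psi}_{\Lambda}{\bf E}_{\alpha}{\bf \Psi}^T_{\Lambda^{\complement}})$ — equivalently, those satisfying the Fredholm-Alternative orthogonality $({\bf E}_{\alpha})_j\perp{\bf \Psi}_{\Lambda^{\complement}}^T{\bf W}_{\alpha}{\bf c}$ for $j\in\Lambda$ — plus the translation by $N({\bf L}_{\alpha})=z_1e^{i\alpha{\bf t}}+z_2e^{-i\alpha{\bf t}}$, and one observes that both constrained subspaces are orthogonal to $N({\bf L}_{\alpha})$.

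The step I expect to be the main obstacle is the order-reduction claim in the singular analysis case: one must check with care that, for the prescribed class of $\Lambda^{\complement}$, the space $N({\bf \Psi}_{\Lambda}{\bf E}_{\alpha}{\bf \Psi}^T_{\Lambda^{\complement}})$ genuinely admits the sparse $\lbrack-1\ \ 2\cos\alpha\ \ -1\rbrack$ representation (equivalently $({\bf w}_{\alpha}\ast{\bf w}_{-\alpha})_{\mathrm{mod}\,N}$ with $2$-sparse factors), and that \emph{both} exponential vanishing moments of $({\bf L}_{C,\alpha})_j$ are absorbed into this constraint — in contrast to the PSD case ${\bf L}={\bf S}^T{\bf S}$, where a single moment ${\bf S}^T_j$ already suffices and a second ${\bf L}_j$ moment is only optional — so that the reduction is by two exponential degrees rather than one, yielding the order match with the nonsingular analysis model. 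The residual degenerate configuration $\Lambda^{\complement}=\{m,m+N/2\}$, which raises the rank by one, is excluded by the hypothesis on $\Lambda^{\complement}$ and needs only a remark; everything else is bookkeeping over the decompositions of Lemmata~\ref{lemdecomp}, \ref{newdecomp} and Cor.~\ref{mppinva}.
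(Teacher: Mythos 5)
Your proposal is correct and follows essentially the same route as the paper, which states the theorem as a merger of the preceding results (Property \ref{prop4}, Prop.\ \ref{newnullsp1}, Cor.\ \ref{mppinva}, Lemmata \ref{inva} and \ref{mppa}, and Rem.\ \ref{constr}) and offers no separate proof beyond that discussion. Your identification of the order-reduction step — both exponential vanishing moments of $({\bf L}_{C,\alpha})_j$ being absorbed into the constraint ${\bf W}_{\alpha}$, in contrast to the PSD case of ${\bf L}$ — as the delicate point matches the paper's own emphasis in Rem.\ \ref{constr} and its footnote.
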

\noindent In particular, when $\alpha\neq 2\pi k/N$ and ${\bf L}_{\alpha}$ is invertible, both the analysis and synthesis models generate ${{N}\choose{k}}$ subspaces of dimension $k$, for any $k=|\Lambda^{\complement}|$ with $k\in\lbrack 1\enskip N \rbrack$. The unique rank-deficiency of ${\bf L}_{\alpha}$ for $\alpha=2\pi k/N$ ($\alpha\neq 0, k\pi$) further gives rise to the following discrepancy:
\begin{rmk}
For $k=|\Lambda^{\complement}|\geq 3$, the number of subspaces ${{N}\choose{k}}$ of the cosparse analysis model coincides with that of the synthesis model, where, the former is composed of $N({\bf L}_{\alpha})$ of dimension $2$ and 
${\bf L}^{\dagger}_{\alpha}{\bf \Psi}_{\Lambda^{\complement}}^T{\bf W}_{\alpha}$ of dimension $|\Lambda^{\complement}|-2$; when $k<3$ and excluding the special case $\Lambda^{\complement}=\{m,(m+\frac{N}{2}),\ m\in V\}$, the former trivially has a single subspace $N({\bf \Psi}_{\Lambda}{\bf L}_{\alpha})=N({\bf L}_{\alpha})$.  
\end{rmk}
Similarly as for ${\bf L}$ in Rmk.\ \ref{smoth}, one can tailor the smoothness of the analysis subspace ${\bf L}^{\dagger}_{\alpha}{\bf \Psi}_{\Lambda^{\complement}}^T{\bf W}_{\alpha}$ by incorporating columns $({\bf P}_{\alpha})_j$ into the constraint, provided $\Lambda^{\complement}$ covers the support of $(({\bf P}_{\alpha})_j*({\bf L}_{C, \alpha})_k)_{mod N}$ for suitable $j,k\in V$.\\
For a sufficiently sparse graph $G_S$, the perturbed complex exponential linear polynomial $({\bf L}_{\alpha}^{\dagger})_j$ is annihilated by ${\bf L}_{C, \alpha}^{2}$, whose representer polynomial has $4$ exponential vanishing moments, with approximately sparse output ${\bf L}_{C, \alpha}^{2}({\bf L}_{\alpha}^{\dagger})_j={\bf P}_{\alpha}^{-1}({\bf L}_{C, \alpha})_j$, as well as by ${\bf L}_{\alpha}^{2}$, with sparse output ${\bf L}_{\alpha}^{2}({\bf L}_{\alpha}^{\dagger})_j=({\bf L}_{\alpha})_j$.
Furthermore, we can extend insights to higher-order, by noting the generative representation $N({\bf \Psi}_{\Lambda}{\bf L}_{\alpha}^k)=N({\bf L}_{\alpha})+{\bf L}_{\alpha}^{\dagger k}{\bf \Psi}_{\Lambda^{\complement}}^T{\bf W}_{\alpha}{\bf c}$, for $\Lambda^{\complement}$ as in Rm.\ \ref{constr}, such that ${\bf \Psi}_{\Lambda^{\complement}}^T{\bf W}_{\alpha}{\bf c}=({\bf L}_{\alpha})_j$ for some $j$, which encompasses signals of the form ${\bf L}_{\alpha}^{\dagger k-1}$. In an extension of Lemma \ref{summ} $(iii)$ for operator ${\bf L}_{\alpha}^k$, and provided the graph $G_S$ is sufficiently sparse, we discover that linear combinations of atoms $({\bf L}_{\alpha}^{\dagger k-1})_j$ are sparse with respect to ${\bf L}_{\alpha}^k$, with knots in the range of ${\bf L}_{\alpha}$. By additionally leveraging the property that ${\bf L}_{\alpha}^k$ annihilates up to $k-1$ degree complex exponential polynomials of the form ${\bf p}e^{\pm i\alpha{\bf t}}$, subject to a graph-dependent border effect, in conjunction with Lemma \ref{mppa}, we deduce that the range of ${\bf L}_{\alpha}^{\dagger k-1}$ gives rise to perturbed complex exponential polynomials of degree $k-1$, which are orthogonal to $e^{\pm i \alpha {\bf t}}$.
\\
\\
In Fig.\ \ref{fig:a1}, we depict the basis functions of ${\bf L}_{\alpha}^{-1}$ and their constrained linear combination ${\bf L}_{\alpha}^{-1}({\bf L}_{C, \alpha})_{ 30}$ for $\alpha=0.21$, respectively for the unweighted simple cycle and the circulant graph with generating set $S=\{1,2,3\}$ of dimension $N=64$. In Fig.\ \ref{fig:a2}, we depict the basis functions of ${\bf L}_{\alpha}^{\dagger}$ and their constrained (analysis) representation ${\bf L}_{\alpha}^{\dagger}({\bf L}_{C, \alpha})_{30}$ for $\alpha=4\pi /N$, for the same graphs as above. The representer polynomial of ${\bf P}_{\alpha}$ for the unweighted graph $G_S$ with $S=\{1,2,3\}$ is given by $P_{\alpha}(z)=(2+2\cos(\alpha)+2\cos(2\alpha))+(1+2\cos(\alpha))(z+z^{-1})+(z^2+z^{-2})$. Here, Fig.\ \ref{fig:a2} $(c)$ further illustrates the shape of the perturbation given by the columns of ${\bf P}_{\alpha}^{-1}$ at $\alpha=4\pi /N$, which exhibits a small support followed by approximate sparsity. In particular, the matrix ${\bf P}_{\alpha}^{-1}$ represents the difference between the MPPs of the simple cycle in $(a)$ and the extended circulant in $(b)$, and accordingly perturbs the functions underlying the former around its discontinuities, resulting in the constrained analysis representation ${\bf L}_{\alpha}^{\dagger}({\bf L}_{C, \alpha})_{30}={\bf P}_{\alpha}^{-1}({\bf I}_N-\frac{1}{N}{\bf E}_{\alpha})_{30}$, as depicted in $(b)$.\\ 
Further, in Fig.\ \ref{fig:acomp}, we directly compare the functions ${\bf L}_{C, \alpha}^{\dagger}({\bf L}_{C, \alpha})_{30}$ for $\alpha=4\pi/N$ and $({\bf L}_{C, \alpha}^{-1})_{30}$ for $\alpha=0.21$ on the simple cycle, the latter of which has been normalized. Both functions are complex exponentials (i.e. trigonometric functions in the real domain) of comparable order, with a discontinuity at the same location, and represent the analysis subspaces of the singular and non-singular ${\bf L}_{\alpha}$. They are respectively given by ${\bf L}_{C, \alpha}^{\dagger}({\bf L}_{C, \alpha})_{30}={\bf e}_{30}-\frac{2\cos(\alpha \tilde{{\bf t}})}{N}$ and $({\bf L}_{C, \alpha}^{-1})_{30}=-\frac{1}{2\sin(\alpha)}(\cot(\alpha N/2)\cos(\alpha \tilde{{\bf t}})+\sin(\alpha \tilde{{\bf t}}))$ (the latter of which follows from Lemma \ref{inva} using Euler's formula), for $\tilde{{\bf t}}=({\bf t}\ast {\bf e}_{30})_{mod N}$.\\
In order to complete the comparison, we further depict in Fig.\ \ref{fig:a3} the same set of functions and constraints for the standard graph Laplacian case with $\alpha=0$. Here, the representer polynomial of ${\bf P}_{\alpha=0}={\bf P}_{G_S}$ at $\alpha=0$ takes the form $P_{G_S}(z)=6+3(z+z^{-1})+(z^2+z^{-2})$. \\
While generating different types of functions in the synthesis case of the unconstrained ${\bf L}_{\alpha}^{\dagger}$ and ${\bf L}^{\dagger}$, i.e. ranging from linear complex exponential polynomials to quadratic polynomials, their constrained analysis representations, through multiplication of the former respectively by columns $({\bf L}_{C, \alpha})_j$ and $({\bf L}_{C})_j$, give rise to the same two-fold reduction in degree. While in the case of $\alpha=4\pi/N$, we effectively generate (perturbed) complex exponentials as a result of the two exponential vanishing moments in $({\bf L}_{C, \alpha})_j$, in the graph Laplacian case with $\alpha=0$, we obtain a piecewise constant function due to the double vanishing moment property of $({\bf L}_{C})_j$. The perturbations in form of ${\bf P}_{\alpha}^{-1}$ and ${\bf P}_{G_S}^{-1}$ nevertheless remain relatively close in value to the proximity of their inverses for $\alpha=0$ and $\alpha=4\pi/N\approx0.1963$.

%We further observe that ${\bf P}_{\alpha}^{-1}$, despite having the same form for all considered function spaces ${\bf L}_{\alpha}^{\dagger}$, ${\bf L}^{\dagger}$ and ${\bf L}_{\alpha}^{-1}$ , for proximal values of $\alpha$, has a visible perturbation effect around discontinuities for the first two but barely any (except for a scaling) for the latter, which may be attributed to the stability (nonsingularity) of the latter and associated, less prominent, discontinuitites of its underlying complex exponential basis functions.

 \begin{figure}
 \centering
  \begin{subfigure}[htbp]{0.49\textwidth}
{\includegraphics[width=2.45in]{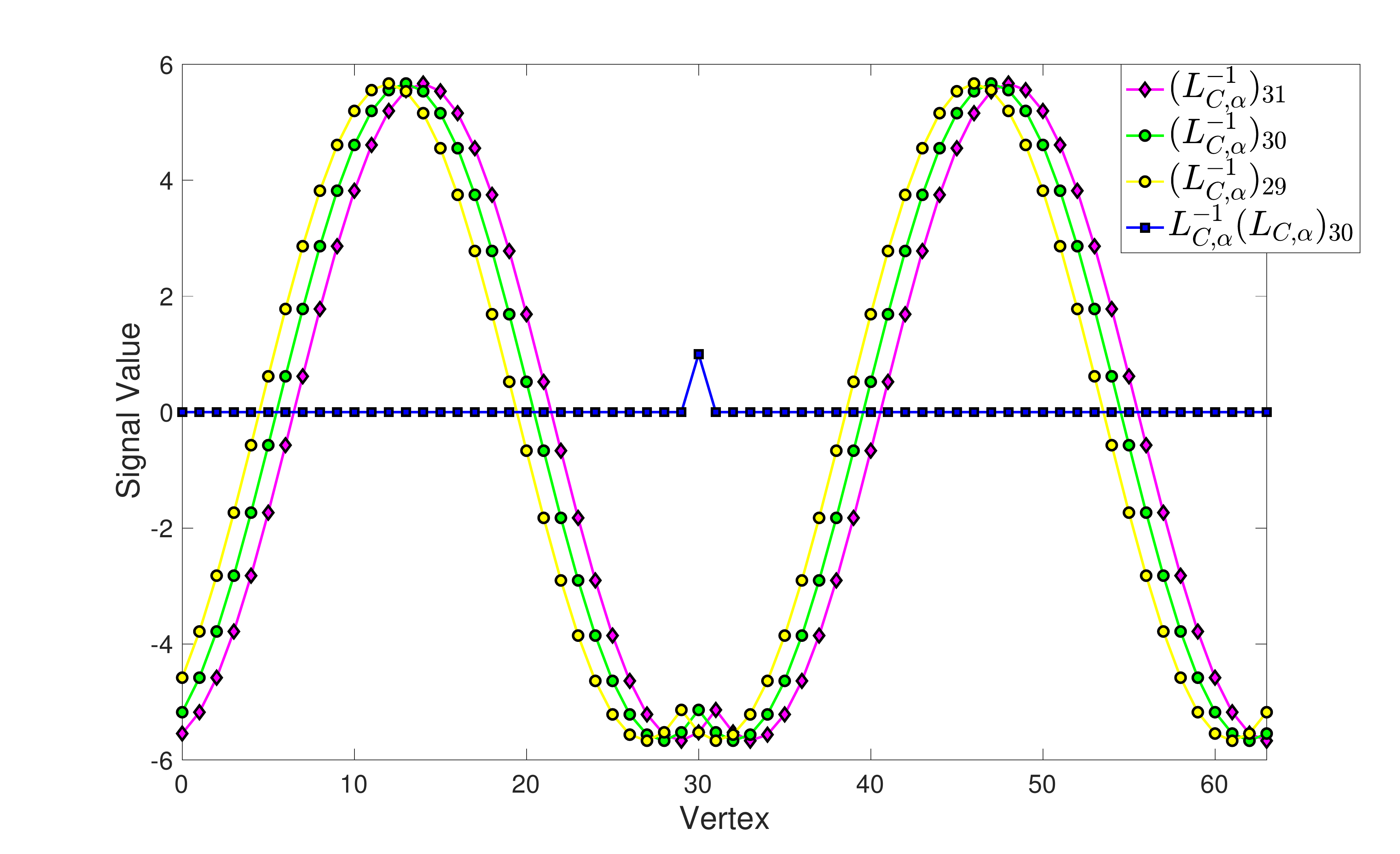}} 
\caption{\small{Functions on $G_S$ with $S=\{1\}$}}
\end{subfigure}
%\vspace{1mm}
\begin{subfigure}[htbp]{0.49\textwidth}
 { \includegraphics[width=2.45in]{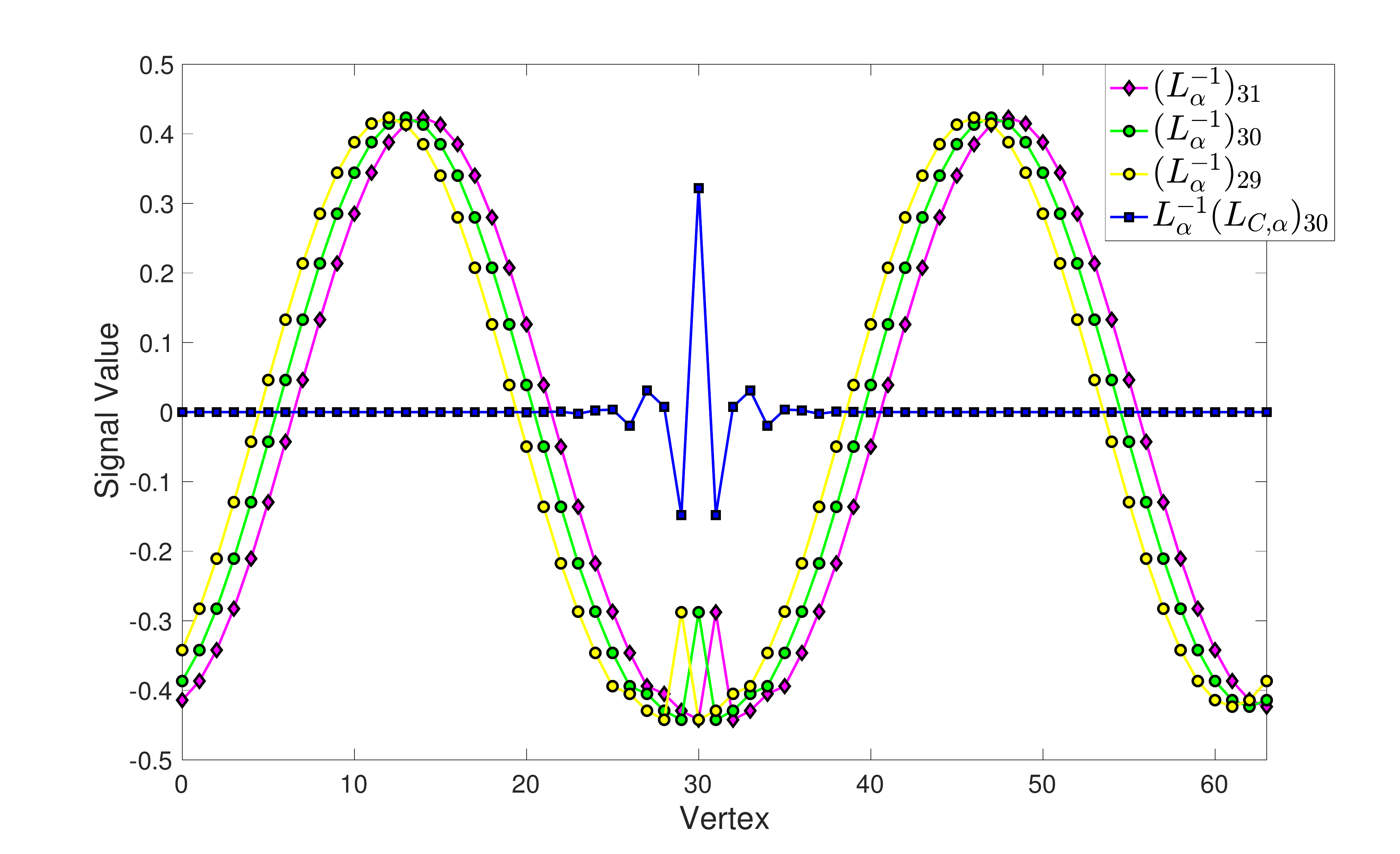}}%\hspace{5mm}
 \caption{\small{Functions on $G_S$ with $S=\{1,2,3\}$}}
 \end{subfigure}
	\caption{Comparison of signal models on circulant graphs for ${\bf L}_{\alpha}^{-1}$ at $\alpha=0.21$.}
\label{fig:a1}\end{figure}

 \begin{figure}
 \centering
  \begin{subfigure}[htbp]{0.49\textwidth}
{\includegraphics[width=2.5in]{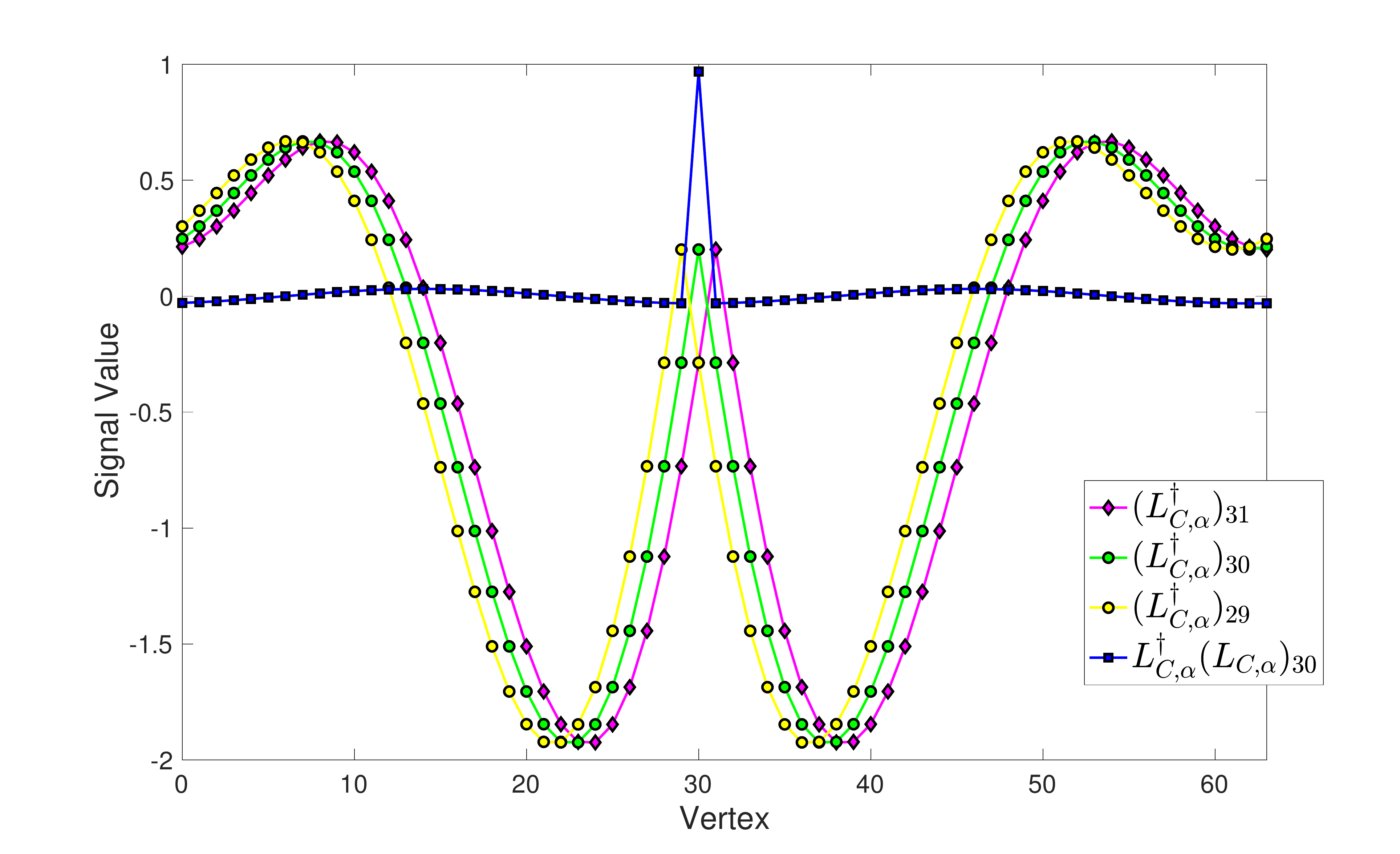}} 
\caption{\small{Functions on $G_S$ with $S=\{1\}$}}
\end{subfigure}
%\vspace{1mm}
\begin{subfigure}[htbp]{0.49\textwidth}
 { \includegraphics[width=2.5in]{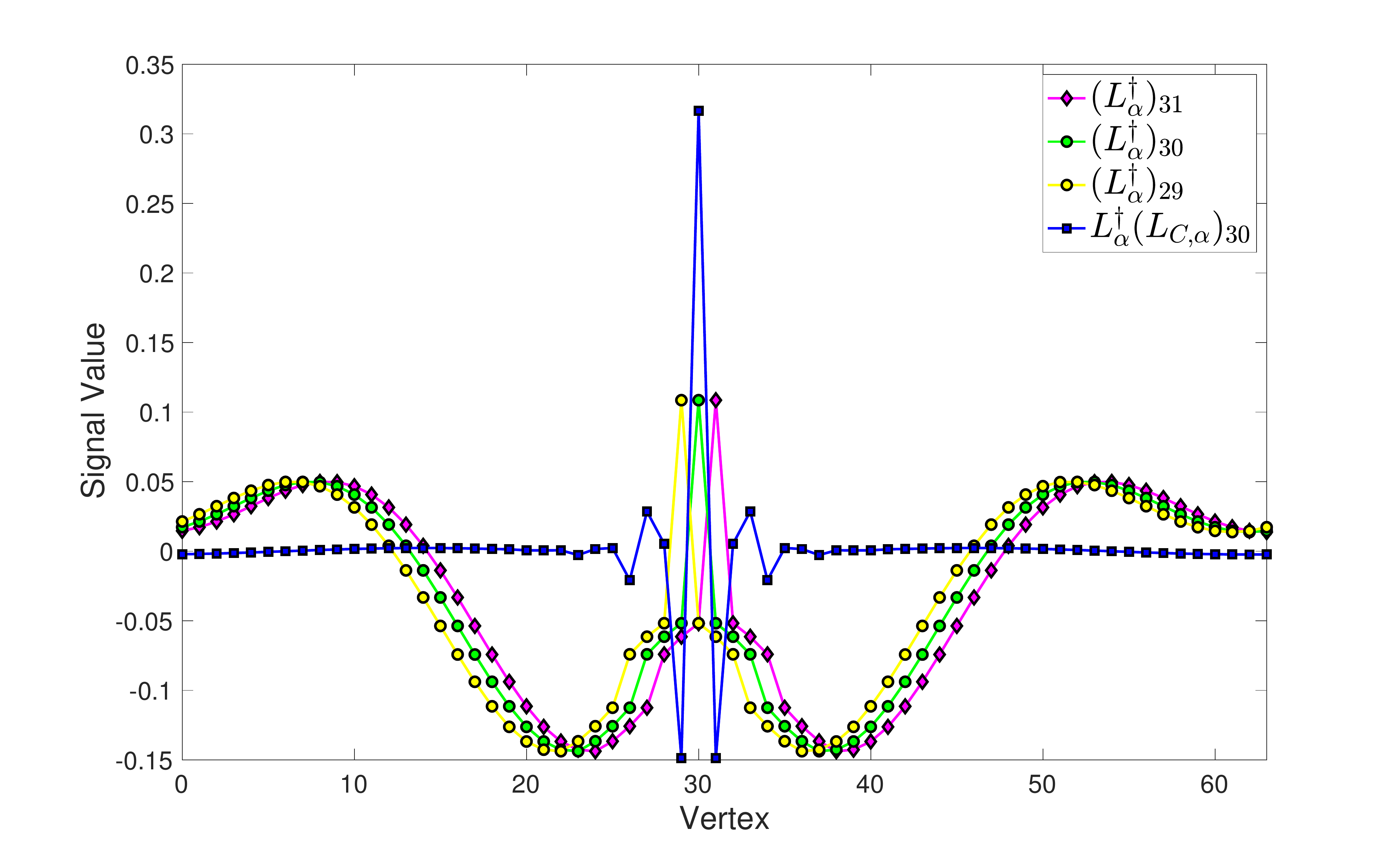}}%\hspace{5mm}
 \caption{\small{Functions on $G_S$ with $S=\{1,2,3\}$}}
 \end{subfigure}
 \begin{subfigure}[htbp]{0.49\textwidth}
 { \includegraphics[width=2.9in]{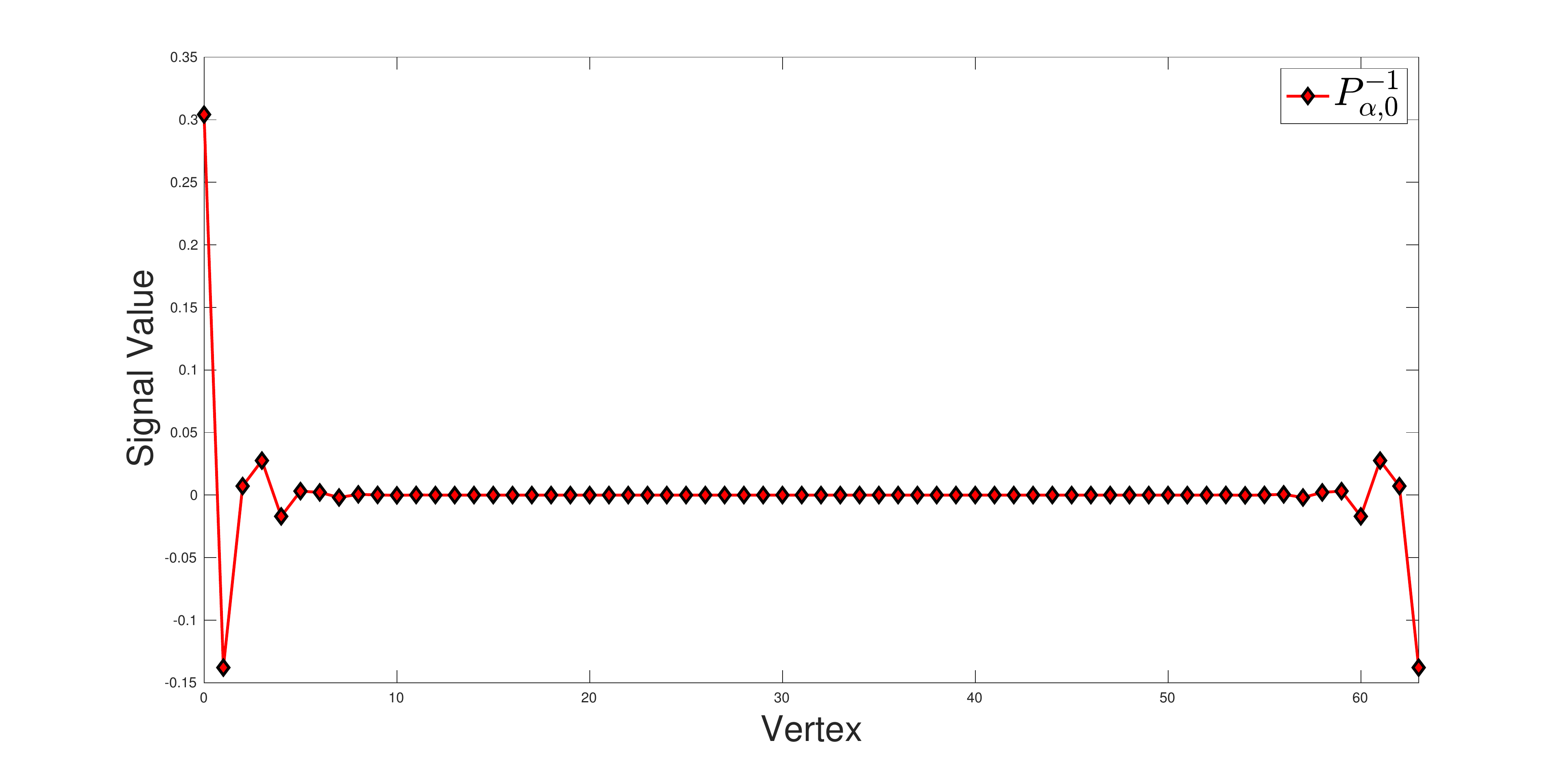}}%\hspace{5mm}
 \caption{\small{Perturbations on $G_S$ with $S=\{1,2,3\}$}}
 \end{subfigure}
	\caption{Comparison of signal models on circulant graphs for ${\bf L}_{\alpha}^{\dagger}$ at $\alpha=4\pi/N$.}
\label{fig:a2}\end{figure}

 \begin{figure}
 \centering
 { \includegraphics[width=2.6in]{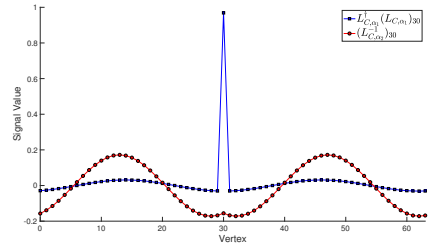}}%\hspace{5mm}
 \caption{\small{Comparison of ${\bf L}_{C, \alpha_1}^{\dagger}{\bf L}_{C, \alpha_1}$, $\alpha_1=4\pi/N$, and ${\bf L}_{C, \alpha_2}^{-1}$, $\alpha_2=0.21$, on $G_S$ with $S=\{1\}$}}
\label{fig:acomp} \end{figure}

 \begin{figure}
 \centering
  \begin{subfigure}[htbp]{0.49\textwidth}
{\includegraphics[width=2.5in]{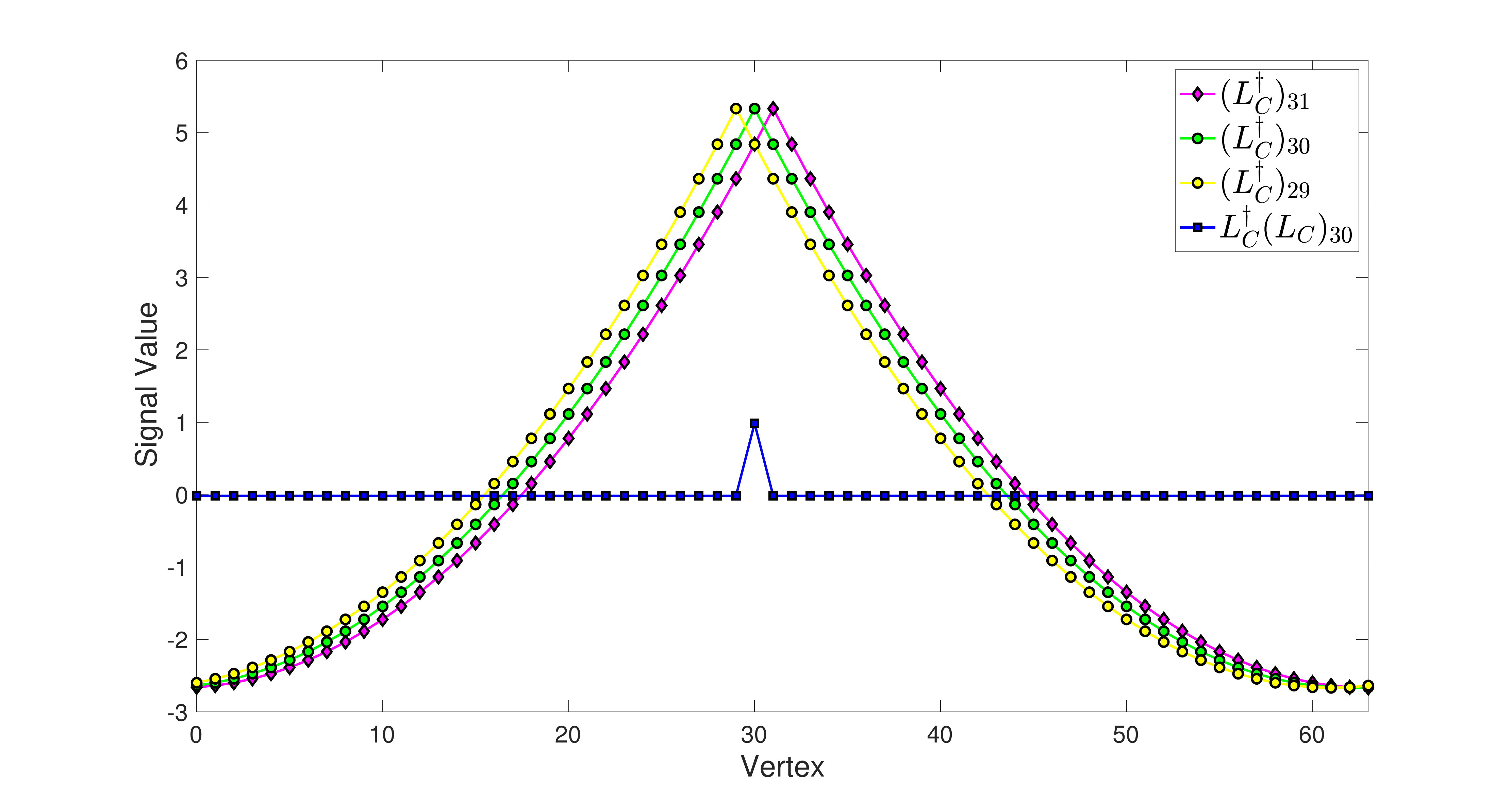}} 
\caption{\small{Functions on $G_S$ with $S=\{1\}$}}
\end{subfigure}
%\vspace{1mm}
\begin{subfigure}[htbp]{0.49\textwidth}
 { \includegraphics[width=2.5in]{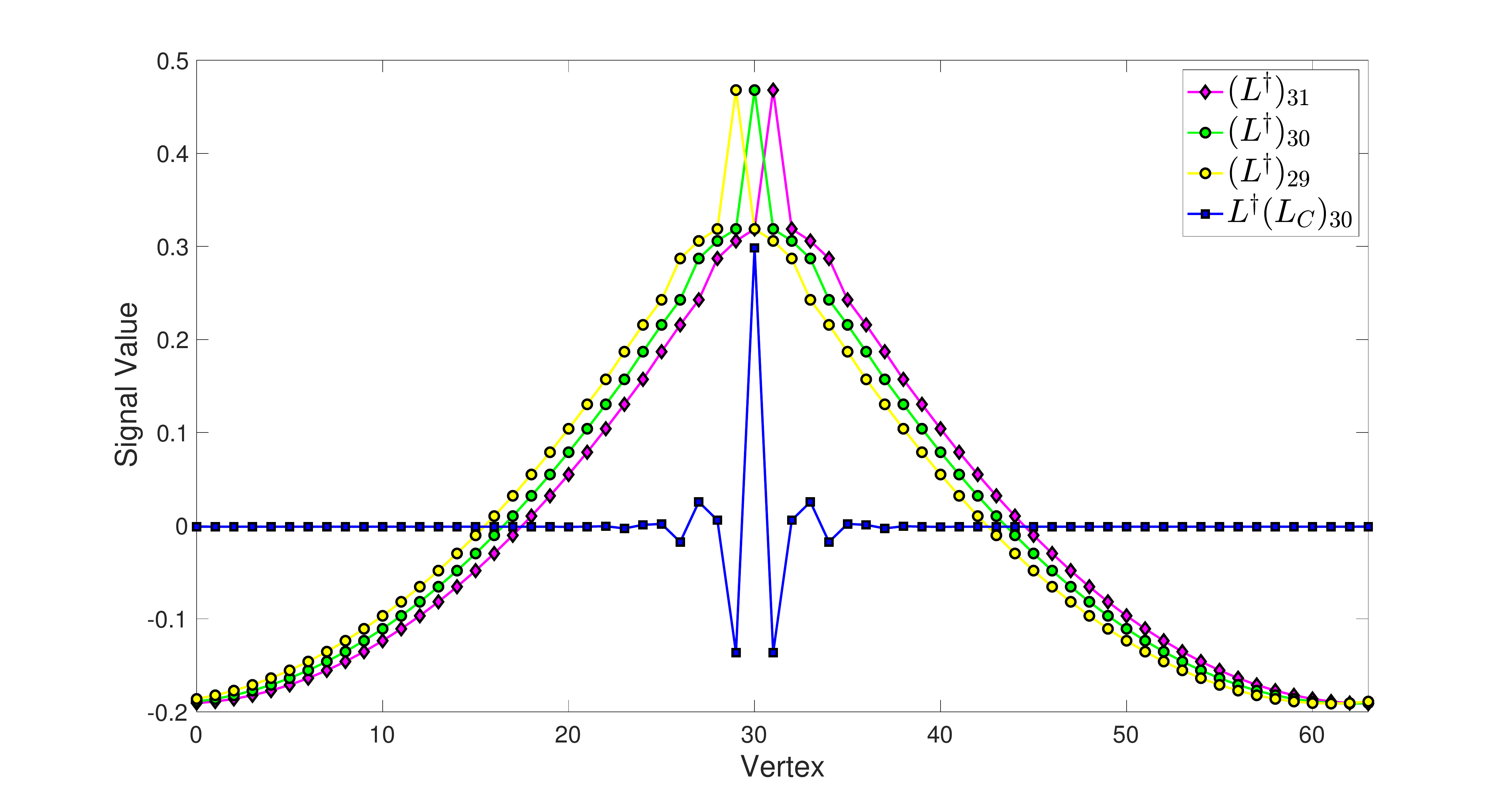}}%\hspace{5mm}
 \caption{\small{Functions on $G_S$ with $S=\{1,2,3\}$}}
 \end{subfigure}
 
 \begin{subfigure}[htbp]{0.49\textwidth}
 { \includegraphics[width=2.9in]{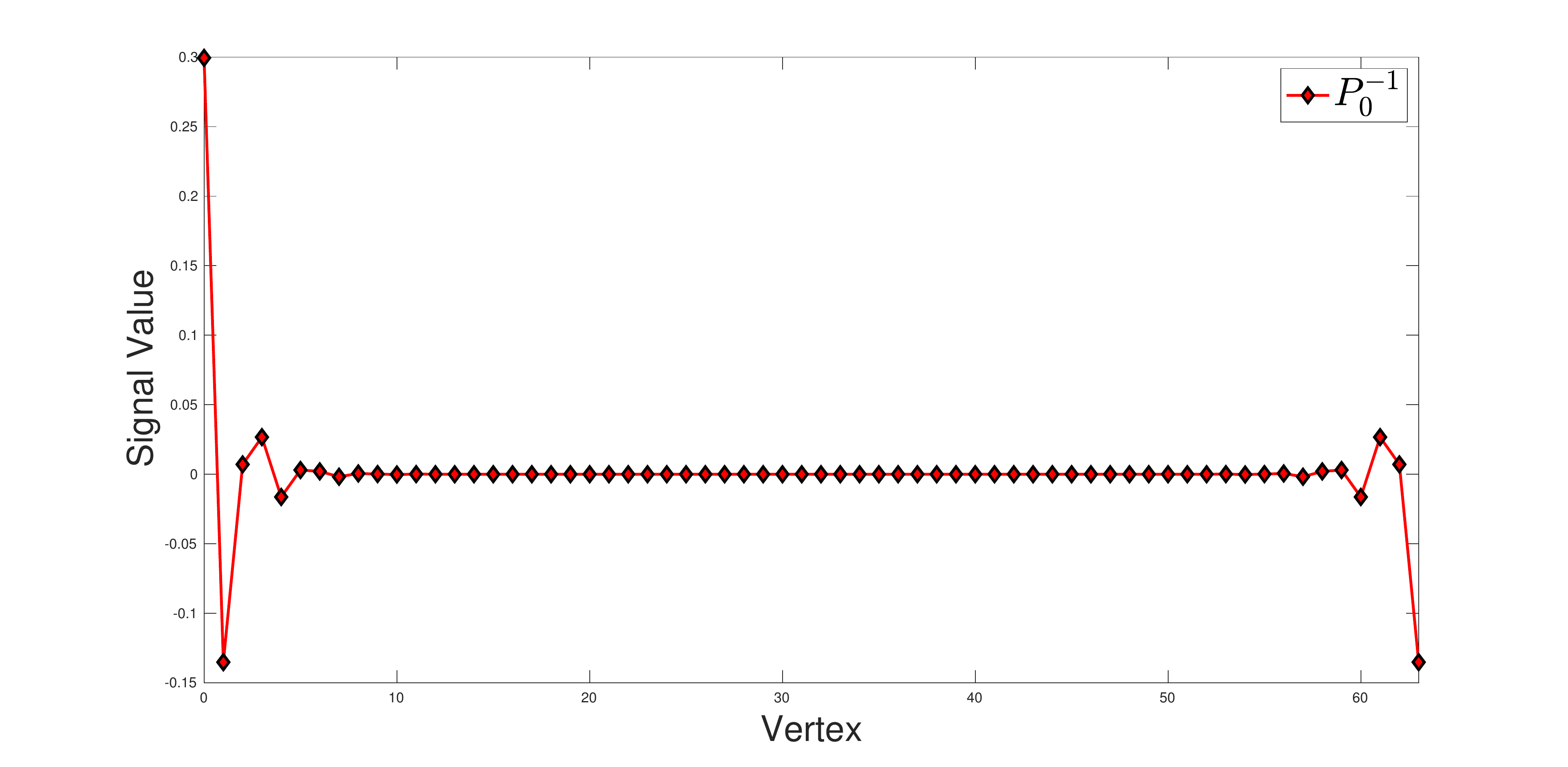}}%\hspace{5mm}
 \caption{\small{Perturbations on $G_S$ with $S=\{1,2,3\}$}}
 \end{subfigure}
	\caption{Comparison of signal models on circulant graphs for ${\bf L}^{\dagger}$.}
\label{fig:a3}\end{figure}

\section{Model-based Uniqueness and Recovery Guarantees on Graphs}
Building on the preceding qualitative and quantitative analysis of the subspaces of graph-based UoS models, we seek to leverage the acquired insight for the solution of linear inverse problems subject to graph-structured constraints, specifically, the quantification of specialized guarantees for the uniqueness of a solution. Furthermore, we discuss the derived graph-based UoS models in the context of model-based Compressed Sensing (CS), particularly with regard to the design of structured UoS models with desirable properties for refined sampling and recovery results.
\subsection{Uniqueness}
Let ${\bf M}\in\mathbb{R}^{m\times N}$, $m<N$, denote a suitable (graph-)measurement matrix with linearly independent rows and consider the problem of identifying the unique (co)sparse solution of ${\bf y}={\bf M}{\bf x}$, where ${\bf x}$ belongs to a graph Laplacian based UoS model on a connected graph.\\
For the synthesis model with representation ${\bf x}={\bf D}{\bf c}$, in order to uniquely identify the unknown $k$-sparse signal ${\bf c}$, we require  $2k<spark({\bf M D })\leq m+1$, where the $spark$ measures the minimum number of linearly dependent columns of a matrix \cite{spa}. On the other hand, for the analysis model, an equivalent measure to the spark, $\kappa_{{\bf \Omega}}(l):=\max_{|\Lambda|\geq l}dim(W_{\Lambda})$, is established \cite{cos}, which quantifies the interdependency between rows through the maximum analysis subspace dimension of $W_{\Lambda}$ for a given cosparsity level $l$. Thereby, to determine the $l$-cosparse signal ${\bf x}$ with ${\bf \Omega}_{\Lambda}{\bf x}={\bf 0}_{\Lambda}$, given mutually independent ${\bf M}$ and ${\bf \Omega}$ (i.e.\ their rows do not have non-trivial linear dependencies, according to \cite{cos}), we require $2\kappa_{{\bf \Omega}}(l)\leq m$ as a sufficient condition (see Prop. $3$, \cite{cos}). \\
Due to its uniquely characterizable linear dependencies and following Prop. \ref{prop1}, the interdependency measures for ${\bf L}$ can be  respectively  identified as $spark({\bf L}^{\dagger})=N$ and $\kappa_{{\bf L}}(l)=N-l=k$ for $l<N-1$; when, additionally, the graph is disconnected with $c$ components, we further have $\kappa_{{\bf L}}(l)=k+c-1$. According to \cite{ylu}, a necessary condition for the recovery of cosparse signals belonging to a UoS model is given by $m\geq \tilde{\kappa}_{{\bf \Omega}}(l)$, with $\tilde{\kappa}_{{\bf \Omega}}(l):=\max \{dim (W_{\Lambda_1}+W_{\Lambda_2}):\ |\Lambda_i|\geq l, i=1,2\}$ quantifying the maximum dimension of the sum of any two subspaces in the union. In the graph Laplacian-based UoS model, $N({\bf L})$ forms part of every subspace in the union and its dimension is known, which facilitates the following tighter result for the analysis model:
\begin{cor}
For mutually independent ${\bf M}\in\mathbb{R}^{m\times N}$ and ${\bf L}$ on an undirected graph $G$, the graph-Laplacian-based problem 
\[{\bf M}{\bf x}={\bf y}\enskip\text{with}\enskip ||{\bf L}{\bf x}||_0\leq N-l=k  \] 
 has at most one solution, provided $k>1$, if \\
 $(i)$ $m\geq 2k-1$, when the graph is connected,\\
 $(ii1)$ $m\geq 2k-2+c$, when the graph is disconnected with $c$ components. \\
 $(ii2)$ If ${\bf x}\in\bigcup_{|\Lambda|=N-k} \textit{W}_{\Lambda}$, for $\textit{W}_{\Lambda}:=N({\bf L}_{\Lambda})$, subject to the additional constraint, $|\Lambda_i|< N_i-1$, the latter condition can be relaxed to $m\geq 2k-c$.
 \end{cor}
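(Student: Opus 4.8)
The plan is to reduce the uniqueness question to a dimension count on sums of pairs of analysis subspaces, crucially using that the nullspace $N({\bf L})$ sits inside \emph{every} $W_\Lambda$. First I would recall the standard injectivity criterion: the map ${\bf x}\mapsto{\bf M}{\bf x}$ is one-to-one on $\bigcup_{|\Lambda|=l}W_\Lambda$ if and only if $N({\bf M})\cap(W_{\Lambda_1}+W_{\Lambda_2})=\{{\bf 0}\}$ for every admissible pair $\Lambda_1,\Lambda_2$, since any two distinct signals in the union differ by a nonzero vector of such a sum and, conversely, every such vector is realised as a difference. As ${\bf M}$ has $m$ linearly independent rows, $\dim N({\bf M})=N-m$; the hypothesis that ${\bf M}$ and ${\bf L}$ are mutually independent (no nontrivial linear dependencies between their rows, in the sense of \cite{cos}) puts $N({\bf M})$ in general position relative to each $W_{\Lambda_1}+W_{\Lambda_2}$, so that the intersection is trivial whenever $(N-m)+\dim(W_{\Lambda_1}+W_{\Lambda_2})\le N$. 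It therefore suffices that $m\ge\tilde\kappa_{{\bf L}}(l):=\max_{|\Lambda_i|=l}\dim(W_{\Lambda_1}+W_{\Lambda_2})$, and all three claims reduce to evaluating this quantity.

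Next I would compute $\tilde\kappa_{{\bf L}}(l)$ case by case, each time splitting $\dim(W_{\Lambda_1}+W_{\Lambda_2})=\dim W_{\Lambda_1}+\dim W_{\Lambda_2}-\dim(W_{\Lambda_1}\cap W_{\Lambda_2})$ and using $N({\bf L})\subseteq W_{\Lambda_1}\cap W_{\Lambda_2}$. For $(i)$, Proposition~\ref{prop1} (with $k>1$, hence $l<N-1$) gives $\dim W_\Lambda=k$ and $\dim N({\bf L})=1$, so $\dim(W_{\Lambda_1}+W_{\Lambda_2})\le 2k-1$ with equality attainable, whence $\tilde\kappa_{{\bf L}}(l)=2k-1$. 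For $(ii1)$, Corollary~\ref{disc1} (as already recorded, $\kappa_{{\bf L}}(l)=k+c-1$) together with $\dim N({\bf L})=c$ yields $\dim(W_{\Lambda_1}+W_{\Lambda_2})\le 2(k+c-1)-c=2k-2+c$. For $(ii2)$, the additional hypothesis $|\Lambda_i|<N_i-1$ forces $|\Lambda_i^{\complement}|\ge 2$ in every component, so by Corollary~\ref{disc1} each block ${\bf W}_i$ is non-empty of exact rank $|\Lambda_i^{\complement}|-1$ and $\dim W_\Lambda=c+\sum_i(|\Lambda_i^{\complement}|-1)=k$ uniformly; subtracting the shared $c$-dimensional nullspace gives $\dim(W_{\Lambda_1}+W_{\Lambda_2})\le 2k-c$, which relaxes the requirement to $m\ge 2k-c$. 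In each case $m\ge\tilde\kappa_{{\bf L}}(l)$ is exactly the stated bound.

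The main obstacle I expect is the general-position step: one must argue rigorously that mutual independence of ${\bf M}$ and ${\bf L}$ genuinely forces $N({\bf M})\cap(W_{\Lambda_1}+W_{\Lambda_2})=\{{\bf 0}\}$ as soon as the dimensions permit, uniformly over the finitely many admissible pairs — this is where the hypothesis on ${\bf M}$ carries the proof and where the necessary condition $m\ge\tilde\kappa_{{\bf \Omega}}(l)$ of \cite{ylu} becomes, under genericity, also sufficient. A smaller but real point of care is confirming that $\kappa_{{\bf L}}(l)=k+c-1$ (respectively $=k$ under the constraint) is the \emph{exact} maximal subspace dimension — i.e.\ that the support distribution maximising $\sum_i\max(|\Lambda_i^{\complement}|-1,0)$ subject to $\sum_i|\Lambda_i^{\complement}|=k$ behaves as claimed, and that no accidental overlap of $W_{\Lambda_1}$ and $W_{\Lambda_2}$ beyond $N({\bf L})$ survives in general position — together with checking that the hypothesis $k>1$ (and, in $(ii2)$, the implied $k\ge 2c$) keeps us out of the degenerate small-$k$ regime where $W_\Lambda$ collapses onto $N({\bf L})$.
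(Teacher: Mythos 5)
Your proposal is correct and follows essentially the same route as the paper: both apply $\dim(U+W)=\dim U+\dim W-\dim(U\cap W)$ to pairs of analysis subspaces, use that $N({\bf L})$ (of dimension $1$ resp.\ $c$) lies in every $W_\Lambda$, and plug in the maximal subspace dimensions $k$, $k+c-1$, and (under the cosupport constraint) $k$ to obtain the three bounds. The only difference is that you spell out the injectivity criterion $N({\bf M})\cap(W_{\Lambda_1}+W_{\Lambda_2})=\{{\bf 0}\}$ and the general-position step explicitly, whereas the paper delegates that sufficiency argument to the cited results of \cite{cos} and \cite{ylu}.
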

 \begin{proof}
 Follows from $dim (U+W)=dim(U)+dim(W)-dim(U\cap W)$ for the sum $U+W=\{{\bf u}+{\bf w}:\ {\bf u}\in U,\ {\bf w}\in W\}$ of any two subspaces $U,W$ \cite{hohn}. We have for any $U,W$ in the resulting union of subspaces $\bigcup_{|\Lambda_i|\geq l} W_{\Lambda_i}$ that $\max \{dim(U)+dim(W)\}=2k$ for $c=1$, $\max \{dim(U)+dim(W)\}=2(k+c-1)$ for $c>1$, and $dim(U\cap W)=c$ in this case. If the cosupport constraint of $(ii2)$ is additionally applied for $c>1$, we have $\max \{dim(U)+dim(W)\}=2k$.
 \end{proof}
\noindent 
Notably, the employment of the measure $\tilde{\kappa}_{{\bf \Omega}}(l)$ leads to a minimum sampling requirement slightly below the number of degrees of freedom of signals in the analysis model. In particular, for a signal on a connected graph, the total number of degrees of freedom is $2k$, with $2k-1$ for representation ${\bf L}^{\dagger}{\bf \Psi}_{\Lambda^{\complement}}^T{\bf W}{\bf c}$ with $|\Lambda^{\complement}|=k$, as a result of its linearly dependent coefficients, and $1$ for its nullspace vector coefficient. Accordingly, for a disconnected graph with $c$ components, this can range between $2k$, if the cosupport constraint is applied, and $2k+c-1$.\\
Nevertheless, we note that while the measures $\kappa_{{\bf \Omega}}(l)$ and  $\tilde{\kappa}_{{\bf \Omega}}(l)$ quantify a dependency, the unique structural quality and composition of the underlying subspaces of ${\bf L}$ (or ${\bf S}^T$) are not leveraged in this result. \\
{\bf Recovery}. In \cite{recov}, the problem of recovering the $k$-sparse solution ${\bf x}'$ to $\min ||{\bf x}||_0,\ \text{subject to}\ {\bf y}={\bf S}^T{\bf x}$, on a connected graph $G$, which contains cycles, was considered, where ${\bf S}^T\in\mathbb{R}^{|V|\times|E|}$ functions as the sampling operator (or dictionary) of $G$ and ${\bf y}={\bf S}^T{\bf x}'$. It is established i.a. that the \textit{spark} of ${\bf S}^T$ is associated with the length $g$ of the shortest simple cycle (or girth) of $G$, i.e. the smallest number of linearly dependent columns in ${\bf S}^T$, which gives rise to the uniqueness of a $k$-sparse solution ${\bf x}'$ to the optimization problem for $g>2k$ (see Prop. $2$, \cite{recov}). Ultimately, the structured sparsity of ${\bf y}$ is leveraged in the algorithmic recovery of ${\bf x}'$. When the graph is acyclic, ${\bf S}^T\in\mathbb{R}^{|V|\times|V|-1}$ has linearly independent columns and the problem is trivially solved by applying the left inverse ${{\bf S}^{\dagger}}^T$ on ${\bf y}$ to obtain ${\bf x}'$. \\
In the present scenario, given the constrained analysis-sparse representation ${\bf x}={\bf L}^{\dagger}{\bf c}$, one may consider the problem of identifying the sparse vector ${\bf t}$ in the structurally sparse vector of knots ${\bf c}={\bf S}^T{\bf t}$, based on \cite{recov}, for either an acyclic and non-acyclic graph.

\subsection{A Desirable UoS Model}
In \cite{uos}, sampling theorems for signals which belong to a finite-dimensional UoS model are developed, with the union $U_K$ of $L$ linear subspaces defined as
\[U_K=\bigcup_j^L \tilde{U}_j=\{{\bf x}={\bf \Omega}_j{\bf a}_j,\ {\bf \Omega}_j\in\mathbb{R}^{N\times k_j},{\bf a}\in\mathbb{R}^{k_j}\}\]
where ${\bf \Omega}_j$ are the subspace bases of dimension $k_j\leq K$. While the invertibility of ${\bf M}$ over the UoS guarantees the unique identification of any signal in that union, in practice, one requires additional properties such as robustness to noise, which is related to distance preserving properties of ${\bf M}$, also known as the Restricted Isometry Property (RIP) \cite{uos}. In particular, it is established that for a stable sampling scheme, the number of required measurements depends necessarily logarithmically on the number of subspaces in $U_K$:\\
\\
{\bf Theorem}. (see Thm.\ $3$, \cite{uos2}) \textit{Let $U_K$ be the union of $L$ subspaces of dimension $K$ in $\mathbb{R}^N$. Then, for any $t>0$ and 
\[m\geq \frac{1}{c(\delta_{U_K}/6)}\left(2 ln (L)+2 K ln\left(\frac{36}{\delta_{U_K}}\right)+t\right)\]
there exists a function $c(\delta)>0$ and matrix ${\bf M}\in\mathbb{R}^{m\times N}$, such that $\forall {\bf y}_1, {\bf y}_2\in U_K$, the $U_K$-RIP 
\[(1-\delta_{U_K})||{\bf y}_1-{\bf y}_2||_2^2\leq ||{\bf M}({\bf y}_1-{\bf y}_2)||_2^2\leq (1+\delta_{U_K}) ||{\bf y}_1-{\bf y}_2||_2^2\]
holds with RIP constant $\delta_{U_K}$. If ${\bf M}$ is generated by randomly drawing i.i.d. entries from an appropriately scaled subgaussian distribution, it satisfies the $U_K$-RIP with probability at least $1-e^{-t}$.}\\
\\
\noindent As is evident from the expression, it is generally desirable for a given UoS model to contain a small number $L$ of subspaces of small maximum dimension $K$ in order to achieve robust recovery at a smaller number of measurements.\\
Model-based Compressed Sensing is concerned with the formulation of theoretical guarantees and development of algorithms for the recovery of sparse signals with structural dependencies imposed on (or inherent in) their locations and coefficients \cite{bar}. Specifically, it seeks to exploit the occurrence of structure in a signal model for the improvement of performance guarantees, such as a relaxed RIP constraint with a reduction in the number of samples necessary for recovery. Previous model-based CS frameworks have featured concepts such as block-sparsity, wavelet-trees etc \cite{bar}, which focus on restricting the location, rather that the amplitude, of the sparse coefficients, thereby producing a reduced number of subspaces. \\
\\
In the presented cosparse analysis graph Laplacian UoS model for a connected graph, while the inherent linear dependencies of the coefficients directly affect the dimension of the generated subspaces, the number of subspaces itself remains the same (except for the case $k=1$), unless concrete location constraints, such as the omission of certain vertex sets, are imposed. In the disconnected case, the number of subspaces is naturally affected by the block-diagonal matrix structure and rank-deficiency, which increase the dimension of the nullspace, thereby reducing the number of possible unique subspace combinations and increasing, where applicable, their total dimension. As alluded to in Sect.\ $3.4$, one may further restrict the admissible sparsity level per subgraph, so as to exclude degenerate cases and obtain subspaces exclusively of the same dimension as well as control the dimension itself, thereby effectively further reducing the total number of subspaces.\\
Specifically, in contrast to the models of block-sparsity, where coefficients cluster in blocks and joint sparsity, where the support is shared across blocks, \cite{bar}, the presented cosparse analysis UoS model for disconnected graphs defines independent blocks (subgraphs) which may feature different sparsity patterns and levels per block; in addition, its coefficient structure is constrained through linear dependencies within each block.\\
\\
In particular, we can express the cosparse analysis graph-based UoS model for a graph with $c$ connected components and maximum subspace dimension $K$ as the union $U^{c}_K=\bigcup_s^L \tilde{U}^c_s$ with subspaces ${\bf \Omega}_{s}{\bf a}_s=N({\bf L}){\bf a}_{s,1}+{\bf L}^{\dagger}{\bf Z}_{s}{\bf a}_{s,2}$ for $s=1,...,L$ possible subspace configurations, where \[{\bf Z}_{s}=span(({\bf e}_m-{\bf e}_n),\ m,n\in \Lambda_{t,s}^{\complement}, t=1,...,c)\in\mathbb{R}^{N\times K-c}\]
with $\sum_{t=1}^c |\Lambda_{t,s}^{\complement}|=K$, ${\bf a}_{s,1}\in\mathbb{R}^{c}$ and ${\bf a}_{s,2}\in\mathbb{R}^{K-c}$. Here, ${\bf Z}_s$ simultaneously serves as a basis for structured sparse signals and represents the linear constraints which induce easily characterizable rank deficiencies. We discover that these linear dependencies, arising from the connectivity of the graph, and manifesting themselves in form of linear constraints imposed on the analysis subspaces, signify an instance of a structured sparsity model. This can be specifically defined as the UoS 
\[\tilde{U}^c_{K-c}=\bigcup_{s\in\Sigma} \tilde{Z}_{s},\ \text{s.t.}\ \tilde{Z}_{s}:=\{ {\bf c}: {\bf c}_{\Lambda_{t,s}^{\complement}}\in\mathbb{R}^{K_t}, \sum_{i=0}^{K_t-1} c(i)=0, \ {\bf c}_{\Lambda_{t,s}}={\bf 0}_{\Lambda_{t,s}}, \ t=1,...,c \}\]
with set $\{\Lambda_{t,s}^{\complement}\}_{(t,s)\in \Tau\times \Sigma}$ of admissible supports with $|\Lambda_{t,s}^{\complement}|>1$ and total cardinality $\sum_{t=1}^c |\Lambda_{t,s}^{\complement}|=K> 1$, parameterized by configuration $s\in\Sigma$ for connected component $t\in\Tau=\lbrack 1\ c \rbrack$. Therefore, each subspace $ \tilde{Z}_{s}$ contains all sparse vectors ${\bf c}$ with support in $\bigcup_{t=1}^c \Lambda_{t,s}^{\complement}$, necessarily resulting in $L <{{N}\choose{K}}$ subspaces of reduced dimension $K-c$. Here, one can additionally constrain $|\Lambda_{t,s}^{\complement}|$ individually per connected graph component.\\
While the specific properties of the present graph model were the result of its characteristic rank-deficiency and structure, they motivate the creation of a UoS model with desirable properties by i.a. applying a blockwise restriction of the sparsity level, thereby omitting certain subspace combinations, and/or removing the nullspace in rank-deficient systems to reduce the overall subspace dimension.\\
As such, one may further tailor the characteristics of analysis-driven models toward model-based compressed sensing, i.a. by inducing a small number of subspaces of a certain dimension and composition. In general, in order to reduce the number of subspaces for a graph of arbitrary connectivity, one needs to restrict the parameter set $\Lambda^{\complement}$ such that the sparse signal ${\bf c}$ has its support only on designated subgraphs (or alternatively the set of rows $\Lambda$ of ${\bf L}$ correspond to vertex sets outside those subgraphs), resulting in ${{N_i}\choose{k}}$ subspaces with $k=|\Lambda^{\complement}|$ and $N_i<N$, while the support size $k_i$ may be further restricted per subgraph, resulting in ${{N_1}\choose{k_1}}{{N_2}\choose{k_2}}...{{N_c}\choose{k_c}}<{{N}\choose{k}}$ subspaces. Such restricted node-sets may involve connected subgraphs, cycles or paths, trees of a certain size etc.\\
We leave the development of specialized recovery algorithms and experimental performance assessment for future work.

\section{Conclusion}
In this work, we have provided a comprehensive analysis of union of subspaces models associated with structured difference matrices on graphs; specifically, we have substantiated the discrepancy between the cosparse analysis and sparse synthesis models. We have established that for the graph Laplacian matrix of connected (and disconnected) graphs, the cosparse analysis constitutes a special instance of the sparse synthesis model, subject to a structured sparsity constraint, with differences in the distribution in subspace dimension and number as a result of increased rank-deficiency. In particular, we have shown that the imposed constraint, and associated model-discrepancy, is a result of the Fredholm Alternative. By building a bridge between the MPP and the concept of Green's functions, we have further characterized the exact composition of the underlying functions defining the subspaces for the special case of circulant graphs. Furthermore, the investigation of a generalized, parametric graph Laplacian on circulant graphs with variable singularity, has revealed transitional properties between equivalence and non-equivalence for the two models. At last, we have unified results in the context of uniqueness and recovery of (co)sparse signals in UoS models and discussed the development of ideally constrained instances thereof.\\
In future work, it would be of interest to leverage the developed theory on graph-based UoS models and/or block-wise sparsity for the creation of refined UoS signal models with enhanced, robust sampling and recovery guarantees as well as to further investigate the question of what constitutes a meaningful structured sparsity model. In addition, the investigation of UoS models (their properties and associated guarantees) for rank-deficient rectangular operators on and beyond graph difference matrices constitutes an avenue worth pursuing.

\appendix

\section{}
\label{appa}
\begin{proof}[Proof of Corollary \ref{disc1}]
We proceed similarly as for the proof of Prop.\ \ref{prop1}, by considering the solution to ${\bf L}{\bf u}={\bf \Psi}_{\Lambda^{\complement}}^T{\bf w}$, for suitable ${\bf w}\in\mathbb{R}^{|\Lambda^{\complement}|}$ in the span of basis ${\bf W}\in\mathbb{R}^{|\Lambda^{\complement}|\times k}$ of unknown rank $k$, such that ${\bf \Psi}_{\Lambda^{\complement}}^T{\bf w}\perp N({\bf L})$ is satisfied. Under suitable labelling and wlog, Eq.\ (\ref{eq:nullspace}) then assumes the following blockwise structure
\[{\bf \Psi}_{\Lambda}{\bf L}{\bf L}^{\dagger}{\bf \Psi}_{\Lambda^{\complement}}^T{\bf w}={\bf \Psi}_{\Lambda}\scalebox{0.85}{$\begin{pmatrix} {\bf I}_{N_1}-\frac{1}{N_1}{\bf J}_{N_1}&0&0&...\\0&  {\bf I}_{N_2}-\frac{1}{N_2}{\bf J}_{N_2}&&\\ &..&..&\\0&0&& {\bf I}_{N_t}-\frac{1}{N_t}{\bf J}_{N_t}\end{pmatrix}$}{\bf \Psi}_{\Lambda^{\complement}}^T{\bf w}={\bf 0}_N\]
with \[{\bf L}=\begin{bmatrix} {\bf L}_1 & 0 &\dots&\\0 & {\bf L}_2 &0 &\dots\\ &\dots&&\\0 & \dots & &{\bf L}_t\\
\end{bmatrix}\enskip\text{and}\enskip {\bf L}^{\dagger}=\begin{bmatrix} {\bf L}^{\dagger}_1 & 0 &\dots&\\0 & {\bf L}^{\dagger}_2 &0 &\dots\\& \dots&&\\0 & \dots & &{\bf L}^{\dagger}_t\\
\end{bmatrix},\]
where ${\bf L}_k\in\mathbb{R}^{N_k\times N_k}$ refers to the graph Laplacian on the $k$-th connected component. We define $\Lambda=\Lambda_1\cup\Lambda_2...\cup\Lambda_t$ with vertex sets (blocks) $C_k=\Lambda_k\cup\Lambda_k^{\complement}$ of cardinality $|C_k|=N_k$ per connected component, assuming that each set $\Lambda_k^{\complement}$ is non-empty, and observe that the constraint matrix ${\bf W}$ takes the form 
\begin{equation}\label{eq:newnull}{\bf W}=\begin{bmatrix} {\bf W}_1 & 0 &\dots&\\0 & {\bf W}_2 &0 &\dots\\ \dots&&&\\0 & \dots & &{\bf W}_t\\
\end{bmatrix}\end{equation} with ${\bf W}_k=N({\bf J}_{\Lambda_k,\Lambda_k^{\complement}})$ as in Eq.\ (\ref{eq:const}) for $k=1,...,t$; here, we note that ${\bf W}$ is not blockdiagonal since its blocks ${\bf W}_k$ are rectangular whose individual sizes depend on the corresponding sets $\Lambda_k$. The resulting solution subspace for $N({\bf \Psi}_{\Lambda}{\bf L})$ is defined as the span of 
\[{\bf L}^{\dagger}{\bf \Psi}_{\Lambda^{\complement}}^T{\bf W}=\scalebox{0.85}{$\begin{bmatrix} {\bf L}_1^{\dagger}\tilde{{\bf \Psi}}_{\Lambda_1^{\complement}}^T {\bf W}_1 & 0 &\dots&\\0 & {\bf L}_2^{\dagger}\tilde{{\bf \Psi}}_{\Lambda_2^{\complement}}^T{\bf W}_2 &0 &\dots\\ \dots&&&\\0 & \dots & & {\bf L}_t^{\dagger}\tilde{{\bf \Psi}}_{\Lambda_t^{\complement}}^T{\bf W}_t\\
\end{bmatrix}$},\enskip \text{with}\enskip \tilde{{\bf \Psi}}_{\Lambda_k}\in\mathbb{R}^{|\Lambda_k|\times |C_k|}\]
and $N({\bf L})=\{{\bf 1}_{C_1},...,{\bf 1}_{C_t}\}$, which comprises the indicator vectors of the $t$ connected components and is of rank $t$. 
\\
Here, the former requires $|\Lambda_k^{\complement}|\geq 2$ for ${\bf W}_k$ to be non-empty. We note that ${\bf W}$, and by extension ${\bf L}^{\dagger}{\bf \Psi}_{\Lambda^{\complement}}^T{\bf W}$, has rank at least $|\Lambda^{\complement}|-t$. According to (Thm.\ 6.5.5, \cite{hohn}) and the Rank-Nullity Thm.\ such that $dim (N({\bf \Psi}_{\Lambda}))+ dim(N({\bf L}))\geq dim( N({\bf \Psi}_{\Lambda}{\bf L}))\geq N-|\Lambda|=|\Lambda^{\complement}|$, $N({\bf \Psi}_{\Lambda}{\bf L})$ has dimension at least $|\Lambda^{\complement}|-t+t= |\Lambda^{\complement}|$; if we assume that each set $\Lambda_k^{\complement}$ is non-empty the latter becomes an equality. 
\end{proof}

\begin{proof}[Proof of Lemma \ref{summ}]
We utilize extensions of the classical MPP relations:\\
$(i)$ We have ${\bf L}^k{\bf L}^{\dagger k}{\bf S}^T=\left({\bf I}_N-\frac{1}{N}{\bf J}_N\right)^k{\bf S}^T={\bf S}^T$, from which it follows ${\bf L}^k{\bf x}=\sum_{j\in E_S}w_j{\bf S}_j^T\in\mathbb{R}^{|V|}$. Further, ${\bf L}^{\dagger k}{\bf S}^T={\bf L}^{\dagger {k-1}}{\bf L}^{\dagger}{\bf S}^T={\bf L}^{\dagger {k-1}}{\bf S}^{\dagger}$ gives ${\bf x}=\sum_{j\in E_S}w_j{\bf L}^{\dagger k}{\bf S}_j^T=\sum_{j\in E_S} w_j{\bf L}^{\dagger k-1}{\bf S}_j^{\dagger}$. Here, the MPP-relation ${\bf S}^{\dagger}{\bf S}^{\dagger T}{\bf S}^T={\bf S}^{\dagger}$ trivially follows from SVD-decomposition with ${\bf S}={\bf U}{\bf \Sigma}{\bf V}^H$
such that \[{\bf S}^{\dagger}{\bf S}^{\dagger T}{\bf S}^T={\bf V}{\bf \Sigma}^{\dagger}({\bf \Sigma}^{\dagger})^T{\bf \Sigma}^{T}{\bf U}^H={\bf V}{\bf \Sigma}^{\dagger}{\bf U}^H={\bf S}^{\dagger}.\]
$(ii)$ Similarly, we have ${\bf S}{\bf L}^k({\bf L}^{\dagger k})_j={\bf S}({\bf e}_j-\frac{1}{N}{\bf 1}_N)={\bf S}_j$, such that ${\bf S}{\bf L}^k{\bf x}=\sum_{j\in V_S}w_j{\bf S}_j$.\\
$(iii)$ From ${\bf L}^{k}{\bf L}^{\dagger k-1}={\bf L}{\bf L}^{k-1}{\bf L}^{\dagger k-1}={\bf L}$, it follows ${\bf L}^{k}{\bf x}=\sum_{j\in V_S}w_j{\bf L}_j$. Further, we have ${\bf L}^{\dagger k-1}={\bf L}^{\dagger k-2}{\bf L}^{\dagger}={\bf L}^{\dagger k}{\bf L}$ such that ${\bf x}=\sum_{j\in V_S}w_j({\bf L}^{\dagger k-1})_j=\sum_{j\in V_S}w_j{\bf L}^{\dagger k}{\bf L}_j$. 
\end{proof}

\begin{proof}[Proof of Lemma \ref{diffind}]
Consider the general expression for row $i$ of $S^{\dagger}_C(i,j)$, from Rem.\ \ref{syu}:
\[S^{\dagger}_C(i,j)=\left \{
  \begin{aligned}
&\frac{-N-1}{2N}-\frac{j-i}{N},&& \ 0\leq j< i-1\\
&\frac{1-N}{2N},&& \  j=i-1\\
&\frac{N-1}{2N}-\frac{j-i}{N},&& \ i\leq j\leq N-1\\
  \end{aligned} \right.
\]
Accordingly, for $S^{\dagger}_C(i_1,j)-S^{\dagger}_C(i_2,j)$, with $i_1<i_2$, we obtain:
\[S^{\dagger}_C(i_1,j)-S^{\dagger}_C(i_2,j)=\left \{
  \begin{aligned}
&\frac{i_1-i_2}{N},&& \ 0\leq j< i_1-1\\
&\frac{i_1-i_2}{N},&& \ j=i_1-1\\
&1+\frac{i_1-i_2}{N},&& \ i_1\leq j< i_2-1\\
%&\frac{i_1-i_2}{N},&& \  j=i_2+1\\
&1+\frac{i_1-i_2}{N},&& \ j=i_2-1\\
&\frac{i_1-i_2}{N},&& \ i_2\leq j\leq N-1\\
    %&-(j_2-{(N-1)}/{2}), && N-j_2< i\leq N-1\\
  \end{aligned} \right.
\]
We further discover that the sum of entries for the above difference is zero: $(i_1+N-i_2)\frac{i_1-i_2}{N}+(i_2-i_1)\left(1+\frac{i_1-i_2}{N}\right)=0$. 
\end{proof}

\begin{proof}[Proof of Lemma \ref{lemdecomp}]
The associated representer polynomial $l(z)$ of a circulant graph Laplacian of bandwidth $M<N/2$ with node degree $d= \sum_{i=1}^M2 d_i$ and weights $d_i=A_{j,(i+j)_N}$, can be decomposed as
\[l(z)=\sum_{i=1}^M2 d_i-\sum_{i=1}^M d_i (z^i+z^{-i})=\sum_{i=1}^M d_i (z^i-1)(z^{-i}-1)\]\[=(z-1)(z^{-1}-1)\left(\sum_{i=1}^M d_i (1+z+z^2+...+z^{i-1})(1+z^{-1}+z^{-2}+...+z^{-(i-1)})\right)\]\[=(-z+2-z^{-1})\sum_{i=1}^M d_i l_i(z) l_i(z^{-1})=(-z+2-z^{-1})P_{G_S}(z)\]
which reveals that any banded circulant graph Laplacian can be factored as the product of the simple cycle graph Laplacian (i.e. $2$ vanishing moments with $2$ zeros at $z=1$, as is known) and a circulant matrix ${\bf P}_{G_S}$ with representer polynomial $P_{G_S}(z)$. In particular, we prove that ${\bf P}_{G_S}$ is positive definite, and hence invertible, by first observing that the term  $l_i(z) l_i(z^{-1})$ gives rise to a Gramian matrix which is at least positive semi-definite by default, while under the assumption of nonnegative weights $d_i\geq 0$ and $d_1>0$ due to $s=1\in S$ (ensuring connectivity), we have $P_{G_S}(z)=d_1+\sum_{i=2}^M d_i l_i(z) l_i(z^{-1})$. Therefore ${\bf P}_{G_S}$ is the sum of positive (semi-)definite matrices and the positive definite matrix $d_1{\bf I}_N$, making it positive definite. Explicitly, it is given by
\[P_{G_S}(z)=d_1+\sum_{i=2}^Md_i \left(i+\sum_{k=1}^{i-1}(i-k)(z^k+z^{-k})\right)\]\[=\left(\sum_{i=1}^M i d_i\right)+\left(\sum_{i=2}^M (i-1) d_i\right)(z+z^{-1})+\left(\sum_{i=3}^M (i-2) d_i\right)(z^2+z^{-2})+...+d_M(z^{M-1}+z^{-(M-1)})\]
and it becomes evident that its coefficients directly mirror the structure of ${\bf y}$ in Sect.\ $4.1$. Hence, we always have ${\bf L}={\bf P}_{G_S}{\bf L}_C$, where ${\bf L}_C$ denotes the graph Laplacian of the unweighted simple cycle.
\end{proof}

\begin{proof}[Proof of Lemma \ref{lemcircl}]
We have ${\bf L}^{\dagger}={\bf P}_{G_S}^{-1}{\bf L}^{\dagger}_C$, which can be shown via simple eigendecomposition. Let the eigendecompositions and their pseudoinverses respectively be given by: ${\bf L}={\bf V}{\bf \Sigma}_2{\bf V}^H$, ${\bf L}_C={\bf V}{\bf \Sigma}_1{\bf V}^H$, ${\bf P}_{G_S}={\bf V}{\bf S}{\bf V}^H$ and ${\bf L}^{\dagger}={\bf V}{\bf \Sigma}^{\dagger}_2{\bf V}^H$, ${\bf L}^{\dagger}_C={\bf V}{\bf \Sigma}^{\dagger}_1{\bf V}^H$, ${\bf P}_{G_S}^{-1}={\bf V}{\bf S}^{-1}{\bf V}^H$, with common basis ${\bf V}$ the DFT-matrix, as a result of circularity. Then
\[{\bf L}={\bf P}_{G_S}{\bf L}_C={\bf V}{\bf S}^{}{\bf \Sigma}_1{\bf V}^H={\bf V}{\bf \Sigma}_2{\bf V}^H\] gives ${\bf \Sigma}_2={\bf S}^{}{\bf \Sigma}_1$ and so \[{\bf P}_{G_S}^{-1}{\bf L}^{\dagger}_C={\bf V}{\bf S}^{-1}{\bf \Sigma}_1^{\dagger}{\bf V}^H={\bf V}({\bf S}{\bf \Sigma}_1)^{\dagger}{\bf V}^H={\bf L}^{\dagger}.\]
According to Thm.\ $2$, in \cite{volkov} (see also Thm.\ $2.4$, \cite{demko}), the inverse of a cyclically banded positive matrix is `approximately' banded with entries that decay exponentially (in absolute value) away from the diagonal and corners of the matrix. 
Therefore the expression ${\bf P}_{G_S}^{-1}{\bf L}^{\dagger}_C$ suggests that for large $N$ and sufficiently small bandwidth $M$, relative to $N$, the rows and columns of ${\bf L}^{\dagger}$ are polynomials of the same form as ${\bf L}^{\dagger}_C$ subject to border effects (or perturbations) dependent on the `approximate' bandwidth of ${\bf P}_{G_S}^{-1}$. 

\end{proof}

\begin{proof}[Proof of Lemma \ref{lemlincirc}]
We have \[{\bf S}^{\dagger}={\bf L}^{\dagger}{\bf S}^T={\bf P}_{G_S}^{-1}{\bf L}_C^{\dagger}{\bf S}^T\]
such that $({\bf S}^{\dagger})_j={\bf P}_{G_S}^{-1}{\bf L}_C^{\dagger}({\bf S}^T)_j$. It was shown in Sect.\ $4.1.1.$ (see Eq. (\ref{eq:diff})), that taking unweighted differences between any two columns (rows) of ${\bf L}_C^{\dagger}$, which are piecewise quadratic polynomials, results in piecewise  linear polynomials. Thus each column of ${\bf L}^{\dagger}{\bf S}^T$ is a `perturbed' piecewise linear polynomial, following a degree reduction, but of different form since ${\bf S}^T$ has non-circular columns, and it is only in the simple cycle case, where ${\bf S}_C$ is circulant, that ${\bf S}^{\dagger}$ becomes linear along both dimensions.
\end{proof}

\begin{proof}[Proof of Lemma \ref{gan}]
Consider first the example of the simple cycle with ${\bf L}_C={\bf S}_C^T {\bf S}_C$, where ${\bf S}_C$ has first row $\lbrack 1\ -1\ 0\ ...\ 0\rbrack$. Both ${\bf S}_C^T$ and ${\bf S}_C$ are circulant with respective representer polynomials $s_C^T(z)=1-z^{-1}$ and $s_C(z)=1-z^{1}$, which have one vanishing moment each. Due to the equivalence between polynomial and circulant matrix multiplication, the representer polynomial of ${\bf L}_C$ has two vanishing moments and, by extension, that of ${\bf S }_C{\bf L}_C^k$ has $2k+1$ vanishing moments, i.e. ${\bf S }_C{\bf L}_C^k$ annihilates polynomials of up to order $2k$. \\
In the general circulant graph case, we have ${\bf S }{\bf L}^k={\bf S }{\bf L}_C^k{\bf P}_{G_S}^k$; with ${\bf S}={\bf S}^{\dagger T}{\bf L}$, this can be rewritten as
\[ {\bf S }{\bf L}^k={\bf S }{\bf L}_C^{k}{\bf P}_{G_S}^k={\bf S}^{\dagger T}{\bf L}{\bf L}_C^{k}{\bf P}_{G_S}^k={\bf S}^{\dagger T}{\bf S}^T_C{\bf P}_{G_S}({\bf S}_C{\bf L}_C^{k}){\bf P}_{G_S}^{k}.\]
The representer polynomial of factor ${\bf S}_C{\bf L}_C^{k}$ has $2k+1$ vanishing moments, while the factor ${\bf P}_{G_S}^k$, provided it is sufficiently banded, simply amplifies the discontinuities of the polynomial signal. The remaining factor ${\bf S}^{\dagger T}{\bf S}^T_C{\bf P}_{G_S}={\bf S}{\bf S}_C^{\dagger}=({\bf S}_C^{\dagger T}{\bf S}^T)^T$ takes differences between the piecewise-linear rows of ${\bf S}_C^{\dagger}$, so that its rows constitute piecewise constant signals, as per Lemma \ref{diffind}.\\
Given a piecewise polynomial ${\bf x}$ of order up to $2k$, the output ${\bf y}={\bf S}_C{\bf L}_C^{k}{\bf P}_{G_S}^{k}{\bf x}$ is sparse with non-zeros around the locations of its discontinuities, further amplified by ${\bf P}_{G_S}^{k}$. In addition, due to $N({\bf S}_C)=N({\bf S}_C^T)=z{\bf 1}_N,\ z\in \mathbb{R}$, ${\bf y}$ is orthogonal to ${\bf 1}_N$, i.e. its entries sum to zero. Hence, ${\bf y}$ is orthogonal to the rows of ${\bf S}{\bf S}_C^{\dagger}$, provided the support of ${\bf y}$ coincides with the locations of all-constant pieces of ${\bf S}{\bf S}_C^{\dagger}$, which is satisfied when the graph is banded, i.e. the spread around the discontinuities via ${\bf P}_{G_S}^{k}$ is sufficiently small. Therefore, the final output ${\bf S }{\bf L}^k{\bf x}={\bf S}{\bf S}_C^{\dagger}{\bf S}_C{\bf L}_C^{k}{\bf P}_{G_S}^{k}{\bf x}$ is sparse, proving that the operator ${\bf S}{\bf L}^k$ annihilates polynomials of order up to $2k$.

\end{proof}

\section{}
\label{app2}
%\subsection{}

\begin{proof}[Proof of Property \ref{prop4}]
Consider ${\bf L}_{\alpha}=d_{\alpha}{\bf I}_N-{\bf A}$ with $d_{\alpha}=\sum_{j=1}^M 2d_j\cos(\alpha j),\ \alpha\in\mathbb{C}$ on a circulant graph. We need to show when this operator is invertible and begin by assuming that its nullspace is non-empty. Let ${\bf z}$ be in the nullspace of ${\bf L}_{\alpha}$ and consider the representation ${\bf z}={\bf V}{\bf r}$, for DFT-matrix ${\bf V}$ such that ${\bf A}={\bf V}{\bf \Gamma}{\bf V}^H$ with eigenvalue matrix ${\bf \Gamma}$, and coefficient vector ${\bf r}\in\mathbb{C}^N$.\\
Then we have 
\[(d_{\alpha}{\bf I}_N-{\bf A}){\bf z}=d_{\alpha}{\bf V}{\bf r}-{\bf V}{\bf \Gamma}{\bf r}={\bf 0}_N.\]
By taking the $l_2$-norm of both sides and squaring the result, we obtain
\[||d_{\alpha}{\bf V}{\bf r}-{\bf V}{\bf \Gamma}{\bf r}||_2^2=0\]
\[\Leftrightarrow \sum_{i=0}^{N-1} |r(i)|^2 (d_{\alpha}^2-2 d_{\alpha}\gamma_i+\gamma_i^2)=\sum_{i=0}^{N-1} |r(i)|^2 (d_{\alpha}-\gamma_i)^2=0,\]
where $\gamma_i$ denotes the $i$-th eigenvalue in ${\bf \Gamma}$, as ordered by the DFT-matrix. Since $d_{\alpha}$ defines the structure of an eigenvalue of ${\bf A}$, with  $d_{\alpha}=\gamma_k$ when $\alpha=2\pi k/N$ for some $k\in\lbrack 0\enskip N-1\rbrack$, it follows from the above that $|r(i)|\geq 0$ for certain $i$, corresponding to the locations of $\gamma_k$ and its possible multiplicities, and $|r(i)|=0$ otherwise. Letting ${\bf z}={\bf V}_k\tilde{{\bf r}}$ denote the span of eigenvectors associated with eigenvalue $\gamma_k$ of multiplicity $m$, with coefficients $\tilde{{\bf r}}\in\mathbb{C}^m$, we obtain 
\[(d_{\alpha}{\bf I}_N-{\bf A}){\bf z}=d_{\alpha}{\bf V}_k\tilde{{\bf r}}-\gamma_k{\bf V}_k\tilde{{\bf r}}={\bf 0}_N.\]
 proving that ${\bf L}_{\alpha}$ has a non-empty nullspace in that case. Otherwise, for $\alpha\neq 2\pi k/N$, we must have ${\bf r}={\bf 0}_N$ and hence ${\bf L}_{\alpha}$ is invertible.
\end{proof}

\begin{proof}[Proof of Lemma \ref{inva}]
For circulant symmetric matrix ${\bf L}_{C, \alpha}$, with $\alpha\neq 2\pi k/N,\ k\in\mathbb{N}$, and ${\bf L}_{C, \alpha} {\bf L}_{C, \alpha}^{-1}={\bf I}_N$, we can express the entries ${L}^{-1}_{C, \alpha}(l):={L}^{-1}_{C, \alpha}(m,n)$ of ${\bf L}^{-1}_{C, \alpha}$ as a function depending only on the index distance $l=|n-m|$ due to symmetry and circularity. This gives the homogeneous linear recurrence relation \[2\cos(\alpha) {L}^{-1}_{C, \alpha}(l-1)-{L}^{-1}_{C, \alpha}(l)-{L}^{-1}_{C, \alpha}(l-2)=0.\] Via the substitution of ${L}^{-1}_{C, \alpha}(l)=r^l$, we obtain $-r^2+2\cos(\alpha) r-1=0$ with roots $r_{1/2}=e^{\pm i\alpha}$, which gives rise to the general homogeneous solution structure ${L}^{-1}_{C, \alpha}(l)=C_1 e^{i\alpha l}+ C_2 e^{-i\alpha l}$ for coefficients $C_1,\ C_2\in\mathbb{C}$.\\
Hence, for the unique solution of this system, we require two constraints. We begin by noting ${\bf L}_{C, \alpha}^{-1}{\bf L}_{C, \alpha}{\bf 1}_N=(2\cos(\alpha)-2){\bf L}_{C, \alpha}^{-1}{\bf 1}_N={\bf 1}_N$, following the circularity and, hence, equal row-sum of ${\bf L}_{C, \alpha}$, such that we obtain ${\bf L}_{C, \alpha}^{-1}{\bf 1}_N=\frac{1}{2\cos(\alpha)-2}{\bf 1}_N$, or $\sum_{l=0}^{N-1} {L}^{-1}_{C, \alpha}(l)=\frac{1}{2\cos(\alpha)-2}$.\\
Therefore, we need to satisfy the constraint $\sum_{l=0}^{N-1} (C_1 e^{i\alpha l} +C_2 e^{-i\alpha l})=\frac{1}{2\cos(\alpha)-2}$ and boundary condition \[2\cos(\alpha){L}^{-1}_{C, \alpha}(0)-2 {L}^{-1}_{C, \alpha}(1)=2\cos(a)(C_1+C_2)-2 (C_1 e^{i\alpha}+C_2 e^{-i \alpha})=1,\] where ${L}^{-1}_{C, \alpha}(-1)={L}^{-1}_{C, \alpha}(1)$, which gives $C_2=C_1+\frac{1}{e^{i\alpha}-e^{-i\alpha}}$.\\
For the former, utilizing the exponential sum formula $\sum_{l=0}^{N-1} e^{i\alpha l}=\frac{1-e^{i\alpha N}}{1-e^{i\alpha}}$, we obtain \[\frac{C_1 (1-e^{i\alpha N})+C_2 (e^{-i\alpha (N-1)}-e^{i\alpha})}{1-e^{i\alpha}}=\frac{1}{2\cos(\alpha)-2}\]

\[\Leftrightarrow C_1 (e^{-i\alpha}-1-e^{i\alpha (N-1)}+e^{i\alpha N})+C_2(e^{-i\alpha N}-1-e^{-i\alpha (N-1)}+e^{i\alpha})=1.\]
Substituting $C_2=C_1+\frac{1}{e^{i\alpha}-e^{-i\alpha}}$ from the boundary condition, and, following simplifications, we obtain 
\[C_1 (e^{-i\alpha}-1-e^{i\alpha (N-1)}+e^{i\alpha N})+\left(C_1+\frac{1}{e^{i\alpha}-e^{-i\alpha}}\right)(e^{-i\alpha N}-1-e^{-i\alpha (N-1)}+e^{i\alpha})=1\]
\[\Leftrightarrow C_1=\frac{1}{(-e^{-i\alpha}+e^{i\alpha})(-1+e^{i\alpha N})}\]
and, equivalently 
\[C_2=\frac{1}{e^{i\alpha}-e^{-i\alpha}}+\frac{e^{i \alpha}}{(-1+e^{2i\alpha})(-1+e^{i\alpha N})}=\frac{1}{(e^{-i\alpha}-e^{i\alpha})(-1+e^{-i\alpha N})}.\]
Further inspection confirms that the coefficients form complex conjugates $C_1=\bar{C_2}$. 
Note that if $\alpha=2\pi k/N$, $k\in\mathbb{N}$, there would be a pole at $(-1+e^{i\alpha N})$, however, this case is not applicable here.
Eventually, we obtain the desired result
\[{L}_{C, \alpha}^{-1}(m,n)=\scalebox{1}{$\frac{1}{(-e^{-i\alpha}+e^{i\alpha})(-1+e^{i\alpha N})} e^{i\alpha |n-m|}+\frac{1}{(e^{-i\alpha}-e^{i\alpha})(-1+e^{-i\alpha N})} e^{-i\alpha |n-m|}$},\] \[0\leq m,n\leq N-1.\]
\end{proof}

\begin{proof}[Proof of Lemma \ref{mppa}]
For circulant symmetric matrix ${\bf L}_{C, \alpha}$, with $\alpha= 2\pi k/N,\ k\in\mathbb{N}$, we now have the system ${\bf L}_{C, \alpha}{\bf L}_{C, \alpha}^{\dagger}={\bf L}_{C, \alpha}^{\dagger}{\bf L}_{C, \alpha}={\bf I}_N-\frac{1}{N}{\bf E}_{\alpha}$, where $E_{\alpha}(m,n)=2\cos(\alpha |n-m|)$.\\
As a result, we obtain the non-homogeneous recurrence relation 
\begin{equation}\label{eq:rec}-{L}_{C, \alpha}^{\dagger}(l)+2\cos(\alpha) {L}_{C, \alpha}^{\dagger}(l-1)-{L}_{C, \alpha}^{\dagger}(l-2)=-\frac{2\cos(\alpha (l-1))}{N}\end{equation} with ${L}_{C, \alpha}^{\dagger}(l):={L}_{C, \alpha}^{\dagger}(m,n)$ and $l=|n-m|$, following the same argument as in the proof of Lemma \ref{inva}. In particular, we know from the previous proof that the homogeneous solution to the system is $({L}_{C, \alpha}^{\dagger})_H(l)=C_1 e^{i\alpha l}+ C_2 e^{-i\alpha l}$ for coefficients $C_1, C_2\in\mathbb{C}$, and, as a result of repeated roots, the particular solution must be of the form $({L}_{C, \alpha}^{\dagger})_P(l)=l(C_3 e^{i\alpha l}+ C_4 e^{-i\alpha l})$ for coefficients $C_3, C_4\in\mathbb{C}$. 
Via substitution of $({L}_{C, \alpha}^{\dagger})_P(l)$ into Eq. (\ref{eq:rec}), 
we obtain

\[({L}_{C, \alpha}^{\dagger})_P(l)=\frac{l}{N}\left(\frac{1}{e^{i\alpha}-e^{-i\alpha}}e^{i\alpha l}+\frac{1}{-e^{i\alpha}+e^{-i\alpha}}e^{-i\alpha l}\right)\]
with complex conjugate coefficients $C_3=\bar{C_4}$.\\ 
At last we impose the boundary condition $2\cos(\alpha) {L}_{C, \alpha}^{\dagger}(0)-2 {L}_{C, \alpha}^{\dagger}(1)=\frac{N-2}{N}$ and constraint
$\sum_{l=0}^{N-1} 2\cos(\alpha l){L}_{C, \alpha}^{\dagger}(l) =0$, following from ${\bf L}^{\dagger}_{C, \alpha} {\bf E}_{\alpha}={\bf 0}$, to describe the homogeneous solution. Here, we use the power sum formula \[\sum_{k=0}^n kx^k=\frac{x-(n+1)x^{n+1}+nx^{n+2}}{(x-1)^2}\] which gives \[\sum_{n=0}^{N-1} ne^{i 2 \alpha n}=\frac{e^{i 2 \alpha}+(N-1) e^{i 2 \alpha (N+1)}-N e^{i 2 \alpha N}}{(e^{i 2 \alpha}-1)^2}\] such that
\begin{equation*}
\begin{split}
\sum_{l=0}^{N-1} 2\cos(\alpha l) ({L}_{C, \alpha}^{\dagger})_P(l) &=\frac{-e^{-i\alpha (2N -1)}e^{2 i\alpha}(N-1)}{N (e^{2i\alpha}-1)^3}\\
&+\frac{e^{-i\alpha (2N -1)}(e^{2 i\alpha(1+2N)}(N-1)+N(e^{4i\alpha}-e^{4 i\alpha N}))}{N (e^{2i\alpha}-1)^3}\end{split}\end{equation*}
and 
\begin{equation*}
\begin{split}\sum_{l=0}^{N-1} 2\cos(\alpha l)({L}_{C, \alpha}^{\dagger})_H(l)&=C_1 \frac{(N+1)-N e^{2i\alpha}-e^{2i\alpha N}}{(1-e^{2 i\alpha})}\\
&+C_2 \frac{N+e^{2i\alpha}(e^{-2i\alpha N}-(N+1))}{(1-e^{2 i\alpha})}.\end{split}\end{equation*}
From $2\cos(\alpha){L}_{C, \alpha}^{\dagger}(0)-2 {L}_{C, \alpha}^{\dagger}(1)+2/N-1=0$, we again have $C_2=C_1+\frac{1}{e^{ i\alpha}-e^{-i\alpha}}$ and substitute this into $\sum_{l=0}^{N-1} 2\cos(\alpha l){L}_{C, \alpha}^{\dagger}(l) =0$, to obtain
\[C_1=-\frac{e^{i\alpha (3-2N)}-e^{i\alpha(3+2N)}+e^{i\alpha}((-1+e^{i\alpha 2}) (e^{i\alpha 2}+e^{i\alpha 2 N}) N+(-1+e^{i\alpha 2})^2 N^2)}{(-1+e^{i\alpha 2})^2 (-(-1+e^{-i\alpha 2 N})(e^{i\alpha 2 N}+e^{i\alpha 2})N+2 N^2 (-1+e^{i\alpha 2}))}\]
\[C_2=\frac{e^{i\alpha (3+4N)}-e^{i\alpha 3}-e^{i\alpha}N(-1+e^{i\alpha 2}) (e^{i\alpha 2}+e^{i\alpha 2 N})+e^{i\alpha (1+2 N)}(-1+e^{i\alpha 2})^2 N^2}{e^{2 i\alpha N}(-1+e^{i\alpha 2})^2 (-(-1+e^{-i\alpha 2 N})(e^{i\alpha 2 N}+e^{i\alpha 2})N+2 N^2 (-1+e^{i\alpha 2}))}\]
\[=-\frac{e^{-i\alpha 3}(e^{2 i\alpha N}-e^{-2 i\alpha N})+e^{-i\alpha}((-1+e^{-i\alpha 2}) (e^{-i\alpha 2}+e^{-i\alpha 2 N}) N+(-1+e^{-i\alpha 2})^2 N^2)}{(-1+e^{-i\alpha 2})^2 (-(-1+e^{i\alpha 2 N})(e^{-i\alpha 2 N}+e^{-i\alpha 2})N+2 N^2 (-1+e^{-i\alpha 2}))}\]
so that $C_2=\bar{C_1}$.
We can simplify expressions further due to $\alpha=2\pi k/N$, which e.g. entails $e^{i\alpha N k}=1$ for any $k$, so that we obtain
\[C_1=-\frac{e^{i\alpha}((e^{i\alpha 2}+1)+(-1+e^{i\alpha 2}) N)}{2 N(-1+e^{i\alpha 2})^2}\]
\[C_2=\frac{e^{i\alpha}(-(1+e^{i\alpha 2})+ (-1+e^{i\alpha 2}) N)}{2 N(-1+e^{i\alpha 2})^2}=-\frac{e^{-i\alpha}((1+e^{-2 i\alpha})+(-1+e^{-2 i\alpha})N)}{2N (-1+e^{-2 i\alpha})^2}\]
\\
Consequently, the pseudoinverse ${\bf L}_{C, \alpha}^{\dagger}$ for $\alpha=2\pi k/N,\enskip k\in\mathbb{N}$ and $\alpha\neq 0, \pi k$, of ${\bf L}_{C, \alpha}$ on the simple cycle has entries
\[{L}_{C, \alpha}^{\dagger}(m,n)=\frac{e^{i\alpha}}{2N}\left(\frac{2|n-m|(-1+e^{2 i\alpha})+(N-1)-e^{2 i\alpha}(N+1)}{(-1+e^{2 i \alpha})^2}\right)e^{i\alpha |n-m|}\]
\[+\frac{e^{-i\alpha}}{2N}\left(\frac{2|n-m|(-1+e^{-2 i\alpha})+(N-1)-e^{-2 i\alpha}(N+1)}{(-1+e^{-2 i \alpha})^2}\right)e^{-i\alpha |n-m|},\enskip 0\leq m,n\leq N-1.\]
Note that for certain $\alpha$ such as $\alpha=0, k \pi$ the formula does not apply due to the pole with $-1+e^{2 i\alpha}=-1+e^{i 2 \pi k}=0$ and an alternative construction is required. \\For the case $\alpha=\pi$ at even $N$, we need to modify the boundary condition and constraint as follows: we have the problem ${\bf L}_{C, \pi}{\bf L}^{\dagger}_{C, \pi}={\bf I}_N-\frac{1}{N}\bar{{\bf J}}_N$, where $\bar{{\bf J}}_N$ is the circulant matrix with first row $\lbrack 1 \ -1 \ 1 \ -1 \ \dots \rbrack$ with $\cos(\alpha)=-1$, and the constraint ${\bf L}^{\dagger}_{C, \pi}\bar{{\bf J}}_N={\bf 0}$, which translates into 
\[-{L}^{\dagger}_{C, \pi}(l+1)-2 {L}^{\dagger}_{C, \pi}(l)-{L}^{\dagger}_{C, \pi}(l-1)=(-1)^{l+1}/N\] with $-2 {L}^{\dagger}_{C, \pi}(0)-2{L}^{\dagger}_{C, \pi}(1)=1-1/N$ and $\sum_{l=0}^{N-1} {L}^{\dagger}_{C, \pi}(l) (-1)^l=0$.
We similarly infer homogeneous solution $({L}^{\dagger}_{C, \pi})_H(l)=(-1)^l C_1+(-1)^l l C_2$ and particular solution $({L}^{\dagger}_{C, \pi})_P(l)=\frac{(-1)^{l+1} l^2}{2N}$, giving $C_1=\frac{1-N^2}{12N}$ and $C_2=\frac{1}{2}$, so that
\[{L}_{C, \pi}^{\dagger}(m,n)=(-1)^{|n-m|+1}\left(\frac{(n-m)^2}{2N}-\frac{1}{2}|n-m|+\frac{N^2-1}{12 N}\right)=(-1)^{|n-m|+1}{L}_{C}^{\dagger}(m,n),\]\[\enskip 0\leq m,n\leq N-1.\]
\end{proof}

\begin{proof}[Proof of Lemma \ref{newdecomp}]
Consider ${\bf L}_{\alpha}=d_{\alpha}{\bf I}_N-{\bf A}$ on a banded circulant graph of bandwidth $M<N/2$ with $d_{\alpha}=\sum_{j=1}^M 2 d_j \cos(\alpha j)$. In Lemma $3.2$, \cite{splinesw}, it was shown that the representer polynomial $l_{\alpha}(z)$ of ${\bf L}_{\alpha}$ has two vanishing exponential moments. We further extend this to yield the decomposition
\[l_{\alpha}(z)=\sum_{j=1}^M d_j (1-e^{i\alpha j}z^j)(1-e^{-i\alpha j}z^j)(-z^{-j})=(1-e^{i \alpha}z)(1-e^{-i \alpha}z)\sum_{j=1}^M d_j p_j(z) q_j(z)(-z^{-j})\]\[=(-z^{-1}+2\cos(\alpha)-z)\sum_{j=1}^M d_j p_j(z)q_j(z)z^{-(j-1)}
=l_{C,\alpha}(z) P_{\alpha}(z), \]
where $p_j(z)=(1+e^{i\alpha}z+e^{2i\alpha}z^2+...+e^{i\alpha (j-1)}z^{j-1})$, and $q_j(z)=(1+e^{-i\alpha}z+e^{-2i\alpha}z^2+...+e^{-i\alpha (j-1)}z^{j-1})$. Here, $l_{C,\alpha}(z)$ and $P_{\alpha}(z)$ are the representer polynomials of ${\bf L}_{C, \alpha}$ for the unweighted simple cycle and the circulant matrix ${\bf P}_{\alpha}$ of bandwidth $M-1$ respectively, such that we have ${\bf L}_{\alpha}={\bf L}_{C, \alpha}{\bf P}_{\alpha}$. It becomes evident that for $\alpha\neq 0$, ${\bf P}_{\alpha}$ is not necessarily positive definite, and we further proceed to derive its entries. \\
Consider the arising terms \[p_j(z)q_j(z)z^{-(j-1)}=z^{-(j-1)} \sum_{t=0}^{2(j-1)}r_t z^t,\ \text{with} \ r_t=\sum_{m+n=t} e^{i\alpha m} e^{-i\alpha n}.\]
It can be easily checked that $p_j(z)q_j(z)=p_j(z^{-1})q_j(z^{-1})z^{2(j-1)}$ is a palindromic polynomial, which gives $p_j(z)q_j(z)z^{-(j-1)}=r_{j-1}+\sum_{t=1}^{j-1} r_{j-1-t} (z^t+z^{-t})$ with $r_t=r_{2(j-1)-t}$, where for even $t$ \[r_t=1+\sum_{k=0}^{t/2-1} e^{i\alpha(2k-t)}+e^{-i\alpha(2k-t)}=1+\sum_{k=0}^{t/2-1} 2\cos(\alpha (2k-t))\]\[=1+2\cos(2\alpha)+2\cos(4\alpha)+...+2\cos(t\alpha)\]
and for odd $t$
\[r_t=\sum_{k=0}^{(t+1)/2-1}  2\cos(\alpha (2k-t))=2\cos(\alpha)+2\cos(3\alpha)+...+2\cos(t\alpha).\]

\end{proof}

\begin{proof}[Proof of Cor. \ref{mppinva}]
$(i)$ From Property \ref{prop4}, we know that if $\alpha\neq 2\pi k/N,\ k\in\mathbb{N}$, both ${\bf L}_{\alpha}$ and ${\bf L}_{C, \alpha}$ must be invertible. By extension, and given the decomposition ${\bf L}_{\alpha}={\bf L}_{C, \alpha}{\bf P}_{\alpha}$, it trivially follows that ${\bf P}_{\alpha}$ must be invertible and ${\bf L}_{\alpha}^{-1}={\bf L}_{C, \alpha}^{-1}{\bf P}_{\alpha}^{-1}$.\\
$(ii)$ For $\alpha=2 \pi k/N$ and $\alpha\neq 0, k\pi, \ k\in\mathbb{N}$ and suitable $G_S$ such that  ${\bf P}_{\alpha}$ is positive definite, the decomposition ${\bf L}_{\alpha}^{\dagger}={\bf L}_{C, \alpha}^{\dagger}{\bf P}_{\alpha}^{-1}$ holds, following the argument of eigendecomposition (as shown in the proof of Lemma \ref{lemcircl}). \\
Provided that ${\bf P}_{\alpha}$ is positive definite such that ${\bf P}_{\alpha}^{-1}$ becomes approximately banded with entries which decay exponentially in absolute value, for suitable $\alpha$ and $G_S$ (see \cite{volkov}), ${\bf P}_{\alpha}^{-1}$ can be interpreted as invoking a perturbation on ${\bf L}_{C, \alpha}^{\dagger}$, which constitute linear complex exponential polynomial functions.
\end{proof}

\begin{proof}[Proof of Prop. \ref{newnullsp1}]
$(i)$ Since ${\bf L}_{\alpha}$ is invertible for $\alpha\neq 2\pi k/N,\ k\in\mathbb{N}$, it trivially follows that $N({\bf \Psi}_{\Lambda}{\bf L}_{\alpha})={\bf L}_{\alpha}^{-1} N({\bf \Psi}_{\Lambda})$ from ${\bf \Psi}_{\Lambda}{\bf L}_{\alpha}{\bf L}_{\alpha}^{-1} N({\bf \Psi}_{\Lambda})={\bf 0}_{\Lambda}$, with $N({\bf \Psi}_{\Lambda})={\bf \Psi}_{\Lambda^{\complement}}^T$ as a possible basis.\\
$(ii)$ For $\alpha=2\pi k/N,\ k\in\mathbb{N}$, excluding $\alpha=0, k \pi$, we have by design $N({\bf L}_{C, \alpha})=z_1 e^{i\alpha {\bf t}} +z_2 e^{- i\alpha {\bf t}},\ z_1,z_2\in\mathbb{C}$, of dimension $2$. With ${\bf P}_{\alpha}$ positive definite, this entails $N({\bf L}_{\alpha})=N({\bf L}_{C, \alpha})$. In addition, in line with the proof of Prop.\ \ref{prop1} for $\alpha=0$, we identify the solution set of ${\bf L}_{\alpha}{\bf u}=N({\bf \Psi}_{\Lambda}){\bf W}_{\alpha}{\bf c}$ subject to some constraint ${\bf W}_{\alpha}\in\mathbb{R}^{|\Lambda^{\complement}|\times k}$ with subspace dimension $k$, which needs to satisfy the F.A. $N({\bf \Psi}_{\Lambda}){\bf W}_{\alpha}\perp N({\bf L}_{\alpha})$, to be of the form
${\bf L}_{\alpha}^{\dagger}{\bf \Psi}_{\Lambda^{\complement}}^T{\bf w}$, with $N({\bf \Psi}_{\Lambda}):={\bf \Psi}_{\Lambda^{\complement}}^T$ and ${\bf w}={\bf W}_{\alpha}{\bf c}$ for suitable ${\bf c}$,
as follows
\[{\bf \Psi}_{\Lambda}{\bf L}_{\alpha}{\bf L}_{\alpha}^{\dagger}{\bf \Psi}_{\Lambda^{\complement}}^T{\bf w}={\bf \Psi}_{\Lambda}\left({\bf I}_N-\frac{1}{N}{\bf E}_{\alpha}\right){\bf \Psi}_{\Lambda^{\complement}}^T{\bf w}={\bf 0}_{\Lambda},\]
where we have used ${\bf L}_{\alpha}{\bf L}_{\alpha}^{\dagger}=({\bf I}_N-\frac{1}{N}{\bf E}_{\alpha})$, with ${\bf E}_{\alpha}=\sum_{\lambda_j=0}{\bf u}_j {\bf u}^H_j$ and ${\bf u}_1=e^{ i\alpha {\bf t}}$, ${\bf u}_2=e^{-i\alpha {\bf t}}$. The latter is a symmetric circulant matrix with diagonals of the form $\frac{-2}{N}\cos(\alpha k)$, where $\cos(\alpha k)=\cos(\alpha (N-k))$ for the given $\alpha$, and main diagonal $\frac{N-2}{N}$. Hence, we need to determine the nullspace of the matrix ${\bf \Psi}_{\Lambda}({\bf I}_N-\frac{1}{N}{\bf E}_{\alpha}){\bf \Psi}_{\Lambda^{\complement}}^T$.\\
\\
We discover that ${\bf w}$ is in $N({\bf \Psi}_{\Lambda}{\bf E}_{\alpha}{\bf \Psi}_{\Lambda^{\complement}}^T)$, i.e. it must be orthogonal to partially supported complex exponential signals, or, $e^{\pm i\alpha {\bf t}}\perp {\bf \Psi}_{\Lambda^{\complement}}^T{\bf w}$. Here, ${\bf E}_{\alpha}$ has rank $k=2$, as does ${\bf \Psi}_{\Lambda}{\bf E}_{\alpha}{\bf \Psi}_{\Lambda^{\complement}}^T$ for $|\Lambda^{\complement}|>1$, up to special cases; according to the rank-nullity Thm.\ \cite{horn}, this implies that $N({\bf \Psi}_{\Lambda}{\bf E}_{\alpha}{\bf \Psi}_{\Lambda^{\complement}}^T)$ is of dimension $|\Lambda^{\complement}|-2$, generating a basis of size $|\Lambda^{\complement}|\times|\Lambda^{\complement}|-2$. Since $N({\bf L}_{\alpha})$ has dimension $2$, we require $|\Lambda^{\complement}|\geq 3$ in order for $N({\bf \Psi}_{\Lambda}{\bf L}_{\alpha})$ to be full-rank and for its subspace ${\bf L}_{\alpha}^{\dagger}{\bf \Psi}_{\Lambda^{\complement}}^T{\bf W}_{\alpha}$ to be non-empty; specifically we have ${\bf W}_{\alpha}\in\mathbb{C}^{|\Lambda^{\complement}|\times|\Lambda^{\complement}|-2}$ which must have at least three entries $|\Lambda^{\complement}|\geq 3$ to yield a solution. Else, we have $N({\bf \Psi}_{\Lambda}{\bf L}_{\alpha})=N({\bf L}_{\alpha})$.\\
In a special case for $|\Lambda^{\complement}|=2$, it can be shown that the system resulting from $e^{\pm i\alpha {\bf t}}\perp {\bf \Psi}_{\Lambda^{\complement}}^T{\bf w}$
\[c_1 e^{\pm i\alpha t_1}+c_2 e^{\pm i\alpha t_2}=0,\ \text{for unknown}\ c_1,c_2\in \mathbb{C},\ t_1, t_2 \in \Lambda^{\complement}\]
has the only non-trivial solution of the form $t_2=t_1+\frac{N}{2}$ for $c_1=c_2(-1)^{k+1}$. Hence, for $\Lambda^{\complement}=\{m,(m+\frac{N}{2}), \ m\in V\}$, we have that $N({\bf L}_{\alpha})$ has dimension $|\Lambda^{\complement}|+1=3$ and the minimum support size to annihilate ${\bf E}_{\alpha}$ is $2$.

\end{proof}

%% \section{}
%% \label{}

%% If you have bibdatabase file and want bibtex to generate the
%% bibitems, please use
%%
%%  \bibliographystyle{elsarticle-num} 
%%  \bibliography{<your bibdatabase>}

%% else use the following coding to input the bibitems directly in the
%% TeX file.

%\begin{thebibliography}{00}

%% \bibitem{label}
%% Text of bibliographic item
\bibliographystyle{elsarticle-num} 
\bibliography{Bib2018Acha}

\begin{thebibliography}{10}
\expandafter\ifx\csname url\endcsname\relax
  \def\url#1{\texttt{#1}}\fi
\expandafter\ifx\csname urlprefix\endcsname\relax\def\urlprefix{URL }\fi
\expandafter\ifx\csname href\endcsname\relax
  \def\href#1#2{#2} \def\path#1{#1}\fi

\bibitem{elad}
M.~Elad, P.~Milanfar, R.~Rubinstein,
  \href{http://stacks.iop.org/0266-5611/23/i=3/a=007}{Analysis versus synthesis
  in signal priors}, Inverse Problems 23~(3) (2007) 947.
\newline\urlprefix\url{http://stacks.iop.org/0266-5611/23/i=3/a=007}

\bibitem{cos}
S.~Nam, M.~Davies, M.~Elad, R.~Gribonval, The cosparse analysis model and
  algorithms, Applied and Computational Harmonic Analysis 34~(1) (2013) 30--56.
\newblock \href {https://doi.org/10.1016/j.acha.2012.03.006}
  {\path{doi:10.1016/j.acha.2012.03.006}}.

\bibitem{uos}
T.~Blumensath, M.~Davies, Sampling theorems for signals from the union of
  finite-dimensional linear subspaces, IEEE Transactions on Information Theory
  55~(4) (2009) 1872--1882.
\newblock \href {https://doi.org/10.1109/TIT.2009.2013003}
  {\path{doi:10.1109/TIT.2009.2013003}}.

\bibitem{fred}
A.~G. Ramm, \href{http://www.jstor.org/stable/2695558}{A simple proof of the
  fredholm alternative and a characterization of the fredholm operators}, The
  American Mathematical Monthly 108~(9) (2001) 855--860.
\newline\urlprefix\url{http://www.jstor.org/stable/2695558}

\bibitem{splinesw}
M.~S. Kotzagiannidis, P.~L. Dragotti, Splines and wavelets on circulant graphs,
  Applied and Computational Harmonic Analysis (2017), in press,
  https://doi.org/10.1016/j.acha.2017.10.002, available on arXiv:
  arXiv:1603.04917.

\bibitem{horn}
R.~A. Horn, C.~R. Johnson, Matrix {A}nalysis, 2nd Edition, Cambridge University
  Press, New York, NY, USA, 2012.

\bibitem{shu}
D.~I. Shuman, S.~K. Narang, P.~Frossard, A.~Ortega, P.~Vandergheynst, {The
  Emerging Field of Signal Processing on Graphs: Extending High-Dimensional
  Data Analysis to Networks and Other Irregular Domains}, IEEE Signal Process.
  Mag. 30~(3) (2013) 83--98.

\bibitem{acha2}
M.~S. Kotzagiannidis, P.~L. Dragotti, Sampling and {R}econstruction of {S}parse
  {S}ignals on {C}irculant {G}raphs - {A}n {I}ntroduction to {G}raph-{FRI},
  Appl. Comput. Harmon. Anal. (2017), in press,
  http://dx.doi.org/10.1016/j.acha.2017.10.003, available on arXiv:
  arXiv:1606.08085.

\bibitem{unser}
M.~Unser, J.~Fageot, H.~Gupta, Representer theorems for sparsity-promoting
  l1-regularization, IEEE Transactions on Information Theory 62 (2016)
  5167--5180.

\bibitem{flinth}
A.~Flinth, P.~Weiss,
  \href{https://hal.archives-ouvertes.fr/hal-01572196}{{Exact solutions of
  infinite dimensional total-variation regularized problems}}, working paper or
  preprint (Aug. 2017).
\newline\urlprefix\url{https://hal.archives-ouvertes.fr/hal-01572196}

\bibitem{Coifman}
R.~Coifman, M.~Maggioni, Diffusion wavelets, Applied and Computational Harmonic
  Analysis 21~(1) (2006) 53--94.

\bibitem{spectral}
D.~K. Hammond, P.~Vandergheynst, R.~Gribonval, Wavelets on graphs via spectral
  graph theory, Applied and Computational Harmonic Analysis 30~(2) (2011)
  129--150.
\newblock \href {https://doi.org/http://dx.doi.org/10.1016/j.acha.2010.04.005}
  {\path{doi:http://dx.doi.org/10.1016/j.acha.2010.04.005}}.

\bibitem{ortega3}
S.~K. Narang, A.~Ortega, Compact {S}upport {B}iorthogonal {W}avelet
  {F}ilterbanks for {A}rbitrary {U}ndirected {G}raphs, Signal Processing, IEEE
  Transactions on 61~(19) (2013) 4673--4685.
\newblock \href {https://doi.org/10.1109/TSP.2013.2273197}
  {\path{doi:10.1109/TSP.2013.2273197}}.

\bibitem{dict1}
D.~Thanou, D.~I. Shuman, P.~Frossard,
  \href{http://dx.doi.org/10.1109/GlobalSIP.2013.6736921}{Parametric dictionary
  learning for graph signals}, in: {IEEE} Global Conference on Signal and
  Information Processing, GlobalSIP 2013, Austin, TX, USA, December 3-5, 2013,
  2013, pp. 487--490.
\newblock \href {https://doi.org/10.1109/GlobalSIP.2013.6736921}
  {\path{doi:10.1109/GlobalSIP.2013.6736921}}.
\newline\urlprefix\url{http://dx.doi.org/10.1109/GlobalSIP.2013.6736921}

\bibitem{kov}
S.~Chen, A.~Singh, J.~Kovacevic,
  \href{http://arxiv.org/abs/1803.02944}{Multiresolution representations for
  piecewise-smooth signals on graphs}, CoRR abs/1803.02944.
\newblock \href {http://arxiv.org/abs/1803.02944} {\path{arXiv:1803.02944}}.
\newline\urlprefix\url{http://arxiv.org/abs/1803.02944}

\bibitem{tools}
S.~Chen, R.~Varma, A.~Singh, J.~Kovacevic,
  \href{http://arxiv.org/abs/1512.05406}{Signal representations on graphs:
  Tools and applications}, CoRR abs/1512.05406.
\newblock \href {http://arxiv.org/abs/1512.05406} {\path{arXiv:1512.05406}}.
\newline\urlprefix\url{http://arxiv.org/abs/1512.05406}

\bibitem{smola}
Y.-X. Wang, J.~Sharpnack, A.~J. Smola, R.~J. Tibshirani,
  \href{http://jmlr.org/papers/v17/15-147.html}{Trend filtering on graphs},
  Journal of Machine Learning Research 17~(105) (2016) 1--41.
\newline\urlprefix\url{http://jmlr.org/papers/v17/15-147.html}

\bibitem{Pesenson}
I.~Pesenson, Variational {S}plines and {P}aley--{W}iener {S}paces on
  {C}ombinatorial {G}raphs, Constructive Approximation 29~(1) (2008) 1--21.
\newblock \href {https://doi.org/10.1007/s00365-007-9004-9}
  {\path{doi:10.1007/s00365-007-9004-9}}.

\bibitem{global}
M.~S. Kotzagiannidis, M.~E. Davies, Analysis vs synthesis - an investigation of
  (co)sparse signal models on graphs, in: 2018 {IEEE} Global Conference on
  Signal and Information Processing (GlobalSIP), 2018, to appear.

\bibitem{circcon}
F.~Boesch, R.~Tindell, Circulants and their connectivities, Journal of Graph
  Theory 8~(4)  487--499.
\newblock \href
  {http://arxiv.org/abs/https://onlinelibrary.wiley.com/doi/pdf/10.1002/jgt.3190080406}
  {\path{arXiv:https://onlinelibrary.wiley.com/doi/pdf/10.1002/jgt.3190080406}},
  \href {https://doi.org/10.1002/jgt.3190080406}
  {\path{doi:10.1002/jgt.3190080406}}.

\bibitem{circul}
D.~Geller, I.~Kra, S.~Popescu, S.~Simanca, On circulant matrices, Preprint,
  Stony Brook University.

\bibitem{laplaceeigen}
P.~F. Stadler, Laplacian {E}igenvectors of {G}raphs: {Perron-Frobenius} and
  {Faber-Krahn} Type Theorems, Springer, Berlin; New York, 2007.
\newblock \href {https://doi.org/10.1007/978-3-540-73510-6}
  {\path{doi:10.1007/978-3-540-73510-6}}.

\bibitem{mpp}
J.~C.~A. Barata, M.~S. Hussein, The {M}oore--{P}enrose pseudoinverse: A
  tutorial review of the theory, Brazilian Journal of Physics 42~(1) (2012)
  146--165.
\newblock \href {https://doi.org/10.1007/s13538-011-0052-z}
  {\path{doi:10.1007/s13538-011-0052-z}}.

\bibitem{green1}
F.~Chung, S.-T. Yau,
  \href{http://www.sciencedirect.com/science/article/pii/S0097316500930942}{Discrete
  {G}reen's functions}, Journal of Combinatorial Theory, Series A 91~(1--2)
  (2000) 191 -- 214.
\newblock \href {https://doi.org/http://dx.doi.org/10.1006/jcta.2000.3094}
  {\path{doi:http://dx.doi.org/10.1006/jcta.2000.3094}}.
\newline\urlprefix\url{http://www.sciencedirect.com/science/article/pii/S0097316500930942}

\bibitem{plonka2}
S.~Hoffmann, G.~Plonka, J.~Weickert, Discrete green's functions for harmonic
  and biharmonic inpainting with sparse atoms, in: X.-C. Tai, E.~Bae, T.~F.
  Chan, M.~Lysaker (Eds.), Energy Minimization Methods in Computer Vision and
  Pattern Recognition, Springer International Publishing, Cham, 2015, pp.
  169--182.

\bibitem{splinesn}
M.~Unser, T.~Blu, Self-similarity: Part i-splines and operators, IEEE
  Transactions on Signal Processing 55~(4) (2007) 1352--1363.
\newblock \href {https://doi.org/10.1109/TSP.2006.890843}
  {\path{doi:10.1109/TSP.2006.890843}}.

\bibitem{chu}
\href{https://epubs.siam.org/doi/abs/10.1137/1.9781611970470.ch7}{7. connection
  coefficients}, in: Orthogonal Polynomials and Special Functions, pp. 57--69.
\newblock \href
  {http://arxiv.org/abs/https://epubs.siam.org/doi/pdf/10.1137/1.9781611970470.ch7}
  {\path{arXiv:https://epubs.siam.org/doi/pdf/10.1137/1.9781611970470.ch7}},
  \href {https://doi.org/10.1137/1.9781611970470.ch7}
  {\path{doi:10.1137/1.9781611970470.ch7}}.
\newline\urlprefix\url{https://epubs.siam.org/doi/abs/10.1137/1.9781611970470.ch7}

\bibitem{green2}
R.~Ellis, Discrete {G}reen's functions for products of regular graphs, arXiv
  preprint math/0309080.

\bibitem{volkov}
Y.~S. Volkov, \href{https://doi.org/10.1134/S1995423910030018}{Inverses of
  cyclic band matrices and the convergence of interpolation processes for
  derivatives of periodic interpolation splines}, Numerical Analysis and
  Applications 3~(3) (2010) 199--207.
\newblock \href {https://doi.org/10.1134/S1995423910030018}
  {\path{doi:10.1134/S1995423910030018}}.
\newline\urlprefix\url{https://doi.org/10.1134/S1995423910030018}

\bibitem{plonka}
G.~Plonka, S.~Hoffmann, J.~Weickert,
  \href{http://www.sciencedirect.com/science/article/pii/S0024379516300428}{Pseudo-inverses
  of difference matrices and their application to sparse signal approximation},
  Linear Algebra and its Applications 503 (2016) 26 -- 47.
\newblock \href {https://doi.org/https://doi.org/10.1016/j.laa.2016.03.033}
  {\path{doi:https://doi.org/10.1016/j.laa.2016.03.033}}.
\newline\urlprefix\url{http://www.sciencedirect.com/science/article/pii/S0024379516300428}

\bibitem{spa}
D.~L. Donoho, M.~Elad, \href{http://www.pnas.org/content/100/5/2197}{Optimally
  sparse representation in general (nonorthogonal) dictionaries via l1
  minimization}, Proceedings of the National Academy of Sciences 100~(5) (2003)
  2197--2202.
\newblock \href
  {http://arxiv.org/abs/http://www.pnas.org/content/100/5/2197.full.pdf}
  {\path{arXiv:http://www.pnas.org/content/100/5/2197.full.pdf}}, \href
  {https://doi.org/10.1073/pnas.0437847100}
  {\path{doi:10.1073/pnas.0437847100}}.
\newline\urlprefix\url{http://www.pnas.org/content/100/5/2197}

\bibitem{ylu}
Y.~M. Lu, M.~N. Do, A theory for sampling signals from a union of subspaces,
  IEEE Transactions on Signal Processing 56 (2008) 2334--2345.

\bibitem{hohn}
F.~Hohn, \href{https://books.google.gr/books?id=9XlFu4XQ6nUC}{Elementary Matrix
  Algebra}, Dover Books on Mathematics, Dover Publications, 2013.
\newline\urlprefix\url{https://books.google.gr/books?id=9XlFu4XQ6nUC}

\bibitem{recov}
M.~Zhao, M.~D. Kaba, R.~Vidal, D.~P. Robinson, E.~Mallada,
  \href{http://arxiv.org/abs/1803.09631}{Sparse recovery over graph incidence
  matrices: Polynomial time guarantees and location dependent performance},
  CoRR abs/1803.09631.
\newblock \href {http://arxiv.org/abs/1803.09631} {\path{arXiv:1803.09631}}.
\newline\urlprefix\url{http://arxiv.org/abs/1803.09631}

\bibitem{uos2}
T.~Blumensath, \href{https://doi.org/10.1109/TIT.2011.2146550}{Sampling and
  reconstructing signals from a union of linear subspaces}, {IEEE} Trans.
  Information Theory 57~(7) (2011) 4660--4671.
\newblock \href {https://doi.org/10.1109/TIT.2011.2146550}
  {\path{doi:10.1109/TIT.2011.2146550}}.
\newline\urlprefix\url{https://doi.org/10.1109/TIT.2011.2146550}

\bibitem{bar}
R.~G. Baraniuk, V.~Cevher, M.~F. Duarte, C.~Hegde, Model-based compressive
  sensing, IEEE Transactions on Information Theory 56~(4) (2010) 1982--2001.
\newblock \href {https://doi.org/10.1109/TIT.2010.2040894}
  {\path{doi:10.1109/TIT.2010.2040894}}.

\bibitem{demko}
S.~Demko, W.~F. Moss, P.~W. Smith,
  \href{http://www.jstor.org/stable/2008290}{Decay rates for inverses of band
  matrices}, Mathematics of Computation 43~(168) (1984) 491--499.
\newline\urlprefix\url{http://www.jstor.org/stable/2008290}

\end{thebibliography}
%\bibitem{}

%\end{thebibliography}
\end{document}